\DeclareFontFamily{OT1}{pzc}{}
\DeclareFontShape{OT1}{pzc}{m}{it}{<-> s * [1.15] pzcmi7t}{}
\DeclareMathAlphabet{\mathpzc}{OT1}{pzc}{m}{it}
\newcommand{\N}{\mathbb{N}}
\newcommand{\R}{\mathbb{R}}
\newcommand{\fA}{\mathpzc{A}}
\newcommand{\fD}{\mathpzc{D}} 
\newcommand{\fE}{\mathpzc{E}}
\newcommand{\fG}{\mathpzc{G}}
\newcommand{\fJ}{\mathpzc{J}}
\newcommand{\fL}{\mathpzc{L}}
\newcommand{\fM}{\mathpzc{M}}
\newcommand{\fV}{\mathpzc{V}}
\newcommand{\fd}{\mathpzc{d}}
\newcommand{\fm}{\mathpzc{m}}
\newcommand{\fn}{\mathpzc{n}}
\newcommand{\fs}{\mathpzc{s}}
\newcommand{\mI}{\mathcal{I}}
\newcommand{\mM}{\mathcal{M}}
\newcommand{\mO}{\mathcal{O}}
\newcommand{\mP}{\mathcal{P}}
\newcommand{\Tk}{T_{k\text{-}\mathrm{means}}}
\newcommand{\poly}{\mathop{poly}}
\newcommand{\Ln}{L_{\n}}
\newcommand{\Lnt}{\tilde{L}_{\n}}
\newcommand{\one}{\mathbf{1}}
\renewcommand{\vec}[1]{\mathbf{#1}}
\DeclareMathOperator*{\argmin}{arg\,min}
\DeclareMathOperator{\cut}{cut}
\DeclareMathOperator{\RC}{RatioCut}
\DeclareMathOperator{\vol}{vol}
\DeclareMathOperator{\n}{norm}
\newtheorem{theorem}{Theorem}
\newtheorem{lemma}{Lemma}
\newtheorem{corollary}{Corollary}
\newcommand\numberthis{\addtocounter{equation}{1}\tag{\theequation}}
\begin{document}

\setstretch{1}

\title{Quantum Motif Clustering}

\author[1]{Chris Cade}

\author[1]{Farrokh Labib}

\author[1]{Ido Niesen}

\affil[1]{QuSoft \& CWI, Amsterdam, the Netherlands.}
\date{}

\captionsetup{width=400pt}

\maketitle

\begin{abstract}
\noindent We present three quantum algorithms for clustering graphs based on higher-order patterns, known as motif clustering. One uses a straightforward application of Grover search, the other two make use of quantum approximate counting, and all of them obtain square-root like speedups over the fastest classical algorithms in various settings. In order to use approximate counting in the context of clustering, we show that for general weighted graphs the performance of spectral clustering is mostly left unchanged by the presence of constant (relative) errors on the edge weights. Finally, we extend the original analysis of motif clustering in order to better understand the role of multiple `anchor nodes' in motifs and the types of relationships that this method of clustering can and cannot capture.
\end{abstract}

\section{Introduction}

The study of complex networks has impacted many fields of science~\cite{strogatz2001exploring}, including biology~\cite{albert2005scale, shen2002network}, sociology~\cite{wasserman1994social}, neuroscience~\cite{bassett2017network}, and finance~\cite{acemoglu2015systemic, gai2010contagion}. In particular, it is commonplace to study the connectivity patterns of networks at the edge and vertex level in order to uncover important structures in the underlying data. One method that provides insight into the connectivity structure of a network is graph clustering, which entails finding groups of highly connected vertices in order to uncover underlying community structures. There are many efficient (heuristic) algorithms for graph clustering, including the theoretically well-motivated \emph{k-means spectral clustering}\footnote{See~\cite{luxburg2007tutorial} for a historical overview and list of references.}. Here, given an integer $k$, the eigenvectors corresponding to the smallest $k$ eigenvalues of the graph Laplacian are used as a feature set for a $k$-means clustering algorithm. It has been shown that in certain circumstances spectral clustering leads to the discovery of \emph{optimal} graph partitions~\cite{peng2015partitioning}. 

Recently, it is becoming popular to study more sophisticated connectivity patterns. This can be done in the context of, for example, hypergraphs\footnote{\href{https://www.quantamagazine.org/how-big-data-carried-graph-theory-into-new-dimensions-20210819/}{www.quantamagazine.org/how-big-data-carried-graph-theory-into-new-dimensions-20210819/}} that can express multiple-vertex relationships, or via small subgraphs, also known as \emph{motifs}, which can be used to study higher-order connectivity patterns between vertices. The latter has become a useful tool for providing deeper insight into a network's function and structure, although often the detection of these motifs remains computationally challenging~\cite{masoudi2012building}.

In \cite{benson16}, Benson et al.~propose an algorithm for clustering a graph based on its \emph{motif connectivity}. Their algorithm, which makes use of spectral clustering and therefore comes with theoretical guarantees~\cite{peng2015partitioning}, can be used to uncover collections of vertices that are highly connected via particular motifs, rather than just by edges as in the ordinary case. The authors apply their technique to the well known \emph{C.~elegans} neuronal network and to a transportation reachability network, with a particular motif used for each case, and find that the motif clustering reveals network organisation not made apparent by clustering through edge-based connectivity alone. 

\

As a motivating (toy) example, consider the graph shown in the middle of Figure~\ref{fig:motif_clustering}, which could for example represent a financial transaction network: vertices corresponding to financial entities, and unweighted edges between them denoting transactions (for example, of a value beyond a particular threshold). Consider also the motif shown at the top of Figure~\ref{fig:motif_clustering}, which represents the situation wherein two entities, given by the \emph{anchor nodes} shown in dark red, trade indirectly through three intermediate entities according with edge pattern of the motif. Suppose that we are interested in clustering the nodes of the graph into groups that don't trade with each other directly, but instead do so only by means of intermediate nodes in accordance to the structure given by the motif. The method of motif clustering achieves precisely such a clustering.

As shown by~\cite{benson16}, obtaining a motif clustering of the original graph can be done in two steps. First, we construct the \emph{motif graph}, displayed in the bottom of Figure~\ref{fig:motif_clustering}. This graph has the same vertex set as the original graph, but a different edge set: any instance of the motif in the original graph corresponds to an edge in the motif graph connecting the two anchor nodes of the motif instance in question. All edges of the motif graph have integer weights that correspond to the number of motif instances in the original graph that have the two edge endpoints as anchor nodes. In the second step we use $k$-means spectral clustering on the motif graph to obtain the required motif clustering of the original graph.


\begin{figure}[htb]
    \centering
    \includegraphics[scale=0.6]{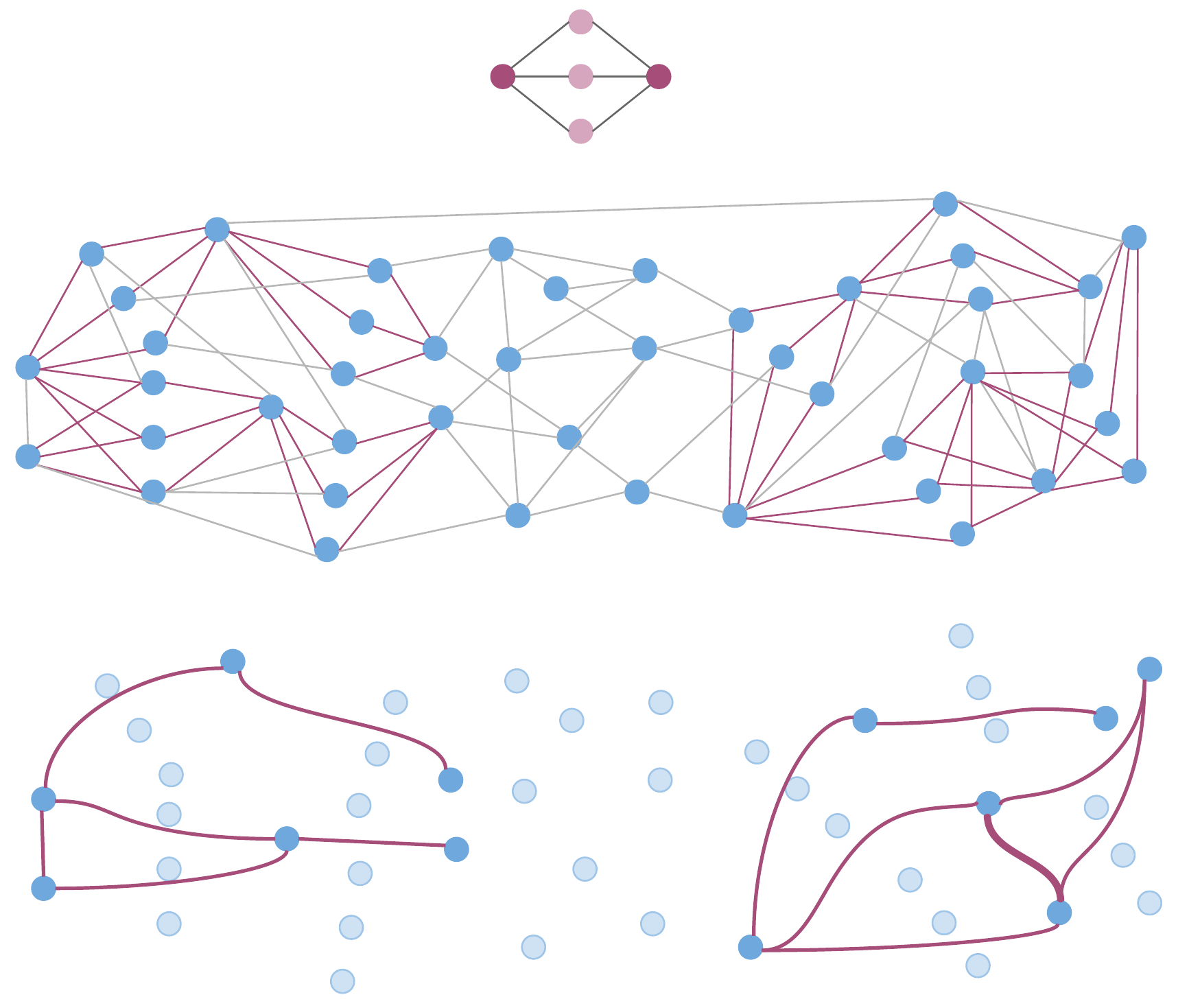}
    \caption{Example of motif clustering. \emph{Top}: The motif, with anchor nodes in dark red. \emph{Middle}: An example graph, instances of the motif are represented by red edges, other edges are grey. \emph{Bottom}: The motif graph. All edges have weight 1 except for a single weight-2 edge (bold). The motif clusters are clear: one on the left, one on the right.}
    \label{fig:motif_clustering}
\end{figure}

\ 

In this paper we present three quantum algorithms that can perform motif clustering faster than the classical algorithm presented in~\cite{benson16}. The majority of the quantum speedup comes from faster finding and/or counting of motifs in the graph, a task that is often computationally demanding. Our speedups are of the Grover variety: at most quadratic, and sometimes less, depending on the choice of motif and the sparsity of the graph. Our reason for presenting several quantum algorithms is that we have a choice between using Grover search or quantum approximate counting, as well as the option of constructing the motif graph in its entirety, or only giving query access to it. Depending on the input graph, one will be favorable over the other. We should add that, as argued in~\cite{babbush2021focus}, for quantum algorithms based on Grover-type speedups to become practically interesting, substantial improvement in qubit counts, physical gate errors and/or error correction schemes are required.

We also prove some technical lemmas related to performing spectral clustering in the presence of errors on the weights of the edges of a graph -- this is the case in our application when the number of motifs connecting two vertices is estimated using quantum approximate counting -- which might be of independent interest. Along the way, we give a simple `no-go' argument to show that spectral clustering with errors on \emph{normalized} Laplacians does not come with the same guarantees as in the unnormalized case, but nevertheless, numerical experiments suggest that it still performs well in practice. An interesting open question is whether we can turn this empirical observation into a theoretical one. 

Finally, in the Appendix we discuss the role of anchor nodes in motifs, extending the analysis of Benson et al. More specifically, we argue that motif clustering should be used only for two-anchor node motifs, which express pairwise relationships between vertices. If, instead, we want to cluster using relationships between more than two vertices -- which is what motifs with more than two anchor nodes attempt to capture -- we should do so within the context of hypergraphs rather than that of motif clustering. 

\paragraph{Organisation}
After introducing some notation in Section~\ref{sec:prelims}, we begin by explaining the concept of motif clustering and discuss previous work in the (classical) literature in Section~\ref{sec:motif_clustering}. Following this, we summarise our main results in Section~\ref{sec:results}. Section~\ref{sec:classical-mc} describes a classical algorithm for motif clustering, and introduces the notation that we use throughout the rest of the paper. In Section~\ref{sec:quantum_algorithms} we introduce our quantum tools and use those to construct our three quantum algorithms: one based on Grover search, the other two on quantum approximate counting. Finally, in Section~\ref{sec:approx-adjacency-matrix} we consider the effect of quantum approximate counting, both analytically and numerically, on the guarantees that come with spectral clustering. In the Appendix we discuss in detail the role of anchor nodes in motifs.

\section{Preliminaries}
\label{sec:prelims}

Before we discuss the concept of motif clustering, we first introduce some notation. In addition, when comparing how well our quantum algorithms perform relative to their classical counterparts, we want to be able to talk about their run-times. In the section below, we make precise what we mean with run-time.

\subsection{Notation}
For an integer $k\geq 1$, we write $[k]:=\{1,\dots, k\}$. For a set $W$, we denote its size by $|W|$. We write $G=(V,E)$ for a directed graph with vertex set $V$ and edge set $E$, where $n = |V|$ denotes the number of vertices, and $m = |E|$ the number of edges, and assume a fixed ordering of the vertices in $V$ that allows for a natural identification of $V$ with $[n]$. 

For $v\in V$, we define $d_v$ to be the degree of $v$ and $d := \max_{v\in V}d_v$ the maximum degree of any vertex in the graph. We use $A$ to denote the adjacency matrix of the graph, and for each vertex $v$ assume that we know $d_v$ and that we have query access to the weighted adjacency list $J_v : [d_v] \rightarrow [n] \times \mathbb{R}_{\geq 0}$,  which is a function that assigns labels and weights to the neighbours of $v$. We will call such access `adjacency list access' to $G$. For a subset $W\subset V$, we write $\overline{W}$ for the complement, i.e. $\overline{W} = V\setminus W$. A $k$-\emph{partition} of $V$ is a collection of pairwise disjoint subsets $W_1,\dots, W_k\subset V$ such that $V=\cup_{i\in [k]}W_i$.

We will often consider the Laplacian $L$ of an $n$-vertex graph $G$, and its normalized equivalent, the normalized Laplacian $\tilde{L}$. Let $D$ be the diagonal $n \times n$ matrix of (weighted) vertex degrees in $G$, and $A$ the adjacency matrix. Then the Laplacian is defined as $L := D - A$, and the normalized Laplacian as $\tilde{L} := D^{-\frac12}LD^{-\frac12} = I - D^{-\frac12}AD^{-\frac12}$ (which is only defined for graphs for which every vertex has a positive degree).

Finally, given any $n \times n$ real symmetric matrix, such as the graph Laplacian, we assume that the eigenvalues $\lambda_1 \leq \lambda_2 \leq \cdots \leq \lambda_n$ are ordered by increasing value and denote by $\vec{v}_1, \vec{v}_2, \dots, \vec{v}_n$ the corresponding eigenvectors, which we assume to be normalized. In particular, when we mention the `first $k$ eigenvectors' of a matrix, we are referring to the eigenvectors corresponding to the smallest $k$ eigenvalues. For graph Laplacians, which are positive semi-definite, the smallest eigenvalues will be those closest to zero.

\subsection{Query and time complexity}
All our quantum algorithms assume coherent access to the input graph in the form of quantum queries to the adjacency lists. More explicitly, given the maps $\{J_v: v\in V\}$ introduced above, coherent access to the adjacency lists means that we have access to the following unitary
\begin{equation}
    \ket{v}\ket{i} \ket{0}  \mapsto \ket{v}\ket{i} \ket{J_v(i)}\, ,
    \label{eq:QROM}
\end{equation}
and its inverse, where $v \in V$, $i\in [d]$ and $\ket{J_v(i)}$ contains two registers, one for the label of $i$-th neighbor of $v$, and one for the weight of the edge. When we talk about the \emph{query complexity} of an algorithm, we mean the number of times the algorithm applies the unitary in Eq.~\eqref{eq:QROM}. If the adjacency lists are sorted (according to some ordering of the vertices in $V$), then the only type of access to the input graph our algorithms require is this\footnote{Except when we also make use of the quantum graph sparsification algorithm by Apers and de Wolf~\cite{apers20}, which requires QRAM -- see Section~\ref{sec:results} for details.}.

If we are not provided with coherent access to the adjacency lists, or they are not sorted, then we must provide our own (perhaps sorted) coherent access, which will require classically writing to a QRAM, using at most $O(nd)$ operations. We note that since the run-times of our quantum algorithms are larger than $O(nd)$, our speedups persist even if we pay this cost up front. 

Finally, by the \emph{run-time} of a quantum algorithm we mean the total number of elementary gates, QRAM writes, and queries made by the algorithm. Our definition of run-time is the same as that of Apers and de Wolf~\cite{apers20}.


\section{Motif clustering}\label{sec:motif_clustering}

The idea behind motif clustering is to partition the vertices of a graph into several clusters based on a higher-order structural pattern, called a \emph{motif}. The partitions obtained through motif clustering should be such that any two vertices within a particular cluster are part of relatively many connected occurrences of the motif in the graph, whereas two vertices in different clusters should participate in relatively few connected motif occurrences. This statement will be made precise below. An example of motif clustering for a particular motif is shown in Figure~\ref{fig:motif_clustering}.

In this section, which is based on the content of Benson et al.~\cite{benson16}, we set the stage by introducing the reader to the necessary concepts and definitions that we use throughout the paper.

\subsection{Graph motifs}
\label{sec:graph_motifs}

A motif $M = (V_M, E_M, V_A)$ of size $s$ is a connected, unweighted graph with $s$-sized vertex set $V_M$, and edge set $E_M$. Throughout this paper we assume that $s$ is a constant. The motif comes with a set of \emph{anchor nodes} $V_A \subseteq V_M$, which will become relevant when we discuss motif cuts. 

Given a particular motif $M$ and an unweighted graph $G=(V,E)$, we will be interested in occurrences of the motif $M$ in $G$, which can be \emph{functional} or \emph{structural}~\cite{sporns2004motifs}. Formally, a \emph{motif assignment} is an injective map $\iota: V_M \rightarrow V$. For functional motifs, we require that  $(\iota(v), \iota(w)) \in E$ if $(v, w) \in E_M$ for every $v,w \in V_M$. That is, any two vertices in $\iota(V_M)$ should have an edge in $G$ whenever the corresponding vertices in the motif have one, but there can be additional edges in $G$ not present in the motif itself. For structural motifs, we have that $(\iota(v), \iota(w)) \in E$ if \emph{and only if} $(v, w) \in E_M$ for all $v,w \in V_M$, and therefore the motif is graph-isomorphic to $\iota(V_M)$ (i.e.~both the edges \emph{and} non-edges coincide).

Because we are interested in the motif occurring in $G$ as a pattern irrespective of the actual vertex assignment given by the mapping $\iota$, we next define an equivalence relation on the set of motif assignments. Two motif assignments $\iota$ and $\iota'$ are considered equivalent if $\iota(V_M) = \iota'(V_M)$ as sets and also $\iota(V_A) = \iota'(V_A)$ as sets. A single equivalence class is called a \emph{motif instance}. We write $\mI = \mI(G,M)$ for the set of all motif instances of $M$ in $G$. Moreover, we say that a motif instance has a vertex $u\in V$ as an anchor node if, for any assignment $\iota$ in the equivalence class of the instance, $u \in \iota(V_A)$. Note that this definition is well-defined, as it does not depend on the choice of assignment $\iota$. In Appendix~\ref{sec:motif_isomorphisms}, we further elaborate on when two motif assignments are equivalent, and how this equivalence is related to symmetries of the motif itself. 

Note that, motif clustering can be applied to both directed and undirected (but unweighted) graphs. When the graph is (un)directed, the motif should also be (un)directed for motif instances to exist within the graph.

Following~\cite{benson16}, we focus on structural motifs in this work. Note that a functional motif can be thought of as a combination of several structural motifs\footnote{A functional motif only specifies the edges that are present in the motif. For all other (unspecified) edges, we define a structural motif for every possible assignment of either edge or non-edge to the set of edges unspecified by the functional motif.}. Since the framework introduced by Benson et al.~\cite{benson16} can be extended to consider several motifs simultaneously -- see Appendix~\ref{app:multiple_motifs}, and the same extensions work for our framework, we also capture the case of functional motifs. 


\subsection{Motif cuts}
\label{sec:motif_cuts}

A common method for clustering graphs is to find minimal normalized cuts. This corresponds to finding a partition of the vertex set that minimizes the total weight of the cuts (edges connecting different partitions) whilst maximizing the volumes or sizes of the partitions. The number of partitions (often denoted $k$) is fixed in advance. Motif clustering works analogously, in the sense that it minimizes the number of \emph{motif cuts} whilst maximizing the \emph{motif volume} or partition size.


More formally, let $W\subset V$ be a subset of vertices in the graph. An ordinary graph cut with respect to $W$ and its complement $\bar{W}$ is given by the number of edges that have one endpoint in $W$ and the other in $\bar{W}$. Similarly, we \emph{could} define a motif cut with respect to $W$ and $\bar{W}$ to be the number of motif instances that have one or more vertices in $W$ and one or more vertices in $\bar{W}$. However, following~\cite{benson16}, we want to incorporate the idea that each individual motif instance signifies a mutual relationship between a specific subset of vertices of the motif, called \emph{anchor nodes}. As such, given a motif $M$, a \emph{motif cut} in $G$ with respect to $W$ and its complement $\bar{W}$, denoted by $\text{cut}_{(G,M)}(W)$, is defined to be the number of motif instances that have at least one anchor node in both $W$ and $\bar{W}$: 
\begin{equation}
    \text{cut}_{(G,M)}(W) := \left| \{ \iota \in \mI | (W \cap \iota(V_A) \neq \emptyset) \land  (\bar{W} \cap \iota(V_A) \neq \emptyset)\} \right|\,.
    \label{eq:motif-cond-partition}
\end{equation}
Moreover, analogous to the ordinary volume given by the sum of all degrees of vertices in a given set, define the \emph{motif volume} $\text{vol}_{(G,M)}(W)$ of $W$ to be the number of anchor nodes in motif instances that appear in $W$:
\begin{equation}
    \vol_{(G,M)}(W) := \sum_{\iota \in \mI}\left|W \cap \iota(V_A)\right|\,.
    \label{eq:motif-rcut-partition}
\end{equation}

Given the notions of motif cut and motif volume, the \emph{motif conductance} and the \emph{motif ratio cut} of the set $W$ are given by\footnote{Graph conductance is usually defined with $\text{min} \left(\text{vol}_{(G,M)}(W), \text{vol}_{(G,M)}(\bar{W}) \right)$ in the denominator. However, when considering more general partitions of the graph in possibly more than two sets, such as arbitrary $k$-partitions, we divide by $\text{vol}_{(G,M)}(W)$ instead: see definitions of Ncut in~\cite{luxburg2007tutorial} and the related $k$-way expansion constant in~\cite{peng2015partitioning}.}
\begin{equation}
        \phi_{(G,M)}(W) := \frac{\text{cut}_{(G,M)}(W)}{\text{vol}_{(G,M)}(W)} \qquad \text{and} \qquad \text{RatioCut}_{(G,M)}(W) := \frac{\text{cut}_{(G,M)}(W)}{|W|}
    \label{eq:motif-cond} 
\end{equation}
respectively, where we define $\phi_{(G,M)}(W) = 0$ and $\text{RatioCut}_{(G,M)}(W) = 0$ when $\text{cut}_{(G,M)}(W) = 0$. 

\ 

\noindent Given some integer $k \in \N$ chosen in advance, we can also define the conductance and ratio cut relative to a $k$-partition $W_1, \ldots, W_k$. The \emph{motif conductance} and \emph{motif ratio cut} of the partition $\{W_i\}_{i=1}^k$ of $V$ are given by
\begin{eqnarray*}
    \phi_{(G,M)}(W_1, \ldots, W_k) := \sum_{i=1}^k \phi_{(G,M)}(W_i) \qquad \text{and} \\
    \RC_{(G,M)}(W_1, \ldots, W_k) :=  \sum_{i=1}^k \RC_{(G,M)}(W_i)
\end{eqnarray*}
respectively.

The goal of motif spectral clustering, for a fixed $k \in \N$ chosen in advance, is to find a partition $\{W_i\}_{i=1}^k$ that minimizes $\phi_{(G,M)}(W_1, \ldots, W_k)$ or $\RC_{(G,M)}(W_1, \ldots, W_k)$: this will result in a partitioning of the graph into ($k$) clusters of vertices that are highly connected via the target motif, whilst very few motifs connect vertices in different clusters. It turns out that, for motifs with two or three anchor nodes, one can translate the two minimization problems above to the problems of ordinary conductance or ratio cut minimization of an auxiliary, weighted graph, called the \emph{motif graph}~\cite{benson16} which we introduce in the next section.

\subsection{The motif graph}
\label{sec:motif-graph}

Benson et al.~\cite{benson16} show that for motifs with two or three anchor nodes, minimizing $\phi_{(G,M)}(W_1, \ldots, W_k)$ is equivalent to minimizing the ordinary conductance on a weighted graph $\fG = (V, \fE, \fA)$ that can be constructed from the graph $G$, which we term the \emph{motif graph} of $G$ given motif $M$. The graph $\fG$ has the same vertex set\footnote{For practical details regarding entirely disconnected vertices aside, see Section~\ref{sec:disconnected_vertices}.} as $G$, but in general a different set of edges, which are now integer weighted. Also, whereas both $G$ and $M$ can be directed, $\fG$ is always an undirected graph. For notation we will use ordinary characters $G$, $V$, $E$, etc.~when referring to the original graph $G$, and calligraphic characters $\fG$, $\fE$, etc.~to refer to the motif graph $\fG$.

Given $G$ and $M$, the edge set $\fE$ of $\fG$ is given by $\fE = \{(v,w) \in V \times V | \exists\, \iota \in \mI: \{v,w\} \subseteq \iota(V_A) \}$, i.e. two vertices $u,v \in V$ are connected by an edge in $\fG$ if they are both anchor nodes of a motif instance $\iota \in \mI$ of $G$. The motif weighted adjacency matrix $\fA$ of $\fG$ has integer coefficients given by
\[
    \fA_{uv} = \left|\{ \iota \in \mI \mid \{u,v\} \subseteq \iota(V_A) \}\right|\,,
\]
i.e. $\fA_{uv}$ is equal to the number of motif instances in $G$ that contain both $u$ and $v$ as anchor nodes. For $u\in V$, we define the \emph{motif degree} $\fd_u$ of $u$ to be the total number of edges in $\fG$ connected to $u$,
\[
    \fd_u := |\{v \in V: \fA_{uv} > 0 \}|,
\]
and the \emph{motif strength} $\fs_u$ to be the sum of all weights of edges connected to $u$:
\[
    \fs_u := \sum_{v\in V} \fA_{uv}.
\]

\noindent Given $W \subset V$, $W \neq \emptyset$ let  
\[
    \vol_{\fG}(W) := \sum_{u\in W} \fs_u
\]
be the volume of $W$ in $\fG$, and 
\[
    \cut_{\fG}(W) := \sum_{u\in W, v \in \bar{W}} \fA_{uv}
\]
the cut induced by $W$ in $\fG$. The conductance of and ratio cut of $W$ in $\fG$ are given by 
\[
    \phi_{\fG}(W) := \frac{\text{cut}_{\fG}(W)}{\vol_{\fG}(W)} \qquad \text{and} \qquad \RC_{\fG}(W) := \frac{\cut_{\fG}(W)}{|W|}\,,
\]
respectively. Finally, for a $k$-partition $W_1,\ldots, W_k$ of $V$ with $W_k \neq \emptyset \forall k$, we define
\[
    \phi_{\fG}(W_1, \ldots, W_k) := \sum_{i=1}^k \phi_{\fG} (W_i),
\]
and
\[
    \RC_{\fG}(W_1, \ldots, W_k) := \sum_{i=1}^k \RC_{\fG} (W_i)
\]
for the conductance and ratio cut of the partition in $\fG$.

\ 

\noindent Benson et al. prove the following results relating motif conductances and volumes in $G$ to ordinary conductances and volumes in $\fG$:

\begin{lemma}[\cite{benson16}]\label{lem:benson1}
Let $G = (V,E)$ be an unweighted graph, $M = (V_M, E_M, V_A)$ a motif with  $ |V_A| \geq 2$ anchor nodes\footnote{Note that one-anchor node motifs cannot be used for clustering, so the assumption that $|V_A| \geq 2$ is no restriction in practice.}, $\fG$ the motif graph constructed from $G$ and $M$, and $W \subset V$. Then
\[
    \vol_{(G,M)}(W) = \frac{1}{|V_A|-1} \vol_{\fG}(W).
\]
Moreover, if  $|V_A|\in\{2,3\}$, then
\[
    \cut_{(G,M)}(W) = c \, \cut_{\fG}(W),
\]
where the constant $c = 1$ if $|V_A| = 2$, and $c = \frac{1}{2}$ if $|V_A| = 3$.
\end{lemma}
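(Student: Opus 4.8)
The plan is to prove both identities by one elementary device. Since $\fA_{uv}$ is itself defined as a count over the set $\mI$ of motif instances, every quantity appearing in the statement is a finite sum indexed by instances $\iota \in \mI$ together with (tuples of) their anchor nodes; so I would rewrite each side as such a double sum and exchange the order of summation. The only non-routine inputs are that each assignment $\iota$ is injective, hence $|\iota(V_A)| = |V_A|$ for every instance, and the convention (implicit in the definition of $\fE$) that $\fG$ carries no self-loops, i.e. the sum defining $\fs_u$ runs effectively over $v \neq u$.

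For the volume identity I would start from $\vol_{\fG}(W) = \sum_{u \in W}\fs_u = \sum_{u \in W}\sum_{v \in V \setminus \{u\}}\fA_{uv}$, substitute $\fA_{uv} = |\{\iota \in \mI : \{u,v\} \subseteq \iota(V_A)\}|$, and interchange sums to obtain $\sum_{\iota \in \mI}|\{(u,v) \in W \times V : u \neq v,\ \{u,v\} \subseteq \iota(V_A)\}|$. For a fixed instance, this inner cardinality equals $(|V_A| - 1)\,|W \cap \iota(V_A)|$, since each of the $|W \cap \iota(V_A)|$ anchor nodes of $\iota$ lying in $W$ may be paired with any of the remaining $|V_A| - 1$ anchor nodes of $\iota$. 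Summing over $\iota$ and recognising $\sum_{\iota \in \mI}|W \cap \iota(V_A)| = \vol_{(G,M)}(W)$ gives $\vol_{\fG}(W) = (|V_A| - 1)\,\vol_{(G,M)}(W)$. This step uses only $|V_A| \geq 2$ and in fact holds for every $|V_A|$.

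For the cut identity I would likewise rewrite $\cut_{\fG}(W) = \sum_{u \in W, v \in \bar W}\fA_{uv} = \sum_{\iota \in \mI} a_\iota b_\iota$, where $a_\iota := |W \cap \iota(V_A)|$ and $b_\iota := |\bar W \cap \iota(V_A)|$, so that $a_\iota + b_\iota = |V_A|$. An instance $\iota$ is counted by $\cut_{(G,M)}(W)$ precisely when $a_\iota \geq 1$ and $b_\iota \geq 1$, so the claim reduces to showing that $a_\iota b_\iota$ is a nonzero constant on exactly those instances and zero otherwise, with $c$ the reciprocal of that constant. When $|V_A| = 2$, the only decomposition with both parts positive is $a_\iota = b_\iota = 1$, giving $a_\iota b_\iota = 1$, so $\cut_{\fG}(W) = \cut_{(G,M)}(W)$ and $c = 1$; when $|V_A| = 3$, the constraint $a_\iota + b_\iota = 3$ with both parts positive forces $\{a_\iota, b_\iota\} = \{1,2\}$ and hence $a_\iota b_\iota = 2$, so $\cut_{\fG}(W) = 2\,\cut_{(G,M)}(W)$ and $c = \frac{1}{2}$.

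These computations are routine; the point worth flagging — and the reason the cut statement is restricted to $|V_A| \in \{2,3\}$ — is that the product $a_\iota b_\iota$ over decompositions $a_\iota + b_\iota = |V_A|$ with both parts positive is constant only for $|V_A| \leq 3$ (already for $|V_A| = 4$ one has $1 \cdot 3 \neq 2 \cdot 2$), so no single constant can relate $\cut_{(G,M)}$ and $\cut_{\fG}$ beyond that range; this is a genuine obstruction rather than a defect of the argument. The only other place to take care is the self-loop convention invoked in the volume computation, which is exactly what produces the factor $|V_A| - 1$ rather than $|V_A|$.
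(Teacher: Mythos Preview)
Your argument is correct. The paper itself only sketches the $|V_A|=2$ cut identity (in its appendix) and otherwise defers to \cite{benson16}; its version packages the same count through the $\{-1,1\}$-valued indicator $x$ of $W$ and the Laplacian quadratic form, writing $\cut_{(G,M)}(W)=\tfrac14\sum_{\iota}(x_{\iota(a)}-x_{\iota(b)})^2=\tfrac14\,x^T\fL\,x=\cut_{\fG}(W)$. Your direct ordered-pair count $\cut_{\fG}(W)=\sum_{\iota}a_\iota b_\iota$ with $a_\iota+b_\iota=|V_A|$ is the same computation stripped of the algebraic wrapper; what you gain is a uniform treatment of both the volume and cut identities, plus a one-line explanation of why the cut relation fails for $|V_A|\geq 4$ (the product $a_\iota b_\iota$ is no longer constant over positive decompositions), which the paper only illustrates later by example. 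The self-loop convention you flag for the volume computation is indeed the point that produces $|V_A|-1$ rather than $|V_A|$.
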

The proof for $|V_A| = 2$ is not given in~\cite{benson16}, but it can be proven in exactly the same way as the proof for the $|V_A|=3$ case, see Appendix~\ref{app:motif_graph_cuts} for details. Why the second equation above only holds for motifs with two or three anchor nodes is discussed extensively in Appendix~\ref{sec:higher_order_motifs}.

For $k\in \N$, write $\mP_k(V)$ for the set of all $k$-partitions of $V$. The following corollary is immediate from Lemma~\ref{lem:benson1}.
\begin{corollary}[\cite{benson16}]
Let $G = (V,E)$ be an unweighted graph, $M = (V_M, E_M, V_A)$ a motif with $|V_A| \in \{2, 3\}$ anchor nodes and $\fG$ the motif graph constructed from $G$ and $M$. Then, for every subset $W \subset V$, we have
\[
    \phi_{(G,M)}(W) = \phi_{\fG} (W),
\]
and
\[
    \RC_{(G,M)}(W) = c_R \, \RC_{\fG}(W),
\]
where the constant $c_R = 1$ if $|V_A| = 2$, and $c_R = \frac{1}{2}$ if $|V_A| = 3$. Consequently, 
\[
    \argmin_{\{W_i\}_{i=1}^k \in \mP_k(V)}\phi_{(G,M)}(W_1, \ldots, W_k) =  \argmin_{\{W_i\}_{i=1}^k \in \mP_k(V)} \phi_{\fG}(W_1, \ldots, W_k),
\]
and
\[
    \argmin_{\{W_i\}_{i=1}^k \in \mP_k(V)}\RC_{(G,M)}(W_1, \ldots, W_k) =  \argmin_{\{W_i\}_{i=1}^k \in \mP_k(V)} \RC_{\fG}(W_1, \ldots, W_k).
\]
\end{corollary}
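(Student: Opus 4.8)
The plan is to substitute the two identities furnished by Lemma~\ref{lem:benson1} directly into the definitions of motif conductance and motif ratio cut, and then check that the resulting proportionality constants are as claimed. Concretely, for a subset $W \subset V$ with $\cut_{(G,M)}(W) > 0$, Lemma~\ref{lem:benson1} gives $\cut_{(G,M)}(W) = c\,\cut_{\fG}(W)$ and $\vol_{(G,M)}(W) = \frac{1}{|V_A|-1}\vol_{\fG}(W)$, so
\[
    \phi_{(G,M)}(W) = \frac{\cut_{(G,M)}(W)}{\vol_{(G,M)}(W)} = c\,(|V_A|-1)\,\phi_{\fG}(W),
\]
and one simply computes $c(|V_A|-1) = 1$ in both admissible cases: $c=1$, $|V_A|=2$ gives $1\cdot 1$, and $c=\tfrac12$, $|V_A|=3$ gives $\tfrac12\cdot 2$. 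Likewise $\RC_{(G,M)}(W) = \cut_{(G,M)}(W)/|W| = c\,\RC_{\fG}(W)$, so one reads off $c_R = c$.

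Next I would dispose of the degenerate case. When $\cut_{(G,M)}(W) = 0$, by convention $\phi_{(G,M)}(W) = \RC_{(G,M)}(W) = 0$; and since then $\cut_{\fG}(W) = \cut_{(G,M)}(W)/c = 0$ as well (using $c \neq 0$), the same convention forces $\phi_{\fG}(W) = \RC_{\fG}(W) = 0$, so the identities persist. One should also note in passing that $\cut_{\fG}(W) \le \vol_{\fG}(W)$, so a part of zero motif-volume automatically has zero motif-cut and no division by zero occurs; this is what lets us treat both objectives as defined over the same family $\mP_k(V)$ of $k$-partitions.

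The $k$-partition statements then follow by additivity: $\phi_{(G,M)}(W_1,\dots,W_k) = \sum_{i=1}^k \phi_{(G,M)}(W_i) = \sum_{i=1}^k \phi_{\fG}(W_i) = \phi_{\fG}(W_1,\dots,W_k)$, and similarly $\RC_{(G,M)}(W_1,\dots,W_k) = c_R\,\RC_{\fG}(W_1,\dots,W_k)$. Finally, since $\phi_{(G,M)}$ and $\phi_{\fG}$ are literally the same function on $\mP_k(V)$, they have the same minimizers; and since $\RC_{(G,M)}$ and $\RC_{\fG}$ differ only by the positive global factor $c_R$, multiplying the objective by $c_R$ leaves $\argmin$ unchanged. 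This gives the two claimed equalities of $\argmin$ sets.

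There is no real obstacle here: the corollary is a one-line consequence of Lemma~\ref{lem:benson1} together with the arithmetic identity $c(|V_A|-1) = 1$. The only points that warrant a moment's care are the bookkeeping around the $\cut = 0$ convention and confirming that both objectives are minimized over exactly the same domain, so that ``identical (resp.\ proportional) objective'' really does translate into ``identical set of minimizers''.
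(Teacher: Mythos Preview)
Your proposal is correct and is exactly the argument the paper intends: it states only that the corollary is ``immediate from Lemma~\ref{lem:benson1}'', and you have spelled out precisely those immediate substitutions, including the arithmetic check $c(|V_A|-1)=1$ and the observation $c_R=c$. Your handling of the $\cut=0$ convention and the passage to $k$-partitions via additivity is more careful than anything the paper writes, but entirely in line with its one-line justification.
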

\noindent In particular, for motifs with two or three anchor nodes, in order to find partitions of $V$ that have small motif conductance or small motif ratio cut in $G$, we can instead solve the equivalent problem of finding partitions of $V$ that have small ordinary conductance or ratio cut in $\fG$ respectively. 

In order to obtain a partition that (approximately) minimizes the conductance or the ratio cut of $\fG$,~\cite{benson16} uses $k$-means spectral clustering on the motif graph $\fG$. We will describe how to do this in detail in Section~\ref{sec:classical-mc}. Thereafter, we will discuss how we can improve the classical algorithm using quantum algorithmic methods in Section~\ref{sec:quantum_motif_clustering}. We begin by first stating our results in Section~\ref{sec:results}.

\section{Results}\label{sec:results}

Searching for a motif in a graph is essentially an unstructured search problem. As such, we can speed up the parts of the classical algorithm that construct the motif graph by applying either Grover search or quantum counting. Which approach will be faster will depend on the properties of the input graph, and also affect what type of clustering we can employ. Using Grover search we can construct the motif graph \textit{exactly}, and then apply spectral clustering to its unnormalized \textit{or} normalized Laplacian while having guarantees on its behaviour. Using quantum approximate counting, on the other hand, can be faster than using Grover search but only approximately constructs the motif graph. As we show in Section~\ref{sec:approx-adjacency-matrix}, in this case we only have guaranteed behaviour when spectral clustering is applied to the \textit{unnormalized} Laplacian. Moreover, rather than pre-compute the entire motif graph $\fG$, for instance by explicitly writing down the motif adjacency matrix or motif adjacency lists, in some cases it can be more efficient to provide query access to it via some subroutine.


Taking the above considerations into account, we present below three quantum algorithms for motif spectral clustering that give speedups over the best classical algorithm in various situations. The complexities of the (quantum and classical) algorithms considered in this work are dominated by the time required to compute the edges and weights of the motif graph. 
The first two of our quantum algorithms focus on constructing the entire motif graph before applying some classical or quantum algorithm for spectral clustering; the third provides query access to the motif graph's adjacency lists, and then uses a fast quantum spectral clustering algorithm based on quantum graph sparsification by Apers and de Wolf~\cite{apers20}.

\paragraph{Input graph}
Let $G = (V,E)$ be the input graph that we want to cluster according to some $s$-vertex motif $M$. Since $s=2$ corresponds to an edge, without loss of generality we can and will assume throughout the rest of the paper that $s \geq 3$. We write $n = |V|$ for the number of vertices of $G$ and $d$ for the maximum degree (which can be $n$). We assume that we have adjacency list access to $G$. In addition, we will only consider constant-sized motifs, meaning that $s$ is independent of $n$. 

For the first quantum algorithm presented below, the run-time depends on whether the adjacency lists of the input graph are sorted (according to some chosen ordering of all vertices in $V$) or not.\footnote{The assumption of sorted adjacency lists is natural in the context of, for example, transaction networks, which are initialized as empty lists, and kept sorted as new transactions are added to the graph.} For the other two, we can sort the adjacency lists of the input graph beforehand without affecting their (asymptotic) run-times, and then use the algorithms as described in Section~\ref{sec:quantum_algorithms}, which assume that access to the input graph is provided via sorted adjacency lists.

Our quantum algorithms require coherent access to the input graph. If, rather than coherent access, we are given classical access to the input graph, we will first need to load the input graph into QRAM in time $O(nd)$. In addition, sorting the adjacency lists takes time\footnote{The $\tilde{O}$ notation hides poly-logarithmic factors in $n$.} $\tilde{O}(nd)$. If we either have to sort, load to QRAM, or both, we say that we need to \emph{pre-process} the input graph. In Appendix~\ref{app:adjacency_lists} we discuss how pre-processing the input graph affects the run-times of our algorithms.

\subsection{Algorithms and their run-times}

Below we provide the complexities for constructing the motif graph and obtaining the $k$ eigenvectors corresponding to the smallest $k$ eigenvalues of the motif graph Laplacian, where the latter are then used as input to $k$-means clustering. $k$-means clustering itself is a heuristic algorithm, with an exponential upper bound to its run-time, but in practice is usually significantly faster than this. For `well-clusterable' graphs, the $k$-means part of the algorithm can be done in nearly-linear time~\cite{peng2015partitioning}. For simplicity we will denote the time it takes to run $k$-means by $\Tk$, and note that generally this will not be the most expensive part of the algorithms.

\paragraph{Classical}
The classical algorithm of~\cite{benson16}, which can also be used in combination with the spectral estimation algorithm of~\cite{spielman2014nearly}, takes time $\tilde{O}(nd^{s-1})$ to obtain the $k$-smallest eigenvectors of the (optionally normalized) motif Laplacian, adding up to a total run-time of 
\[
    \tilde{O}(nd^{s-1} + \Tk)\,
\]
for the entire $k$-means motif spectral clustering algorithm. This is essentially\footnote{Except for certain specific motifs that can be counted more efficiently in some settings -- see Section~\ref{sec:constructing_motif_graph} for some examples.} optimal for any algorithm that makes use of the motif Laplacian, since to construct the motif graph exactly one needs to have counted all motif instances in the graph, which can be as large as $nd^{s-1}$, and hence counting these classically requires $\Omega(nd^{s-1})$ queries to the input graph via standard lower bounds on the query complexity of counting.

\paragraph{Quantum via Grover search} 
Our first quantum algorithm uses Grover search plus classical subroutines to find all motif instances in the graph and compute the weights of the edges in the motif graph \emph{exactly}, before applying the classical spectral clustering based on the spectral estimation algorithm of~\cite{spielman2014nearly}. 

\begin{theorem}[Motif clustering via Grover search]\label{theo:first_result}
Given a graph $G$ with maximum degree $d$ and a motif $M$ of size $s$, there exist quantum algorithms for \emph{exact} motif clustering under the following conditions and with the following run-times:
\begin{enumerate}
    \item If we do not have coherent access to the input graph, then there is a quantum algorithm for motif clustering with expected run-time
    \begin{equation}
        \tilde{O}(nd + \sqrt{nd^{s-1}\fM} + \Tk)\, ,
        \label{eq:Grover-run-time-pre-process}
    \end{equation}
    where $\fM$ is the total number of motif instances in the graph $G$. 
    
    \item If we have coherent access to the input graph, and the adjacency lists are sorted, then there is an algorithm that takes time
    \[
        \tilde{O}(\sqrt{nd^{s-1}\fM} + \Tk)\, 
    \]
    in expectation.
    
    \item If the adjacency lists are not sorted, then they can be pre-sortedto yield a quantum algorithm with the same run-time as given in Eq.~\eqref{eq:Grover-run-time-pre-process}. Otherwise, there is a quantum algorithm with expected run-time 
    \[
        \tilde{O}(\sqrt{nd^{s}\fM} + \Tk)\, .
    \]
\end{enumerate}
\end{theorem}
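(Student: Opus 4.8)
By definition, computing a motif clustering means: build the motif graph $\fG = (V,\fE,\fA)$, then run $k$-means spectral clustering on $\fG$ (and, by the Corollary to Lemma~\ref{lem:benson1}, for $|V_A| \in \{2,3\}$ the resulting partition minimises the motif conductance / ratio cut of $G$). So the plan is to split the work into (i) constructing $\fG$ \emph{exactly} and (ii) running the classical spectral-clustering pipeline on it, and to observe that (ii) is never the bottleneck. Using the spectral-estimation routine of~\cite{spielman2014nearly}, step (ii) costs $\tilde O(|\fE|) + \Tk$; since each of the $\fM$ motif instances contributes at most $\binom{s}{2} = O(1)$ edges to $\fG$ we have $|\fE| = O(\fM) \le O(nd^{s-1})$, hence $|\fE| \le \sqrt{nd^{s-1}\fM}$, so step (ii) is absorbed into the claimed $\sqrt{nd^{s-1}\fM} + \Tk$ (the cost of reading/labelling the $n$ vertices being itself part of $\Tk$). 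Everything therefore reduces to step (i), and the three cases differ only in the access we have to $G$ while doing it.

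For step (i) I would enumerate candidate structural-motif assignments along a fixed rooted spanning tree $T$ of $M$: a candidate is a tuple $(v_1, i_2, \dots, i_s)$ with $v_1 \in [n]$, $i_j \in [d]$, read as ``$v_1$ is the image of the root, and the image of the $j$-th tree vertex is the $i_j$-th neighbour of its parent's image''. There are at most $nd^{s-1}$ candidates, and a candidate is a valid assignment iff its $s$ images are distinct, every non-tree edge of $M$ maps to an edge of $G$, and every non-edge of $M$ maps to a non-edge of $G$ --- a predicate checkable with $s-1$ queries to follow $T$ plus $O(s^2) = O(1)$ adjacency queries. In case~2 (coherent access, sorted lists) each adjacency query is an $O(\log d)$ binary search, so the predicate costs $\tilde O(1)$; feeding it to the standard Grover routine for finding all marked elements --- which returns all $t$ marked items among $N$ in $\tilde O(\sqrt{Nt})$ queries in expectation, using exponential search in $t$ since $\fM$ is not known in advance --- yields all valid assignments in expected $\tilde O(\sqrt{nd^{s-1}\fM})$ queries. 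Each motif instance corresponds to at most $|\mathrm{Aut}(M)| = O(1)$ (in $s$) valid assignments, so we either mark only the lexicographically canonical representative inside the oracle or de-duplicate afterwards, and a final $O(\fM)$ classical pass tallies the weights $\fA_{uv}$ and writes the adjacency lists of $\fG$. This gives case~2.

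The remaining cases only adjust the access model. In case~1, and in case~3 when we choose to pre-sort, we first (classically) sort the adjacency lists --- writing them to QRAM if we did not already have coherent access --- in time $\tilde O(nd)$, and then invoke the case-2 algorithm, giving the run-time in Eq.~\eqref{eq:Grover-run-time-pre-process} (the $nd$ term being needed because $\sqrt{nd^{s-1}\fM}$ can be smaller when $\fM$ is small). In case~3 without pre-processing the lists are unsorted, so each query ``is $w$ a neighbour of $u$?'' is answered by an \emph{inner} Grover search over an unsorted list of length $\le d$, costing $\tilde O(\sqrt d)$ (composing bounded-error Grover oracles in the standard way, with logarithmic amplification factors absorbed into $\tilde O$); the predicate now costs $\tilde O(\sqrt d)$ and the outer search costs $\tilde O(\sqrt d \cdot \sqrt{nd^{s-1}\fM}) = \tilde O(\sqrt{nd^{s}\fM})$ expected queries. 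Vertices of $G$ in no motif instance appear as isolated vertices of $\fG$ and are handled as in Section~\ref{sec:disconnected_vertices}, and since $\fG$ is built exactly, either its normalized or its unnormalized Laplacian may be used in step (ii).

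The main obstacle is the quantum bookkeeping rather than the reduction: making the ``find all marked elements'' bound genuinely $\tilde O(\sqrt{nd^{s-1}\fM})$ \emph{in expectation} with $\fM$ unknown (the exponential-search analysis and the interaction between guessed and true counts), and composing the nested/bounded-error Grover searches in the unsorted case so that exactly a $\sqrt d$ overhead --- and not more --- propagates. One must also confirm that counting motif \emph{instances} rather than assignments costs only an $O(1)$ (in $s$) factor, and that the classical graph-assembly and spectral-clustering steps stay below the $\sqrt{nd^{s-1}\fM}$ threshold in every regime of $\fM$.
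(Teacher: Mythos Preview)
Your proposal is correct and follows essentially the same approach as the paper: build the motif graph exactly by running the ``find all marked items'' Grover routine over the $O(nd^{s-1})$ tree-walk candidates rooted along a spanning tree of $M$, with edge-checks costing $O(\log d)$ in the sorted case and $\tilde O(\sqrt d)$ via an inner Grover search in the unsorted case, then feed the result to the Spielman--Teng spectral-clustering pipeline; the paper handles the instance-versus-assignment overcount exactly as you suggest (a constant $S_M$ depending only on $M$), and absorbs $|\fE| \le \fM \le \sqrt{nd^{s-1}\fM}$ just as you do. The obstacles you flag (unknown $\fM$, composing bounded-error oracles, motif symmetries) are the right ones and are dispatched by the standard lemmas of~\cite{boyer1998tight} and the motif-isomorphism count, as the paper does.
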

The run-times of Algorithms 1 and 2 above are analysed in Section~\ref{sec:motif_grover}. Given coherent access to the input graph, which of Algorithms 2 or 3 to use depends on $\fM$. The second is faster in case there are relatively few motif instances $\fM$ in total, i.e. when $\fM = o\left(\frac{n}{d^{s-2}}\right)$. If we lack any knowledge of a non-trivial bound on $\fM$, the sensible choice is to first sort all adjacency lists, since the algorithm so obtained is never slower than its classical counterpart.

In general, the best upper bound we can put on $\fM$ is $nd^{s-1}$, in which case the quantum algorithm runs in time $\tilde{O}(nd^{s-1})$ -- no better than classical. Hence, this algorithm provides a speedup whenever $\fM = o(nd^{s-1})$; we later show that for scale-free networks, which occur often in practice, this is indeed the case. The advantage of this algorithm is that by constructing the motif graph exactly, it can be used to perform spectral clustering using the eigenvectors of the ordinary as well as the \emph{normalized} Laplacian --- see Section~\ref{sec:classical-mc} below for a detailed discussion on the difference between clustering with the Laplacian or its normalized counterpart.

\paragraph{Quantum via approximate counting and classical spectral clustering} 
Our second quantum algorithm uses quantum approximate counting to estimate the weights of the motif graph, followed by a classical spectral clustering routine. To perform spectral clustering using the eigenvectors of the \emph{unnormalized} Laplacian, it is sufficient to approximate the entries of the motif adjacency matrix up to constant multiplicative error, and then use the spectral clustering algorithm based on the spectral estimation algorithm of~\cite{spielman2014nearly}. We can also perform spectral clustering using the \emph{normalized} Laplacian, but in this case we lack the theoretical guarantees present in the unnormalized case. However, our numerical simulations suggest that spectral clustering with the normalized Laplacian on the approximate motif graph \emph{does} actually work in practice -- see Section~\ref{sec:quantum_algorithms} for details. We prove the following result in Section~\ref{sec:motif_clustering_quantum_counting}.

\begin{theorem}[Motif clustering via quantum counting \& classical clustering]\label{theo:second_result}
Given a graph $G$ with maximum degree $d$ and a motif $M$ of size $s$, there exists a quantum algorithm for \textit{approximate} motif clustering that takes time 
\[
    \tilde{O}(nd^{l + \frac{s}{2} - 1} + \Tk)\,
\]
in expectation, where $l$ is the maximum distance between any two anchor nodes in the motif.
\end{theorem}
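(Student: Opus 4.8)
The plan is to estimate every entry $\fA_{uv}$ of the motif adjacency matrix up to constant multiplicative error using quantum approximate counting, assemble the resulting approximate motif graph $\tilde{\fG}$, run the classical spectral-estimation algorithm of~\cite{spielman2014nearly} on its unnormalized Laplacian to extract the $k$ smallest eigenvectors, and finish with $k$-means. That clustering $\tilde{\fG}$ rather than $\fG$ still yields a good partition is precisely the content of the (unnormalized-Laplacian) perturbation analysis of Section~\ref{sec:approx-adjacency-matrix}, so the only genuinely new work here is designing and costing the counting step.

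The first observation is that $\fA_{uv}\neq 0$ forces $d_G(u,v)\le l$: if $u$ and $v$ are both anchor nodes of an instance $\iota$, then $G[\iota(V_M)]$ is a subgraph of $G$ isomorphic to $M$, so $d_G(u,v)\le d_{G[\iota(V_M)]}(u,v)=d_M(\iota^{-1}(u),\iota^{-1}(v))\le l$. Hence only $O(nd^l)$ pairs $(u,v)$ are relevant, and they can be listed in time $\tilde O(nd^l)$ by a depth-$l$ breadth-first search from each vertex; this cost will turn out to be dominated.

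For a fixed pair $(u,v)$ with $d_G(u,v)\le l$ I would count with quantum approximate counting over the following search space. For each of the $O(s^2)=O(1)$ ordered pairs of distinct anchor nodes $(a,a')\in V_A\times V_A$, fix once and for all an ordering $x_1,\dots,x_s$ of $V_M$ with $x_1=a$ in which every $x_i$, $i\ge 2$, is $M$-adjacent to some earlier $x_j$ (possible since $M$ is connected); set $\iota(x_1)=u$, set $\iota(x_t)=v$ for the index $t$ with $x_t=a'$, and let each of the remaining $s-2$ vertices $x_i$ range over the at most $d$ neighbours in $G$ of the vertex already assigned to its chosen earlier $M$-neighbour. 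The resulting search space has size $O(s^2 d^{s-2})=O(d^{s-2})$, and an element is marked if the map $\iota$ it encodes is injective and realises the structural motif $M$ — a property checked with $O(s^2)$ further binary-search queries to the sorted adjacency lists. By the equivalence-class analysis (each motif instance corresponds to exactly $|\mathrm{Stab}_{\mathrm{Aut}(M)}(V_A)|\le s!$ assignments, a constant depending only on $M$), the number of marked elements equals $|\mathrm{Stab}_{\mathrm{Aut}(M)}(V_A)|\cdot\fA_{uv}$, so a constant-relative-error estimate of the former gives one of the latter; quantum approximate counting returns this (or certifies $\fA_{uv}=0$) using $O(\sqrt{d^{s-2}})=\tilde O(d^{s/2-1})$ queries in expectation, and $O(\log(nd))$ repetitions with a median make all $O(nd^l)$ estimates simultaneously correct with high probability, at a cost absorbed into $\tilde O$.

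Summing up: estimating all $\fA_{uv}$ costs $\tilde O(nd^l\cdot d^{s/2-1})=\tilde O(nd^{l+s/2-1})$ in expectation; $\tilde{\fG}$ has $\tilde O(nd^l)$ edges, so~\cite{spielman2014nearly} runs in time $\tilde O(nd^l)$ and $k$-means in $\Tk$; and any pre-processing (QRAM loading, sorting adjacency lists) costs $\tilde O(nd)$. Since $s\ge 3$ and $l\ge 1$ give $l+s/2-1\ge 3/2$, the term $\tilde O(nd^{l+s/2-1})$ dominates both $\tilde O(nd^l)$ and $\tilde O(nd)$, leaving the stated $\tilde O(nd^{l+s/2-1}+\Tk)$. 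I expect the main obstacle to be the third step: keeping the per-pair search space down to $d^{s-2}$ (rather than $d^{s-1}$) by exploiting connectivity of $M$, and verifying carefully that counting motif assignments in place of motif instances perturbs the answer by only a fixed constant. Everything else is either a black box — the Section~\ref{sec:approx-adjacency-matrix} analysis and Spielman's algorithm — or a routine union bound.
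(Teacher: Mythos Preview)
Your proposal is correct and follows essentially the same approach as the paper: enumerate the $O(nd^l)$ candidate pairs by $l$-hop BFS, estimate each $\fA_{uv}$ by quantum approximate counting over a size-$d^{s-2}$ search space obtained by pinning both anchor nodes and growing the remaining $s-2$ vertices along $M$-edges, divide by the motif-automorphism constant, and finish with Spielman--Teng plus $k$-means. The only cosmetic difference is that the paper parameterises the search space by splitting a spanning tree of $M$ into two subtrees $T_a,T_b$ rooted at the two anchors and performing separate tree walks from $u$ and $v$ (giving $d^{s_a-1}\cdot d^{s_b-1}=d^{s-2}$), whereas you use a single connected ordering started at one anchor with the other anchor's position pinned; both realise the same $d^{s-2}$ bound and the same over-count constant.
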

Hence, we obtain a speedup over the classical algorithm whenever $l < s/2$. As a simple example, consider the case of a triangle motif, so that $s=3$ and $l=1$, and where the input graph is dense $(d = \Theta(n))$. The classical run-time for motif clustering in this case is $O(n^3)$, but the quantum run-time is $\tilde{O}(n^{2.5})$. 

Note that, in contrast, constructing the motif graph approximately doesn't generally help us in the classical case: if we were to estimate the weights of the edges of the motif graph with constant relative error, then this would take time $\tilde{O}(nd^{l + s - 1})$, which already for $l=1$ is worse than even the exact version described above. Hence, using approximate counting only buys us something in the quantum case. 

\paragraph{Quantum via approximate counting and quantum spectral clustering} 
It is sometimes more efficient to provide query access to the approximate motif graph $\fG$ rather than to construct it explicitly beforehand. We show the following in  Section~\ref{sec:clustering_qcount_quantum_spectral_clustering}, by combining the quantum spectral clustering algorithm of Apers and de Wolf~\cite{apers20} with our algorithms for approximately constructing the motif graph via quantum counting
\begin{theorem}[Motif clustering via quantum counting \& quantum clustering]\label{theo:third_result}
Given a graph $G$ with maximum degree $d$ and a motif $M$ of size $s$, there exists a quantum algorithm for \textit{approximate} motif clustering that takes time
\[
    \tilde{O}\left( \sqrt{n^3 d^{s-2}} + \Tk\right)
\]
in expectation.
\end{theorem}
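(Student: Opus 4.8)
The plan is to run the quantum spectral sparsification algorithm of Apers and de Wolf~\cite{apers20} on the motif graph $\fG$, but instead of building $\fG$ explicitly we supply it with coherent query access to (an approximation of) $\fG$ that is computed on the fly by quantum approximate counting on $G$. Since $\fG$ has exactly $n$ vertices and at most $\binom{n}{2}$ edges, running~\cite{apers20} on $\fG$ with a constant sparsification parameter $\varepsilon$ makes $\tilde{O}(\sqrt{n^2\cdot n}) = \tilde{O}(n^{3/2})$ queries to $\fG$ and returns a classical $\varepsilon$-spectral sparsifier $\widetilde{\fG}$ of $\fG$ with only $\tilde{O}(n)$ edges; from $\widetilde{\fG}$ we compute the $k$ smallest eigenvectors of its unnormalized Laplacian classically in time $\tilde{O}(n)$ via~\cite{spielman2014nearly}, and finally run $k$-means in time $\Tk$. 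So the total cost is $\tilde{O}(n^{3/2}\cdot Q + n + \Tk)$, where $Q$ is the cost of a single query to $\fG$, and since $\sqrt{n^3 d^{s-2}}\geq n^{3/2}\geq n$ it remains only to show $Q = \tilde{O}(d^{(s-2)/2})$ and that approximate edge weights suffice.

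\emph{Implementing a query to $\fG$.} A query asks, for $v,u\in V$, for an estimate of $\fA_{vu}$, the number of motif instances of $M$ in $G$ having both $v$ and $u$ as anchor nodes (recall that only $|V_A|\in\{2,3\}$ is relevant, by the corollary to Lemma~\ref{lem:benson1}). Fix $v$ and $u$ at a pair of anchor positions of $M$; a motif assignment extending this is determined by placing the remaining $s-2$ vertices of $M$, and since $M$ is connected, processing a spanning tree of $M$ rooted at $v$ in BFS order shows that each such vertex has at most $d$ admissible images in $G$ (a neighbour of its already-placed tree-parent), so there are at most $d^{s-2}$ candidate assignments. Each candidate is checked for validity (all $\binom{s}{2}$ edge/non-edge conditions) and for being the canonical representative of its equivalence class — which removes the $O(1)$ over-counting coming from motif symmetries, so that valid canonical candidates biject with the instances counted by $\fA_{vu}$ — using $O(1)$ adjacency-list queries to $G$. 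Quantum approximate counting over this search space of size $d^{s-2}$ therefore yields an estimate $\widetilde{\fA}_{vu}$ within any desired constant relative error in time $\tilde{O}(\sqrt{d^{s-2}})$; amplifying the success probability to $1-1/\poly(n)$ by a median of $O(\log n)$ runs, and turning the result into a unitary using the coherent quantum counting subroutine from Section~\ref{sec:quantum_algorithms}, gives the query oracle that~\cite{apers20} calls in superposition. Any vertex strengths $\fs_v = \sum_u \fA_{vu}$ that~\cite{apers20} needs to set up its sampling are estimated in the same way over a search space of size $d^{s-1}$, at a one-off cost $\tilde{O}(n\,d^{(s-1)/2})$, which is dominated by $n^{3/2}d^{(s-2)/2}$ whenever $d\le n$. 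Substituting $Q = \tilde{O}(d^{(s-2)/2})$, and absorbing the $O(nd)$ cost of loading $G$ into QRAM — which~\cite{apers20} requires anyway and which is $\leq\sqrt{n^3 d^{s-2}}$ for $s\geq 3$ — gives the claimed run-time $\tilde{O}(\sqrt{n^3 d^{s-2}} + \Tk)$.

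\emph{Error composition.} We only claim a guarantee for the unnormalized Laplacian. By the analysis of Section~\ref{sec:approx-adjacency-matrix}, a constant multiplicative error on every entry of the motif adjacency matrix leaves the guarantees of unnormalized spectral clustering essentially intact; Apers--de Wolf's sparsifier introduces a further multiplicative $(1\pm\varepsilon)$ distortion of the Laplacian quadratic form. Choosing both the counting error and $\varepsilon$ to be suitably small constants, the graph whose Laplacian eigenvectors are ultimately fed to $k$-means is spectrally $(1\pm\varepsilon')$-close to $\fG$ for a constant $\varepsilon'$, which is what the spectral-clustering guarantee needs; since $|V_A|\in\{2,3\}$, the corollary to Lemma~\ref{lem:benson1} then transfers this back into a statement about the motif conductance and motif ratio cut of $G$.

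\emph{Main obstacle.} The delicate point is the quantum side of the first two paragraphs: building a \emph{coherent} approximate-counting oracle whose inevitable bounded error is controlled well enough — in operator norm against an exact weight oracle — that the $\tilde{O}(n^{3/2})$ superposition queries made by~\cite{apers20} accumulate only negligible error, and verifying that the precise form of graph access required by~\cite{apers20} can indeed be implemented at cost $\tilde{O}(d^{(s-2)/2})$ per access rather than, say, per edge of $\fG$, of which there can be $\Theta(n^2)$. Secondary bookkeeping is then needed to confirm that the constant relative error on the weights propagates cleanly through both the sparsification step and the clustering guarantee, and that the various expected-time subroutines compose into the stated expected run-time.
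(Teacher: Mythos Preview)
Your proposal is correct and follows essentially the same strategy as the paper: treat the motif graph $\fG$ as a complete graph on $n$ vertices (weight-zero non-edges), implement each edge-weight query by quantum approximate counting over the $d^{s-2}$ candidate completions of the motif at cost $\tilde{O}(\sqrt{d^{s-2}})$ (the paper packages this as Algorithm~\ref{algo:motif_count}/Lemma~\ref{lem:approx_adjacencymatrix}), and feed this oracle into the Apers--de Wolf quantum spectral-clustering routine, which then makes $\tilde{O}(\sqrt{n^2\cdot n})=\tilde{O}(n^{3/2})$ queries. Your extra bookkeeping about vertex strengths and the coherence of the counting oracle goes beyond what the paper spells out (the paper simply invokes Corollary~\ref{cor:qsc_apers_dewolf} with failure probability $\delta=O(1/n^{3/2})$ per query), but the core argument and the arithmetic $n^{3/2}\cdot\sqrt{d^{s-2}}=\sqrt{n^3 d^{s-2}}$ are identical.
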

This run-time is independent of whether we have to pre-process the input graph or not. Hence, whenever $d^l = \omega(\sqrt{n})$ (which is the case for, for example, dense graphs), this algorithm is more efficient than the algorithm of Theorem~\ref{theo:second_result} above which constructs the approximate motif graph explicitly. 

It should be noted that, if we choose to use this algorithm for motif clustering, we can, generally speaking, only cluster using the unnormalized Laplacian, because we lose the ability to filter out the vertices that become disconnected in the motif graph -- see Section~\ref{sec:disconnected_vertices} for a discussion on this point.

\paragraph{Summary}
In all of our quantum algorithms our speedups come primarily from faster computation of the weights in the motif graph, and in one also from the application of quantum spectral clustering via~\cite{apers20}. In the worst case the speedup over the classical algorithm is minimal, but for many natural families of graphs the speedup can be reasonably large (i.e. quadratic). In Table~\ref{tab:result_summary} we summarize the (expected) complexities of the classical and our three quantum algorithms for performing motif clustering on a general graph using an arbitrary motif. 

\begin{table}[h]
\centering
\begin{tabular}{|l|l|}
\hline
\textbf{Algorithm} & \textbf{Expected run-time} \\ \hline
\multicolumn{1}{|l|}{Classical} & $O(nd^{s-1} + \Tk)$  \\ \hline
\multicolumn{1}{|l|}{Quantum-Grover (pre-process). Theorem~\ref{theo:first_result}} & $\tilde{O}(nd+\sqrt{nd^{s-1}\mM} + \Tk)$  \\ \hline
\multicolumn{1}{|l|}{Quantum-Grover (no pre-process) Theorem~\ref{theo:first_result}} & $\tilde{O}(\sqrt{nd^{s}\mM} + \Tk)$ \\ \hline
\multicolumn{1}{|l|}{Quantum-Approximate + classical cluster. Theorem~\ref{theo:second_result}} & $\tilde{O}(nd^{\frac{s}{2}+l-1} + \Tk)$ \\ \hline 
\multicolumn{1}{|l|}{Quantum-Approximate + quantum cluster. Theorem~\ref{theo:third_result}} & $\tilde{O}(\sqrt{n^3 d^{s-2}} + \Tk)$  \\ \hline
\end{tabular}%
\caption{Expected run-times of the algorithms for motif clustering the input graph $\fG$, where the motif is of size $s$, $n$ is the number of vertices of $G$, $d$ is the maximum degree of $G$, and $\mM$ is the number of motif instances in $G$. If we are given coherent rather than classical access to the input graph, then the run-time of Quantum-Grover is $\tilde{O}(\sqrt{nd^{s-1}\mM} + \Tk)$ if the adjacency lists are sorted, or $\tilde{O}(\sqrt{nd^{s}\mM} + \Tk)$ if not, where $\Tk$ denotes the time it takes to run $k$-means.}
\label{tab:result_summary}
\end{table}

Furthermore, we consider the complexity for power-law graphs, the latter being a model of many naturally occurring graph families, such as social-networks and internet graphs~\cite{faloutsos2011power}. In particular, we find that, if we take the motif to be a clique with two anchor nodes, the speedup becomes more significant as the size of the motif grows. The corresponding run-time complexities are given in Section~\ref{sec:comparison_run-times}.

\subsection{Anchor nodes}
Our algorithms for constructing the motif graph work for motifs with an arbitrary number of anchor nodes\footnote{The algorithms based on quantum approximate counting assume the motif has two anchor nodes, but, as we show in Appendix~\ref{sec:motifs_more_than_two}, this is sufficient to be able to construct the motif graph for motifs with an arbitrary number of anchor nodes.}. However, as in the classical case, clustering by means of the motif graph $\fG$ in order to obtain a clustering that approximately minimizes motif conductance or motif ratio cut in the original graph $G$ only works for motifs with two or three anchor nodes. 

In fact, in Appendix~\ref{sec:higher_order_motifs}, we argue that motif clustering by means of the motif graph should only be used for motifs with two anchor nodes --- or weighted combinations thereof, to be made precise in Appendix~\ref{sec:higher_order_motifs}. The reason for this is that the motif graph, being a graph itself, only captures pairwise relationships between vertices, and a motif with two anchor nodes exactly expresses a pairwise relationship between its anchor nodes. 

Motifs with more than two anchor nodes attempt to capture relationships between more than two vertices, and such relationships should be described by a hypergraph instead. This statement seems incompatible with the fact that Benson et al.~perform motif clustering using the motif graph for three-anchor node motifs. However, as we show in Appendix~\ref{sec:higher_order_motifs}, clustering using a three-anchor node motif is equivalent to clustering with a specific weighted combination of two-anchor node motifs. This equivalence breaks down for motifs with more than three anchor nodes.

\section{Classical algorithms for motif spectral clustering}
\label{sec:classical-mc}

In this section we follow~\cite{benson16} and describe classical algorithms for finding a partition that approximately minimizes the motif conductance or motif ratio cut\footnote{For motifs of two or three anchor nodes; we will find two anchor nodes to be sufficient for our purpose, see Appendix~\ref{sec:higher_order_motifs}.}. The algorithms consist of the following two steps: given the graph $G$ and a motif $M$, first construct the motif graph $\fG$; second, perform $k$-means spectral clustering on $\fG$ using either the normalized (resp. unnormalized) Laplacian in order to find a partition with a low conductance (resp. ratio cut). An overview of the run-times of the motif spectral clustering algorithms discussed in this section is given in Table~\ref{tab:classical-runtimes}, where $n = |V|$ is the number of vertices and $d$ the maximum degree of the original graph $G$, $\fm = |\fE|$ is the number of edges of the motif graph $\fG$, $s = |V_M|$ is the size of the motif $M$, $k$ is the number of clusters, and $\epsilon$ is the relative accuracy with which the eigenvectors of the Laplacian are approximated.
\begin{table}[h]
\begin{center}
\begin{tabular}{c|c|c}
Algorithm & Specifics & Run-time \\
\hline
Construct $\fG$ & general & $\mO(n^s)$ \\
& bounded degree $d$ & $\tilde{\mO}(nd^{s-1})$  \\
\hline
Obtain eigenvectors $\fL$ or $\fL_{\n}$ & exact & $\mO(n^3)$ \\
& $\epsilon$-approximate & $\tilde{\mO}\left(\fm + \frac{kn}{\epsilon^2}\right)$ \\
\hline
Perform $k$-means clustering & -- & $\Tk$ \\
\end{tabular}
\end{center}
\caption{Run-times of algorithms used for classical motif spectral clustering.}
\label{tab:classical-runtimes}
\end{table}

Note that $\fm \leq nd^l \leq nd^{s-1}$, since any two anchor nodes of a given motif instance are in each others $l$-hop neighborhood, and trivially, also $\fm \leq n^2$. If we use the $\epsilon$-approximate eigenvectors to perform spectral clustering (for which we only require constant $\epsilon >0$, see Section~\ref{sec:k-means-spectral}, and $k$ is also constant), the time to perform the $\epsilon$-approximate $k$-means step is $\tilde{O}(nd^l)$, and therefore the construction of the motif graph becomes the bottleneck in the complexity of the entire computation -- assuming $k$-means runs in nearly-linear time.

\subsection{Constructing the motif graph}
\label{sec:constructing_motif_graph}

Let $G$ be an $n$-vertex, $m$-edge graph, and $M$ be a motif of size $s$. In order to construct the motif adjacency matrix $\fA$, we need to find all instances of $M$ in $G$. The most straightforward (and in general the optimal) way to find all motif instances is to simply consider all ${n \choose s}$ $s$-sized subsets of the vertex set $V$ and check whether they form motifs for every choice of anchor nodes in each subset. Checking if a given subset of $s$ vertices forms a motif instance requires $O(s^2) = O(1)$ checks, since $s$ is constant. Therefore, the entire process of finding all motif instances takes $O(n^s)$ time.

If we have adjacency list access to $G$ and know that it has maximum degree $d$, then we can more efficiently search for possible motif instances: since the motif is always taken to be a connected graph, every motif instance can be found by (i) picking an initial vertex $u \in V$, and (ii) growing the motif instance from $u$ by exploring the local neighbourhood of $u$ in order to find $s-1$ more vertices that might yield a match to the motif. Since each vertex has degree at most $d$, there are at most $d^{s-1}$ possible choices of $s$ vertices in the local neighbourhood of each vertex, and hence this process requires $\mO(nd^{s-1})$ time to check all connected $s$-tuples of nodes for motif instances. We describe a classical procedure for constructing these subsets in Section~\ref{sec:exploring_neighbourhood}.

\ 

\noindent The complexities presented above hold for general motifs. However, for certain motifs the time to find all instances can sometimes be faster -- for example, all triangles in a graph can be found using $\Theta(m^{1.5})$ queries~\cite{latapy2008main}; all induced and non-induced `position-aware' motifs of size at most 4 can be found using $\Theta(m^2)$ queries~\cite{marcus2010efficient}; and quadrangles can be found using $\Theta(m^{1.5})$ queries~\cite{chiba1985arboricity} -- see the appendix of~\cite{benson16} for more details.

\subsection{k-means spectral clustering}
\label{sec:k-means-spectral}

Given an integer $k\in \N$, and (say, adjacency list) access to $\fA$, we can next proceed to search for a partition $W_1, \ldots, W_k$ that minimizes either the motif ratio cut or the motif conductance of $G$ by minimizing the ordinary ratio cut or conductance in $\fG$ as described at the end of Section~\ref{sec:motif-graph}. Both tasks, which are NP-hard for worst-case instances~\cite{wagner1993between}, can be tackled using k-means spectral clustering\footnote{k-means spectral clustering solves a relaxed version of the NP-hard conductance or ratio cut minimization problem. It outputs clusters that are close to the optimal clusters for \emph{well-clustered} graphs~\cite{peng2015partitioning}.} (see~\cite{luxburg2007tutorial} and references therein), which finds partitions that approximately minimize either the ratio cut or the conductance.

Whether spectral clustering minimizes ratio cut or conductance depends on whether it is performed using the ordinary or the normalized Laplacian of the motif graph. Let $\fD$ be the diagonal weighted \emph{motif degree matrix} of $\fG$, with coefficients $\fD_{uu} =\fs_u $, where $\fs_u$ is the strength of vertex $u$, and define the \emph{motif Laplacian} $\fL$ by
\[
    \fL := \fD - \fA,
\]
and the \emph{normalized motif Laplacian} by
\[
    \fL_{\n} := \fD^{-\frac{1}{2}} \fL \fD^{-\frac{1}{2}} = I - \fD^{-\frac{1}{2}} \fA \fD^{-\frac{1}{2}}.
\]

\subsubsection{Minimizing ratio cut}

In order to find a partition with small ratio cut, we can perform spectral clustering using the unnormalized Laplacian $\fL$. This works as follows.
 \begin{enumerate}
    \item Compute the $k$ eigenvectors of $\fL$ corresponding to the $k$ smallest eigenvalues, and let $U$ be the $n \times k$ matrix containing the first $k$ eigenvectors as columns. 
    \item For $i \in [n]$, let $u_i$ be the $i$-th row of $U$. Each $k$-dimensional row vector $u_i \in \R^k$ can be thought of as a feature vector for the $i$-th vertex of $V$.
    \item Cluster the vertices of $V$ by performing $k$-means clustering on the $n$ feature vectors $\{u_i\}_{i=0}^{n-1} \subset \R^k$.
\end{enumerate}

\subsubsection{Minimizing conductance}
If, instead of ratio cut, we want to find a partition with low conductance, we can apply spectral clustering to the normalized Laplacian $\fL_{\n}$:
 
\begin{enumerate}
    \item Compute the first $k$ eigenvectors of $\fL_{\n}$ corresponding to the smallest $k$ eigenvalues, and let $\tilde{U}$ be the $n \times k$ matrix containing the first $k$ eigenvectors as columns. 
    \item Let $U$ be the matrix obtained by taking $\tilde{U}$, and renormalizing all the rows to 1, that is: 
    \[
        u_{ij} := \frac{\tilde{u}_{ij}}{\sqrt{\sum_{l=1}^{k} \tilde{u}_{il}^2}}.
    \]
    \item For $i \in [n]$, let $u_i$ be the $i$-th row of $U$. Each $k$-dimensional row vector $u_i \in \R^k$ can be thought of a feature vector for the $i$-th vertex of $V$.
    \item Cluster the vertices of $V$ by performing $k$-means clustering on the $n$ feature vectors $\{u_i\}_{i=0}^{n-1} \subset \R^k$.
\end{enumerate}
Note that, for $\fd$-regular graphs, $\vol_{\fG}(W) = \fd|W|$. As a consequence, for these graphs minimizing the conductance is equivalent to minimizing ratio cut.

\subsubsection{Disconnected vertices}
\label{sec:disconnected_vertices}

It is possible for certain vertices in $G$ to become entirely disconnected in $\fG$ because their motif degree is zero. If we want to use the normalized Laplacian for spectral clustering, then we first have to remove all such vertices since the normalized Laplacian is obtained by multiplying the original Laplacian by $D^{-\frac{1}{2}}$. 

Moreover, if we were to perform spectral $k$-means clustering (using the unnormalized Laplacian) on $\fG$ with $\fV = V$, the algorithm could just output several clusters containing a single disconnected vertex each and place the remaining vertices into one or more larger clusters to minimize ratio cut. From the perspective of the motif adjacency matrix, the clusters containing a single disconnected vertex are not very interesting. Hence, after the construction of $\fG$ we should remove all vertices $u \in V$ that have zero (motif) degree $\fd_u = 0$ and put each of them in their own size-one cluster. The remaining vertex set $\{u \in V: \fd_u > 0\}$ will then be the vertex set of $\fG$ on which we perform k-means spectral clustering. Note that this procedure yields a number of clusters that is equal to $k$ plus the number of vertices in $V$ that are no anchor node of any motif instance in $G$. 

In the remainder of this work, we will not emphasise this practical detail and simply write $V$ for the vertex set of $\fG$. Note that removing disconnected vertices does not affect our run-time upper bounds, since none of our algorithms run in sub-linear time (a single step of $k$-means takes time linear in $n$).

\subsubsection{Complexity}

For an $n$-vertex, $\fm$-edge motif graph $\fG$, it is possible to compute the eigenvectors of $\fL$ or $\fL_{\text{norm}}$ in $O(n^3)$ time via exact diagonalization. However, we can also use $\epsilon$-approximate spectral clustering to find an an $\epsilon$-approximation to the $k$ smallest eigenvalues $\lambda_1, \ldots, \lambda_k$ of the (normalized) Laplacian together with a set of orthonormal unit vectors $v_1, \ldots, v_k$ such that
\[
    v_i^T \fL_{(\text{norm})} v_i \leq (1 + \epsilon) \lambda_i \quad \text{for all} \quad 1 \leq i \leq k
\]
in time at most $\tilde{O}\left(\fm + \frac{k\fn}{\epsilon^2} \right)$~\cite{spielman2007spectral, spielman2014nearly, koutis2015faster}. This set of unit vectors approximates the subspace spanned by the $k$ smallest eigenvectors of $\fL$, and is suitable for performing spectral clustering, even for \emph{constant} $\epsilon > 0$~\cite{peng2015partitioning, apers20}. Finally, we use $k$-means, which takes time $\Tk$ adding up to a total run-time of $\tilde{O}(\fm+\Tk)$, since one step of k-means already takes time $nk^2$ and $\epsilon$ is constant.

As described in the references above, the method for finding approximate eigenvectors makes use of a graph sparsification algorithm which, given the graph $\fG$, constructs a spectral sparsifier $\fG_S$ of $\fG$, and then uses the inverse power method on the graph Laplacian $\fL_{S}$ corresponding to $\fG_S$. This method can also be used to construct approximate eigenvectors of the normalized Laplacian $\fL_{\n}$, by applying the inverse power method to $\fD^{-1/2} \fL_{S} \fD^{-1/2}$, where $\fD$ is the degree matrix of the unsparsified graph $\fG$~\cite{koutis2015faster}.

\section{Quantum motif clustering}\label{sec:quantum_algorithms}
\label{sec:quantum_motif_clustering}

In this section we present three quantum algorithms for motif spectral clustering, one using Grover search, and the other two using quantum approximate counting. All three algorithms consist of two steps: (i) construct the motif graph $\fG$, and (ii) perform spectral clustering on $\fG$. As discussed in Section~\ref{sec:results}, the bottleneck for motif clustering is in step (i), constructing the motif graph, and this is also where our contribution lies; for step (ii) we use either the spectral clustering algorithm based on the spectral estimation algorithm of~\cite{spielman2014nearly} or the algorithm for quantum spectral clustering of Apers and de Wolf~\cite{apers20}.

We begin by introducing the quantum tools that we make use of, followed by a description in Section~\ref{sec:quantum_prelims} of a classical subroutine for exploring the local neighbourhood of vertices in a graph according to a particular motif structure. In Section~\ref{sec:motif_grover} we discuss how Grover search can be applied to find all motif instances in order to do motif clustering, and in Section~\ref{sec:motif_clustering_quantum_counting} we use quantum approximate counting to construct an approximation to the motif graph, and discuss under what conditions this approximation can be used for motif clustering. We then compare the run-times of all approaches in Section~\ref{sec:comparison_run-times}. Subsequently, in Section~\ref{sec:approx-adjacency-matrix}, we provide details to justify the use of approximations in the context of spectral clustering.

\subsection{Preliminaries}
\label{sec:quantum_prelims}

We will find the following quantum subroutines useful. For each, we consider a Boolean function $f : [N] \rightarrow \{0,1\}$ on $N$ items, with $t = |\{i : f(x) = 1\}|$ the (unknown) number of `marked' items. We will assume that we have oracle access to $f$, i.e. a unitary $\mathcal{O}_f$ that acts as $\mathcal{O}_f \ket{x}\ket{0} = \ket{x}\ket{f(x)}$.  

\begin{lemma}[Grover search with an unknown number of marked items~\cite{boyer1998tight}]
\label{lem:grover}
There exists a quantum algorithm $\text{\bf Search}(\mathcal{O}_f)$ that, with probability at least $2/3$, finds and returns an index $i \in \{0,\dots,N-1\}$ such that $x_i = 1$ if one exists, and requires an expected number $O(\sqrt{N/t})$ queries to $\mathcal{O}_f$ and its inverse, and $O(\sqrt{N/t}\log(N))$ other elementary operations. If no such $x_i$ exists, the algorithm indicates this with certainty and requires $O(\sqrt{N})$ queries to $\mathcal{O}_f$ and its inverse, and $O(\sqrt{N}\log(N))$ other elementary operations.
\end{lemma}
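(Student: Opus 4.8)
The plan is to use the exponential-search algorithm of Boyer, Brassard, H\o yer and Tapp, which sidesteps not knowing $t$ by repeatedly guessing it. Fix a growth rate $\lambda\in(1,2)$ (say $\lambda=6/5$), and maintain a guess $m$ for $\sqrt{N/t}$, initialised to $m=1$. A single \emph{round} consists of: sampling $j$ uniformly from $\{0,1,\dots,\lceil m\rceil-1\}$; preparing the uniform superposition $\ket{\psi}:=N^{-1/2}\sum_{x\in[N]}\ket{x}$; applying $j$ iterations of the Grover operator $G:=(2\proj{\psi}-I)(I-2\sum_{x:f(x)=1}\proj{x})$, the first reflection costing one call each to $\mathcal{O}_f$ and $\mathcal{O}_f^{-1}$ and the second costing $O(\log N)$ elementary gates; measuring in the computational basis to obtain an index $i$; and querying $f(i)$. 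If $f(i)=1$ we output $i$ and halt; otherwise we set $m\leftarrow\min\{\lambda m,\sqrt{N}\}$ and begin another round, halting and reporting ``no marked item'' once the total number of queries exceeds $c\sqrt{N}$ for a suitable constant $c$. I would then establish correctness and the query bounds as follows.

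\textbf{Key estimate.} First I would recall the standard identity that, when $t\ge1$, measuring after $j$ Grover iterations on $\ket{\psi}$ returns a marked index with probability exactly $\sin^2\!\big((2j+1)\theta\big)$, where $\theta:=\arcsin\sqrt{t/N}$. Averaging over $j\in\{0,\dots,M-1\}$ and using $\sum_{j=0}^{M-1}\cos\!\big((4j+2)\theta\big)=\frac{\sin(4M\theta)}{2\sin(2\theta)}$ gives an average per-round success probability
\[
    P_M \;=\; \frac12 - \frac{\sin(4M\theta)}{4M\sin(2\theta)} \;\ge\; \frac12 - \frac{1}{4M\sin(2\theta)},
\]
so $P_M\ge\frac14$ as soon as $M\ge M^\ast:=\lceil 1/\sin(2\theta)\rceil$. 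For $t\le N/2$ we have $\cos\theta\ge1/\sqrt2$, hence $\sin(2\theta)\ge\sqrt{2}\sin\theta=\sqrt{2t/N}$ and $M^\ast=O(\sqrt{N/t})$; for $t>N/2$ the $j=0$ term of the average is plain random sampling, which by itself succeeds with probability $t/N>1/2$ per round, and since $\sqrt{N/t}=O(1)$ there too, $O(1)$ rounds suffice. So in all cases rounds start succeeding with constant probability once $m$ reaches the ``critical scale'' $M^\ast=O(\sqrt{N/t})$.

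\textbf{Correctness.} The algorithm only ever outputs an index $i$ after \emph{verifying} $f(i)=1$ by a direct query, so it never returns a non-marked index; in particular, if $t=0$ it necessarily exhausts its budget and reports ``no marked item'', which is correct with certainty. If $t\ge1$, then once $m\ge M^\ast$ each round independently succeeds with probability $\ge\frac14$; since $m$ grows geometrically up to the cap $\sqrt N\ge M^\ast$, a constant number of post-critical rounds take place before the budget runs out, and taking $c$ large enough (or repeating the whole procedure $O(1)$ times) pushes the overall success probability above $2/3$.

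\textbf{Complexity and main obstacle.} Round $r$ uses $O(m_r)$ queries and $O(m_r\log N)$ further gates, with $m_r=\min\{\lambda^{r-1},\sqrt N\}$ (the $\log N$ accounts for the diffusion operator and for sampling $j$). The algorithm certainly reaches round $r$ while $m_r<M^\ast$, and thereafter reaches round $r$ only if all previous post-critical rounds failed, an event of probability at most $(3/4)^{\#\{\text{post-critical rounds so far}\}}$. Hence, for $t\ge1$,
\[
    \E[\,\#\text{queries}\,] \;\le\; O\!\Big(\textstyle\sum_{m_r\le M^\ast} m_r\Big) + O\!\Big(M^\ast \textstyle\sum_{i\ge0}\big(\tfrac34\lambda\big)^{i}\Big) \;=\; O(M^\ast) \;=\; O(\sqrt{N/t}),
\]
the second series converging since $\tfrac34\lambda<1$ for $\lambda=6/5$; running the same computation with the cap $\sqrt N$ replacing $M^\ast$ yields the $O(\sqrt N)$ bound in the $t=0$ case, and multiplying through by $\log N$ gives the ``other operations'' bounds. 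The main obstacle is precisely this last piece of probabilistic bookkeeping: showing that the expected total number of Grover iterations, summed over a geometrically growing schedule that stops at a random time, collapses to $O(M^\ast)$. This works only because $\lambda$ is small enough that $\lambda$ times the post-critical per-round failure probability stays below $1$; the supporting trigonometric identity and the separate handling of the large-$t$ regime (where $\sin(2\theta)$ is small but random sampling already does the job) are routine but needed to make $M^\ast=O(\sqrt{N/t})$ hold uniformly in $t$.
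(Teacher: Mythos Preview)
The paper does not prove this lemma; it is stated as a known result and attributed to Boyer, Brassard, H\o yer and Tapp~\cite{boyer1998tight}. Your proposal correctly reconstructs the exponential-search algorithm and analysis from that reference: the averaged success probability $P_M = \tfrac12 - \tfrac{\sin(4M\theta)}{4M\sin(2\theta)}$, the critical scale $M^\ast = O(\sqrt{N/t})$, and the geometric bookkeeping hinging on $\lambda\cdot\tfrac34<1$ are exactly the ingredients of the original proof, so there is nothing to compare beyond noting that you have supplied what the paper only cites.
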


Using $\text{\bf Search}(\mathcal{O}_f)$, one can output \emph{all} $t$ marked items in time $\tilde{O}(\sqrt{Nt})$. More precisely, we have 
\begin{lemma}[Finding all marked items~\cite{boyer1998tight}]
\label{lem:find_all}
There exists a quantum algorithm $\text{\bf Find}(\mathcal{O}_f)$ that outputs, with constant probability, all $t$ marked items and makes an expected $O(\sqrt{N t})$ calls to $\mathcal{O}_f$ and its inverse, and $\tilde{O}(\sqrt{N t})$ other elementary operations. 
\end{lemma}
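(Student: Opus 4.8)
The plan is to build $\text{\bf Find}(\mathcal{O}_f)$ by repeatedly invoking the single-item routine $\text{\bf Search}$ of Lemma~\ref{lem:grover} against a \emph{residual} oracle that suppresses the marked items already discovered. Concretely, I would maintain a set $S$ of confirmed marked indices, initialized to $\emptyset$, and from $\mathcal{O}_f$ together with $S$ construct --- at only polylogarithmic gate overhead, e.g.\ by keeping $S$ as a sorted list and doing a coherent binary search --- an oracle $\mathcal{O}_{f,S}$ that marks $x$ iff $f(x)=1$ and $x\notin S$. The main loop then repeatedly runs $\text{\bf Search}(\mathcal{O}_{f,S})$; whenever it returns an index $i$, verify $f(i)=1$ with one extra query and, if so, insert $i$ into $S$ (a returned unmarked index is simply discarded and the loop retried); iterate until the stopping rule discussed below triggers, then output $S$.

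For the query count: when $|S|=j$ the residual oracle has $t-j$ marked items, so by Lemma~\ref{lem:grover} each call to $\text{\bf Search}(\mathcal{O}_{f,S})$ uses an expected $O(\sqrt{N/(t-j)})$ queries and returns a (necessarily new) marked item with probability at least $2/3$. Hence the expected number of $\text{\bf Search}$ calls needed to advance from $|S|=j$ to $|S|=j+1$ is $O(1)$, and the expected cost of discovering all $t$ items is
\[
    \sum_{j=0}^{t-1} O(1)\cdot O\!\left(\sqrt{\tfrac{N}{t-j}}\right) \;=\; O\!\left(\sqrt{N}\sum_{i=1}^{t} i^{-1/2}\right) \;=\; O(\sqrt{Nt}),
\]
using $\sum_{i=1}^{t} i^{-1/2}=\Theta(\sqrt{t})$. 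The accompanying elementary-operation count is $\tilde{O}(\sqrt{Nt})$, absorbing the $O(\log N)$ per-query overhead of Lemma~\ref{lem:grover} and of implementing $\mathcal{O}_{f,S}$.

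The delicate point --- and the main obstacle --- is the stopping rule and its effect on the success probability: $\text{\bf Search}$ reports ``no marked item'' with certainty when the residual oracle genuinely has none, but when items remain a call can still spuriously report this (or return an unmarked index). I would therefore not trust the first ``no marked item'' report: upon receiving one, re-run $\text{\bf Search}(\mathcal{O}_{f,S})$ up to $r=\Theta(\log t)$ further times, declaring ``done'' only if all of them agree and otherwise resuming the loop. A premature stop then requires $\Theta(\log t)$ consecutive failures of a search on an oracle with at least one marked item, an event of probability $t^{-\Omega(1)}$ at any fixed invocation; a standard concentration/union-bound argument over the (with high probability $O(t)$) invocations of this test keeps the overall failure probability below a constant. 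The extra cost of the safety net is dominated by the single genuine ``done'' test on the empty residual oracle, which runs $r$ searches of $O(\sqrt{N})$ queries each, for a total of $O(\sqrt{N}\log t)=O(\sqrt{Nt})$ queries since $\log t=O(\sqrt{t})$; thus the query bound $O(\sqrt{Nt})$ is preserved and the elementary-operation bound remains $\tilde{O}(\sqrt{Nt})$, which is the claim.
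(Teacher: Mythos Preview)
The paper does not actually prove this lemma: it is quoted as a known tool from~\cite{boyer1998tight}, prefaced only by the one-line remark that ``using $\text{\bf Search}(\mathcal{O}_f)$, one can output all $t$ marked items in time $\tilde{O}(\sqrt{Nt})$''. Your iterated-search argument with a residual oracle is precisely the standard derivation behind that citation, and the core expected-cost calculation $\sum_{j=0}^{t-1}O(\sqrt{N/(t-j)})=O(\sqrt{Nt})$ is correct.

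One small wrinkle worth tightening: you set the safety-net parameter to $r=\Theta(\log t)$, but $t$ is not known to the algorithm. The natural fix is to take $r=\Theta(\log(|S|+2))$ at the moment the test is invoked. Then the genuine ``done'' confirmation, occurring when $|S|=t$, costs $O(\sqrt{N}\log t)=O(\sqrt{Nt})$ queries as you argue; and a premature stop at level $|S|=j<t$ has probability at most $(1/3)^{\Theta(\log j)}\le (j+2)^{-c}$, which summed over $j$ (weighted by the $O(1)$ expected number of test invocations per level) still gives a constant total failure probability for suitable $c$. With that adjustment your argument goes through and matches what the cited reference establishes.
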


\begin{lemma}[Approximate Quantum Counting~\cite{brassard2002quantum}]
\label{lem:approx_counting}
    There exists a quantum algorithm $\text{\bf Approx-Count}(\mathcal{O}_f,\epsilon,\delta)$ that, with probability $\geq 1-\delta$, outputs a number $\tilde{t}$ such that
    \[
        |\tilde{t} - t| \leq \epsilon t
    \]
    using an expected number $O\left(\frac{1}{\epsilon}\sqrt{N/t} \log(1/\delta) \right)$ of calls to $\mathcal{O}_f$ and $\mathcal{O}_f^{-1}$ and $\tilde{O}\left(\frac{1}{\epsilon}\sqrt{N/t} \log(1/\delta) \right)$ other elementary operations. If $t=0$, then the algorithm outputs with certainty $\tilde{t}=t$ and calls $\mathcal{O}_f$ and its inverse $O(\sqrt{N})$ times, and uses $\tilde{O}(\sqrt{N})$ other elementary operations. 
\end{lemma}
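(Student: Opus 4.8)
This is the standard amplitude/counting estimation result of Brassard, Høyer, Mosca and Tapp, so the plan is to prove it by running quantum phase estimation on the Grover iteration operator. First I would set up $G = -\mathcal{A}\, S_0\, \mathcal{A}^{-1}\, S_f$, where $\mathcal{A}$ prepares the uniform superposition $\ket{\psi} := \mathcal{A}\ket{0} = \frac{1}{\sqrt{N}}\sum_{x\in[N]}\ket{x}$, $S_f$ is the reflection through the marked subspace (implemented with one call to $\mathcal{O}_f$ and one to $\mathcal{O}_f^{-1}$), and $S_0$ the reflection through $\ket{0}$. It is standard that $G$ leaves invariant the two-dimensional subspace spanned by the (normalised) marked and unmarked parts of $\ket{\psi}$, on which it acts as a rotation by angle $2\theta$ with $\sin^2\theta = t/N$; hence $G$ restricted to this subspace has eigenvalues $e^{\pm 2 i\theta}$, and $\ket{\psi}$ lies in the span of the two associated eigenvectors.

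Next I would apply phase estimation to $G$ on the state $\ket{\psi}$ using $M$ controlled applications of $G$, obtaining an estimate $\tilde\theta\in[0,\pi/2]$ of $\theta$ with $|\tilde\theta-\theta| = O(1/M)$ with probability bounded below by a constant (the usual phase-estimation analysis; the case $t=0$ is handled automatically, since then $\theta=0$ is representable exactly and is returned with certainty). Setting $\tilde t := N\sin^2\tilde\theta$ and using $t = N\sin^2\theta$, a mean-value estimate gives $|\tilde t - t| \le 2N\sin\theta\,|\tilde\theta-\theta| + N|\tilde\theta-\theta|^2 = O\!\big(\sqrt{tN}/M + N/M^2\big)$, so it suffices to take $M = \Theta\!\big(\frac{1}{\epsilon}\sqrt{N/t}\,\big)$ to guarantee $|\tilde t - t| \le \epsilon t$.

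The only real difficulty is that $t$ is unknown, so $M$ cannot be set directly; I expect this chicken-and-egg issue to be the main obstacle. I would resolve it with the usual exponential-search trick: run phase estimation with $M = 1, 2, 4, \dots$, stopping either once $M$ exceeds $\Theta(\sqrt{N})$ — in which case, if the running estimate of $t$ is still $0$, we certify $t=0$ at total cost $O(\sqrt{N})$, as claimed — or once the error term $\sqrt{tN}/M + N/M^2$ has fallen below a small constant multiple of the current estimate, yielding a constant-factor estimate $t'$ of $t$. By the geometric growth of $M$, the cumulative query cost through this stage is dominated by its last value, namely $O(\sqrt{N/t})$. A final round of phase estimation with $M = \Theta\!\big(\frac{1}{\epsilon}\sqrt{N/t'}\,\big) = \Theta\!\big(\frac{1}{\epsilon}\sqrt{N/t}\,\big)$ then returns $\tilde t$ with $|\tilde t - t|\le\epsilon t$ with constant probability, at query cost $O\!\big(\frac{1}{\epsilon}\sqrt{N/t}\,\big)$.

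Finally, to boost the constant success probability to $1-\delta$, I would repeat the whole procedure $O(\log(1/\delta))$ times and output the median of the returned estimates; a Chernoff bound shows that with probability $\ge 1-\delta$ more than half of the runs land within $\epsilon t$ of $t$, hence so does the median. This multiplies all query and gate counts by $O(\log(1/\delta))$, giving the stated $O\!\big(\frac{1}{\epsilon}\sqrt{N/t}\,\log(1/\delta)\big)$ calls to $\mathcal{O}_f$ and $\mathcal{O}_f^{-1}$ and $\tilde O\!\big(\frac{1}{\epsilon}\sqrt{N/t}\,\log(1/\delta)\big)$ other elementary operations, the extra $\tilde O$ overhead coming from the QFT on $O(\log M)$ qubits in each phase-estimation call and from evaluating $\sin^2$.
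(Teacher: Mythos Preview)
The paper does not give its own proof of this lemma: it is stated as a preliminary and attributed to Brassard, H{\o}yer, Mosca and Tapp~\cite{brassard2002quantum} without further argument. Your sketch is the standard proof from that reference --- phase estimation on the Grover operator to estimate $\theta$ with $\sin^2\theta=t/N$, exponential search to handle the unknown $t$, and median amplification to reach failure probability $\delta$ --- and is correct.
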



Finally, we will use a quantum algorithm for spectral $k$-means clustering from Apers and de Wolf~\cite{apers20}, which itself uses the (classical) algorithm of Spielman and Teng~\cite{spielman2014nearly} to find approximations to the first $k$ eigenvectors of a graph Laplacian obtained via quantum graph sparsification.
\begin{lemma}[Quantum spectral estimation~\cite{apers20}]
    Given adjacency list access to an $n$-vertex weighted graph $G$ with $m$ edges, there exists an $\tilde{O}(\sqrt{mn}/\epsilon + kn/\epsilon^2)$-time quantum algorithm that outputs, with high probability, an $\epsilon$-approximation of each of the $k$ smallest eigenvalues $\lambda_1, \ldots, \lambda_k$ of the graph Laplacian $L$, and a set of orthogonal unit vectors $v_1,\dots,v_k$ such that $v_l^T L v_l \leq (1+\epsilon)\lambda_l$ for all $1 \leq l \leq k$. 
\end{lemma}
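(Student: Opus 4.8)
The plan is to reduce the problem to two ingredients that can be treated almost independently: (i) a quantum graph sparsification routine that, given adjacency-list access to an $n$-vertex, $m$-edge weighted graph $G$, outputs (a classical description of) a spectral sparsifier $G_S$ with only $\tilde{O}(n/\epsilon^2)$ edges, all while making $\tilde{O}(\sqrt{mn}/\epsilon)$ queries and running in comparable time; and (ii) the classical near-linear-time Laplacian solver / spectral estimation routine of Spielman and Teng~\cite{spielman2014nearly} (see also~\cite{koutis2015faster}), which on a graph with $\tilde{O}(n/\epsilon^2)$ edges produces $\epsilon$-approximate eigenpairs for the $k$ smallest eigenvalues in time $\tilde{O}(n/\epsilon^2)$ (plus the $\tilde O(kn/\epsilon^2)$ cost of running the inverse power method $k$ times to extract the eigenvectors). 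Composing (i) and (ii) gives the claimed $\tilde{O}(\sqrt{mn}/\epsilon + kn/\epsilon^2)$ run-time.

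For step (i), the natural route is to recall that a spectral sparsifier can be built by sampling each edge with probability proportional to (an overestimate of) its effective resistance, keeping $\tilde{O}(n/\epsilon^2)$ edges in total. Apers and de Wolf obtain effective-resistance estimates by simulating a random walk / using a quantum speedup of the relevant sampling: the key quantum input is that $\tilde O(\sqrt{mn})$ queries suffice to produce a list of $\tilde O(n)$ edges sampled (approximately) according to a spanning-tree or effective-resistance distribution, which one then reweights to obtain an unbiased sparsifier. First I would cite their sparsification theorem as a black box, state precisely the guarantee $(1-\epsilon)L \preceq L_S \preceq (1+\epsilon)L$ on the Laplacian forms, and record that $G_S$ is output as an explicit (classical) sparse adjacency list of size $\tilde O(n/\epsilon^2)$, so that from this point on everything is classical.

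Next I would invoke the min-max (Courant–Fischer) characterisation of eigenvalues to transfer the spectral guarantee: because $(1-\epsilon)L \preceq L_S \preceq (1+\epsilon)L$, the $l$-th eigenvalue $\mu_l$ of $L_S$ satisfies $(1-\epsilon)\lambda_l \le \mu_l \le (1+\epsilon)\lambda_l$, and — crucially for the eigenvector guarantee — any unit vector $v$ with $v^T L_S v \le (1+\epsilon')\mu_l$ automatically satisfies $v^T L v \le \frac{1+\epsilon'}{1-\epsilon}\mu_l \le \frac{(1+\epsilon')(1+\epsilon)}{1-\epsilon}\lambda_l$. Running the Spielman–Teng spectral estimation routine on $L_S$ with accuracy parameter $\epsilon'=\Theta(\epsilon)$ therefore yields orthonormal $v_1,\dots,v_k$ with $v_l^T L v_l \le (1+O(\epsilon))\lambda_l$; rescaling $\epsilon$ by a constant at the outset absorbs the $O(\cdot)$. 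The eigenvalue estimates come out as $\tilde\lambda_l := v_l^T L v_l$ (or directly from the estimates on $L_S$), which are $\epsilon$-approximations by the same sandwiching. Finally I would collect the probabilities: the sparsifier is correct with high probability, the classical solver is deterministic (or boostable to high probability), so a union bound gives the overall success claim.

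The main obstacle is entirely in step (i): reproving (rather than citing) the quantum sparsification bound $\tilde O(\sqrt{mn}/\epsilon)$ would require the effective-resistance sampling machinery and its quantum speedup, which is the technical heart of~\cite{apers20}. Since the present paper only uses this lemma as a tool, the honest and correct approach is to cite~\cite{apers20} for step (i) and~\cite{spielman2014nearly, koutis2015faster} for step (ii), and to supply only the short eigenvalue/eigenvector transfer argument above as the ``glue'' — which is the part genuinely worth writing down. One subtlety worth flagging explicitly is orthogonality of the output vectors: the $v_l$ produced for $L_S$ are orthonormal, and orthonormality is basis-independent, so they remain orthonormal as vectors in $\mathbb{R}^n$ regardless of which Laplacian's Rayleigh quotient we evaluate — hence no Gram–Schmidt post-processing (which could spoil the Rayleigh-quotient bound) is needed.
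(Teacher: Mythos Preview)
The paper does not prove this lemma at all: it is stated as a black-box citation of Apers and de Wolf~\cite{apers20}, with no accompanying argument. Your proposal correctly recognises this --- you explicitly say the honest approach is to cite~\cite{apers20} for the sparsification and~\cite{spielman2014nearly, koutis2015faster} for the classical spectral estimation --- and that is exactly what the paper does (indeed, the paper does even less, omitting the Courant--Fischer ``glue'' you supply). Your sketch of the underlying mechanism (quantum sparsification to a graph with $\tilde O(n/\epsilon^2)$ edges satisfying $(1-\epsilon)L \preceq L_S \preceq (1+\epsilon)L$, followed by classical inverse-power-method eigenvector extraction, with eigenvalue/eigenvector guarantees transferred via the sandwich inequality) is an accurate summary of the Apers--de Wolf argument, so there is no gap.
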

\noindent It turns out that choosing $\epsilon$ to be constant is already enough to perform spectral clustering~\cite{apers20}, and hence Apers and de Wolf note that 
\begin{corollary}[Quantum spectral clustering~\cite{apers20}]\label{cor:qsc_apers_dewolf}
    There exists a quantum algorithm that, given adjacency list access to an $n$-vertex weighted graph with $m$ edges, performs spectral $k$-means clustering on the graph in time $\tilde{O}(\sqrt{mn} + \Tk)$. 
\end{corollary}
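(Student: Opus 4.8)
The plan is to obtain the corollary by simply instantiating the preceding quantum spectral estimation lemma at a \emph{constant} precision and then running $k$-means on its output. First I would invoke that lemma with some fixed constant $\epsilon$ (say $\epsilon = 1/2$; any absolute constant will do), which with high probability returns orthonormal vectors $v_1,\dots,v_k$ satisfying $v_l^T L v_l \le (1+\epsilon)\lambda_l$ for all $l \le k$, in time $\tilde{O}(\sqrt{mn}/\epsilon + kn/\epsilon^2)$. Since $\epsilon$ is now a constant this is $\tilde{O}(\sqrt{mn} + kn)$, and because $k$ is constant while a single $k$-means iteration already costs $\Omega(n)$, the $kn$ term is absorbed into $\Tk$, leaving $\tilde{O}(\sqrt{mn})$ for the spectral-estimation stage.

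Next I would assemble the feature embedding exactly as in the classical procedure of Section~\ref{sec:k-means-spectral}: form the $n\times k$ matrix whose columns are $v_1,\dots,v_k$ (in the normalized-Laplacian variant, renormalize its rows to unit length, and recall that one first deletes the motif-disconnected vertices as in Section~\ref{sec:disconnected_vertices}), regard the $i$-th row $u_i \in \R^k$ as the feature vector of vertex $i$, and run $k$-means on $\{u_i\}_{i=1}^n$ in time $\Tk$. Composing the two stages gives total run-time $\tilde{O}(\sqrt{mn} + \Tk)$, which is the claim.

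The one point that is not bookkeeping — and the step I would lean on an external result for rather than reprove — is justifying that the constant-precision vectors $v_1,\dots,v_k$, as opposed to the genuine bottom-$k$ eigenvectors of $L$, still yield a good enough embedding for the spectral-clustering guarantees to hold. This robustness is precisely what Peng, Sun and Zanetti~\cite{peng2015partitioning} establish and what Apers and de Wolf~\cite{apers20} invoke: for a well-clustered graph the span of any family of sufficiently low-energy orthonormal vectors is close, in the relevant sense, to the span of the true bottom-$k$ eigenspace, so that $k$-means on the resulting rows recovers a partition of near-optimal ratio cut (resp.\ conductance). Once this fact is quoted, nothing further is required: the corollary is literally the composition ``run quantum spectral estimation at constant precision, then run $k$-means on the output rows.'' I do not expect any real obstacle here beyond citing the well-clusterability hypothesis under which the downstream $k$-means guarantee is stated.
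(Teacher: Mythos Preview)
Your proposal is correct and matches the paper's own treatment: the paper does not give a standalone proof but simply notes, immediately before stating the corollary, that a constant $\epsilon$ suffices for spectral clustering (citing~\cite{apers20} and, elsewhere,~\cite{peng2015partitioning}), so the corollary follows by plugging a fixed $\epsilon$ into the quantum spectral estimation lemma and appending the $k$-means step. Your absorption of the $kn/\epsilon^2$ term into $\Tk$ and your appeal to Peng--Sun--Zanetti for the robustness of constant-precision eigenvectors are exactly the justifications the paper relies on.
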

\noindent Classically, a fast algorithm for (approximate) $k$-means spectral clustering can be obtained by combining the spectral estimation routine of Spielman and Teng~\cite{spielman2014nearly} with constant error with a $k$-means clustering algorithm, yielding the following:
\begin{lemma}[Spectral clustering~\cite{spielman2014nearly}]\label{thm:sc_spielman}
    There exists a classical algorithm that, given adjacency list access to an $n$-vertex weighted graph with $m$ edges, performs spectral $k$-means clustering on the graph in time $\tilde{O}(m + \Tk)$. 
\end{lemma}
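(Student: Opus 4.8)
The plan is to assemble the algorithm from three off-the-shelf components and then bound the total run-time. First I would invoke the near-linear-time spectral sparsification algorithm of Spielman and Teng~\cite{spielman2014nearly} (together with its refinements~\cite{spielman2007spectral, koutis2015faster}): given adjacency list access to the weighted graph, it produces in time $\tilde{O}(m)$ a sparsifier $G_S$ with $\tilde{O}(n)$ edges whose Laplacian $L_S$ approximates $L$ to within a factor $(1\pm\epsilon)$ in the Loewner order. Running the inverse power method on $L_S$, using a near-linear-time Laplacian solver at each step, then yields in time $\tilde{O}(m + kn/\epsilon^2)$ a set of orthonormal vectors $v_1, \dots, v_k$ with $v_l^T L v_l \le (1+\epsilon)\lambda_l$ for all $l \le k$; applying the same routine to $\fD^{-1/2} L_S \fD^{-1/2}$, where $\fD$ is the degree matrix of the unsparsified graph, handles the normalized Laplacian as well.

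Next I would recall that, by the robustness analysis of Peng et al.~\cite{peng2015partitioning} (see also~\cite{apers20}), a \emph{constant} choice of $\epsilon$ already suffices for clustering purposes: the subspace spanned by $v_1, \dots, v_k$ is then close enough to the span of the true first $k$ eigenvectors that performing $k$-means on the $n$ rows of the $n \times k$ matrix $[v_1 \mid \cdots \mid v_k]$ (after row-renormalization in the normalized case) recovers a partition whose ratio cut, respectively conductance, is within the usual approximation factor of optimal for well-clustered graphs. Fixing $\epsilon = \Theta(1)$, the spectral-estimation step above runs in time $\tilde{O}(m + kn) = \tilde{O}(m)$, using $m \ge n-1$ and that $k$ is treated as a constant.

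Finally I would append the $k$-means step itself, which by definition takes time $\Tk$ on the $n$ feature vectors; note that a single Lloyd iteration costs $O(nk^2) = \tilde{O}(n) = \tilde{O}(m)$, so even the per-iteration overhead is absorbed into the other terms. Adding the three contributions gives a total run-time of $\tilde{O}(m) + \tilde{O}(m) + \Tk = \tilde{O}(m + \Tk)$, as claimed.

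I do not expect a genuine obstacle here, since every ingredient is drawn from prior work; the only point requiring care is the interface between the \emph{approximate} eigenvectors returned by the sparsifier-plus-inverse-power-method pipeline and the guarantees of $k$-means spectral clustering, which is precisely the content of~\cite{peng2015partitioning, apers20} and so may be quoted directly rather than re-derived.
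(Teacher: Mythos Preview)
Your proposal is correct and mirrors the paper's own treatment: the lemma is stated as a direct consequence of prior work, and the surrounding text (Section~\ref{sec:k-means-spectral}) spells out exactly the same pipeline of near-linear-time sparsification plus inverse power method from~\cite{spielman2014nearly, spielman2007spectral, koutis2015faster}, constant~$\epsilon$ sufficing via~\cite{peng2015partitioning}, and appending $k$-means at cost~$\Tk$. There is nothing to add.
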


\subsubsection{Exploring the `motif neighbourhood' of a vertex}\label{sec:exploring_neighbourhood}
Here we describe a short classical algorithm which, given a vertex $u$, motif $M$ of size $s$, and a sequence of integers $I$ of length $s-1$, can be used to return a pairing of $s$ vertices around $u$ to vertices in $M$, such that those vertices are candidates for a match of the motif in the graph. We call this procedure a `tree walk', for reasons that will become apparent. More precisely, given a tree $T$ of $t$ vertices, a graph $G$, and a vertex $v$, a tree walk explores the neighbourhood of $v$ in $G$ by constructing the tree $T$ locally out of the neighbours of $v$. The output is a list of size $t$ that identifies vertices in $G$ with vertices in the tree $T$. We give details of the tree walk in Algorithm~\ref{algo:treewalk}, and show an example of two outcomes of a tree walk in Figure~\ref{fig:tree_walks}. For any fixed input, the tree walk algorithm takes time linear in the number of edges in the tree $T$. Note that it is possible for Algorithm~\ref{algo:treewalk} to return a list $L$ shorter than desired ($T$). This will not be a problem for us.

\begin{algorithm}[H]
    \caption{Tree walk}
    \begin{algorithmic}[1]
    \Function{Tree walk}{Tree $T$ with root $r$, graph $G$, vertex $v$, sequence $I = \{i_j\}_{j=1}^{|T|}$}
        \State Initialise $L \gets \{(r,v)\}$
        \For{Each child vertex $c$ of $r$}
            \State Let $i_j$ be the first element from $I$
            \State Let $u$ be the $i_j$th neighbour of $v$ in $G$ (if $v$ has no $i_j$th neighbour, set $u = \O$)
            \State Let $T_c$ be the sub-tree rooted at $c$
            \State $I \gets I \setminus \{i_j\}$
            \State $L' \gets$ \Call{Tree walk}{$T_c$, $G$, $u$, $I$}
            \State $L \gets L \cup L'$
        \EndFor
        \State \Return $L$
    \EndFunction

    \end{algorithmic}
    \label{algo:treewalk}
\end{algorithm}

\begin{figure}
    \centering
    \includegraphics[scale=0.8]{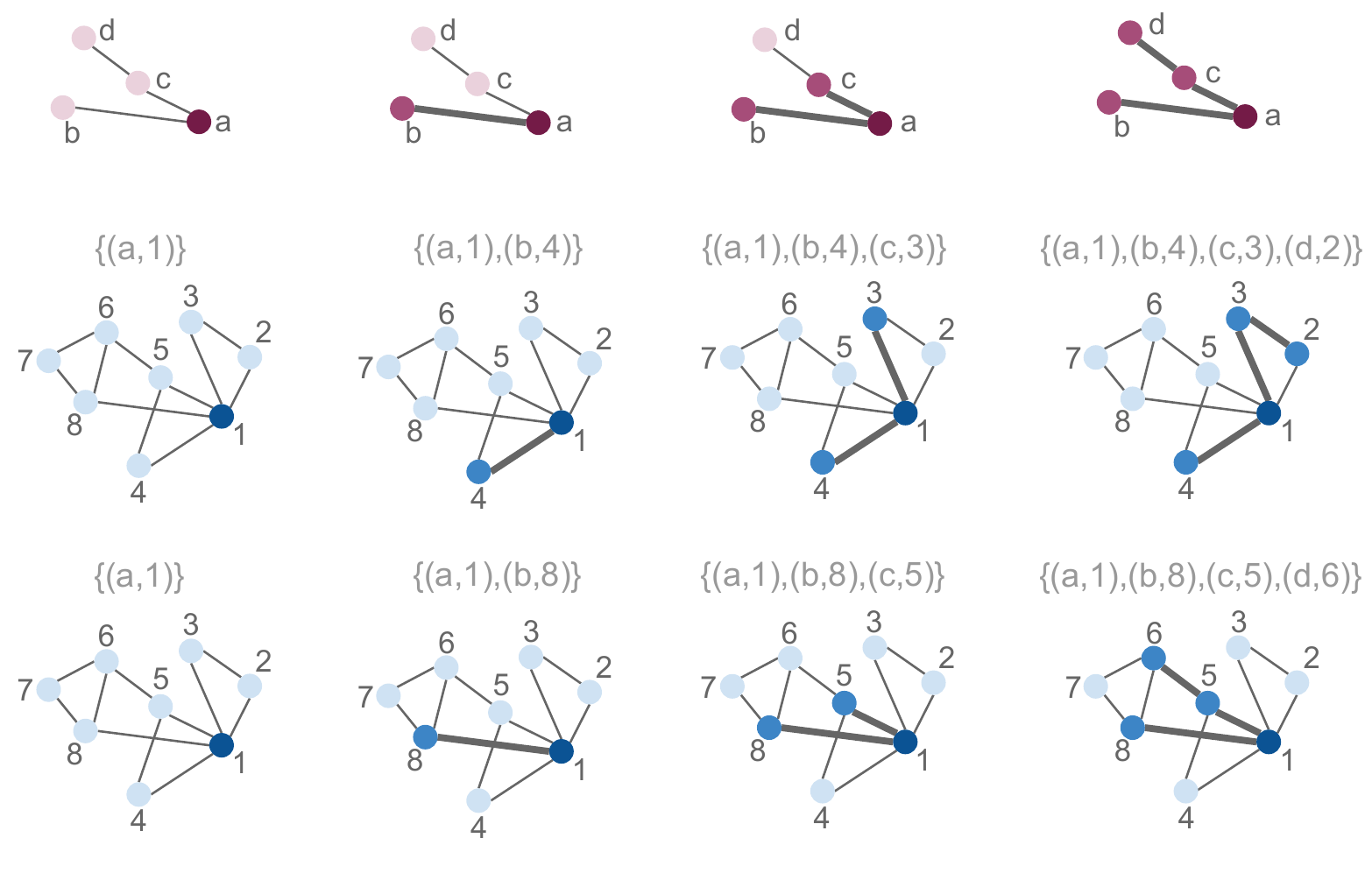}
    \caption{The stages of two possible tree walks (corresponding to different settings of the integer sequence), showing the list $L$ maintained during the tree walk after each step. The tree is shown in red (top), the graph on which the walk is performed in blue (middle and bottom). The edges and vertices selected at each step are shown (thicker edges and darker vertices, respectively).}
    \label{fig:tree_walks}
\end{figure}

\subsection{Motif clustering with Grover search}
\label{sec:motif_grover}

The most straightforward way to speed up classical motif clustering is to replace the search for motif instances by a Grover search. In doing so, we find all motif instances in $G$ and construct $\fG$ exactly, and provide a generic speedup over the classical approach. The advantage of producing the adjacency lists exactly is that we can then cluster based on both the normalized and unnormalized motif Laplacian with the usual theoretical guarantees. In Sections~\ref{sec:6.2.1} and~\ref{sec:6.2.2} below we prove Theorem~\ref{theo:first_result}.

\paragraph{Checking for motif matches in the graph} 
We will routinely need to check if a set of $s$ vertices in the input graph $G$ corresponds to a match of the motif $M$. To do this, we need to check if all the edges (resp. non-edges) of $M$ are present (resp. not present) in $G$. Since we only assume adjacency list access to $G$, this will incur some overhead. In particular, to check for the presence of an edge $(u,v)$ in $G$, we can check if $v$ appears as a neighbour in $u$'s adjacency list. 

As discussed in the introduction, if the adjacency lists are sorted (according to some fixed ordering over the vertices of $G$), then this can be done in time $O(\log d)$ per edge. since we will have to check the existence or non-existence of $O(s^2)$ edges -- with $s$ constant -- this will take time $O(s^2 \log d) = O(\log d)$.

If the adjacency lists are not sorted, then we can instead use a single application of \textbf{Search} from Lemma~\ref{lem:grover} to detect the presence of the edge with probability $\geq 1-\epsilon$ in time $O(\sqrt{d}\log(1/\epsilon))$. If we apply this subroutine $N\geq 1$ times and we want all the instances as a whole to succeed with probability at least $1-C$ ($C>0$ constant to choose to your liking), then we need $\epsilon = C/N$ by the union bound. The number of times the function \Call{Match}{} is called is $N=O(\sqrt{nd^{s-1}\fM})$ in Algorithm \ref{algo:motif_grover}, which is also the number of times the subroutine for checking edges is called (since $s$ is constant). Because this is at most polynomial in $n$, it will only add a logarithmic overhead to the run-time of \textbf{Search} for edges, which therefore takes time $\tilde{O}(s^2 \sqrt{d}) = \tilde{O}(\sqrt{d})$.

\subsubsection{Constructing $\fG$ via Grover search}\label{sec:6.2.1}
Our first algorithm is a basic application of Grover search to find all matches of the motif within the graph, which with some short (classical) post-processing allows us to obtain the motif adjacency lists $\{\fJ_u\}_{u \in V}$ exactly (i.e. without errors on the weights). As we construct the motif adjacency lists, we keep them ordered according to some arbitrary but fixed ordering of the vertices in $V$. The algorithm makes direct use of the TreeWalk sub-routine from Algorithm~\ref{algo:treewalk}.
\begin{algorithm}[H]
    \caption{Motif-Grover}
    \begin{algorithmic}[1]
    \State Initialise $\fJ_u$ for each $u \in V$.
    \State Construct a spanning tree $T$ for the motif $M$.
    \State Let $\mathcal{O}_f$ be the unitary that implements \Call{Match}{}, taking as input some vertex, integer-sequence pair $(u,I)$ (where $I = \{i_j\}_{j=1}^{s-1}$, $i_j \in \{0,\dots,d-1\}$). 
    \State Use $\text{\bf Find}(\mathcal{O}_f)$ to find all $\fM$ motif instances in the graph, searching over all $O(n d^{s-1})$ possible vertex/integer-sequence pairs $(u,I)$. 
    \State Classically post-process the $\fM$ matches to fill in the entries of each $\fJ_u$. 

    \State \Return $\{\fJ_u\}_{u \in V}$
    
    \item[]
    
    \Function{Match}{Tree $T$, graph $G$, vertex $v$, sequence $I$}
        \State $L \gets $ \Call{TreeWalk}{$T$,$G$,$v$,$I$}\label{small_hippo_grover}
        \If{$L$ gives a match to the motif in $G$}\label{giant_camel_grover}
            \State \Return 1
        \Else 
            \State \Return 0
        \EndIf
    \EndFunction
    \end{algorithmic}
    \label{algo:motif_grover}
\end{algorithm}
Each tree walk (Line~\ref{small_hippo_grover}) takes time $s-1$ and each check for a match (Line~\ref{giant_camel_grover}) takes time either $O(\log d)$ or $\tilde{O}(\sqrt{d})$, depending on whether the adjacency lists are sorted or unsorted. From Lemma~\ref{lem:find_all}, $\text{\bf Find}$ requires an expected $O(\sqrt{nd^{s-1}\fM})$ applications of these subroutines and $\tilde{O}(\sqrt{nd^{s-1}\fM})$ other operations. The classical post-processing takes $O(l \log d)$ time for each match (if the match concerns vertices $u$ and $v$ as anchor nodes, we may have to do binary search of the motif neighbours of $u$ to find the entry in $u$'s motif adjacency list that corresponds to vertex $v$, and vice versa), hence $\tilde{O}(\fM)$ time in total. All steps combined, the algorithm takes time $\tilde{O}(\sqrt{nd^{s-1}\fM} + \fM) = \tilde{O}(\sqrt{nd^{s-1}\fM})$ if the adjacency lists are sorted, and $\tilde{O}(\sqrt{nd^{s}\fM})$ time if they are not. If we choose to pre-sort the adjacency lists ahead of time, or need to load the input graph to QRAM, this will add an extra additive overhead of $\tilde{O}(nd)$. 

If the motif $M$ is symmetric under non-trivial motif isomorphisms, then Algorithm~\ref{algo:motif_grover} will find each motif instance exactly $S_M$ times, where $S_M$ is the number of motif isomorphisms of $M$. Because $s$ is constant, so is $S_M$, and therefore we incur a constant overhead in the presence of motif symmetries; see Appendix~\ref{sec:motif_isomorphisms} for details.

\subsubsection{Clustering using Motif-Grover}\label{sec:6.2.2}
In the previous section we established how to obtain adjacency list access to the (exact) motif graph $\fG$. To perform motif clustering, we apply $k$-means spectral clustering to $\fG$ using the spectral clustering algorithm of Lemma~\ref{thm:sc_spielman}, which results in a clustering that approximately minimizes the motif RatioCut (when applied to the ordinary Laplacian of $\fG$), or the motif conductance (when applied to the normalized Laplacian of $\fG$).

\begin{algorithm}[H]
	
	\caption{Quantum motif clustering: Grover version}
	
	 \hspace*{\algorithmicindent} \textbf{Input}: A graph $G=(V,E)$, motif $M$, and an integer $k\geq 1$. 

	\begin{algorithmic}[1]

		\State Apply Algorithm \ref{algo:motif_grover} on input $G=(V,E)$ and motif $M$, which returns the set of exact adjacency lists $\{\fJ_u\}_{u \in V}$ describing the motif graph $\fG$.
		\State Apply the algorithm from Lemma~\ref{thm:sc_spielman} to either i) the unnormalized or ii) the normalized motif graph Laplacian, using the adjacency lists to provide access.
		\State \Return A $k$-partition of $G=(V,E)$ that approximately minimizes either i) motif RatioCut or ii) Motif conductance.
	\end{algorithmic}
	\label{algo:motif_cluster_grover}
\end{algorithm}

The run-time of step 2 is $\tilde{O}(\fm + \Tk)$, where $\fm \leq \text{min}(nd^l, n^2)$ is the number of edges in the motif graph $\fG$ (recall that $l$ is the maximum distance between any two anchor nodes in the motif). Since $\fm \leq \fM = O(nd^{s-1})$, with similar considerations to the algorithm from the previous section, the total expected run-time is either $\tilde{O}(\sqrt{nd^{s-1}\fM} + \fm + \Tk) = \tilde{O}(\sqrt{nd^{s-1}\fM} + \Tk)$ if the adjacency lists are sorted, else $\tilde{O}(\sqrt{nd^{s}\fM} + \Tk)$. Once again, if we either have classical access to the input graph and need to load it to QRAM first, or we need pre-sort the adjacency lists ahead of time in order to apply the second algorithm, we add an extra additive overhead of $\tilde{O}(nd)$.

\subsection{Motif clustering via quantum counting}
\label{sec:motif_clustering_quantum_counting}

If some reasonably weak conditions hold for the motif, then it can be faster to use quantum counting to obtain an $\epsilon$-approximation of the motif graph (in the sense that the weights on the edges are approximated up to relative error $\epsilon$), and then use this for motif clustering. In many cases a rough approximation to the graph is good enough for clustering, and we argue this both formally (in the case of clustering on the unnormalized Laplacian) and empirically (in the case of clustering on the normalized Laplacian) in Section~\ref{sec:approx-adjacency-matrix}. 

In this section we present a quantum algorithm (Algorithm~\ref{algo:motif_approxcount}) for constructing the approximate motif graph using quantum approximate counting. As before, this algorithm can be combined with the algorithm of either Corollary~\ref{cor:qsc_apers_dewolf} or Lemma~\ref{thm:sc_spielman} to obtain a quantum algorithm for (approximate) motif clustering. We begin by describing a quantum algorithm (Algorithm~\ref{algo:motif_count}) for computing approximations to the entries of the motif adjacency matrix $\fA$, and then use this to construct the motif graph. More precisely, given vertices $u$ and $v$, we provide a quantum algorithm that outputs an approximation $\widetilde{\fA_{uv}}$ satisfying
\begin{equation}\label{eq:approx_A}
    \left| \widetilde{\fA_{uv}} - \fA_{uv} \right| \leq \epsilon \fA_{uv}
\end{equation}
with probability at least $1-\delta$ for some choice of accuracy $\epsilon$ and probability of failure $\delta>0$.

The algorithm for approximate motif clustering described in this section assumes the motif $M$ has two anchor nodes. However, our algorithm can easily be extended to motifs with more than 2 anchor nodes, since the motif graph can be constructed in this case by decomposing the motif into a combination of two-anchor-node motifs -- see Appendix~\ref{sec:motifs_more_than_two} for details.

\begin{algorithm}[H]
    \caption{Motif Count}
    \begin{algorithmic}[1]
    \Function{Motif Count}{Motif $M=(V_M,E_M,V_A)$ with two anchor nodes $a$ and $b$, graph $G$, vertices $u, v$, accuracy $\epsilon$, probability $1-\delta$}
    
        \State Construct a spanning tree $T=(V_T,E_T)$ of $M$.
        \State Split $T$ into two trees: an $s_a$-vertex tree $T_a=(V_{T_a},E_{T_a})$ rooted at $a \in V_A$ and an $s_b$-vertex tree $T_b=(V_{T_b},E_{T_b})$ rooted at $b \in V_A$.
        
        \State Let $O_f$ be the unitary that implements \Call{Match}{} on the trees $T_a, T_b$, graph $G$, vertices $u, v$, and taking as input two integer sequences $I_a = \{i^{(a)}_j\}_{j=1}^{s_a-1}, I_b = \{i^{(b)}_j\}_{j=1}^{s_b-1}$, where $s_a = |V_{T_a}|$ and $s_b = |V_{T_b}|$.
        \State \Return $\text{\bf{Approx-Count}}(O_f,\epsilon,\delta)$
    \EndFunction
    
    \item[]
    
    \Function{Match}{Trees $T_a, T_b$, graph $G$, vertices $u,v$, sequences $I_a,I_b$}
        \State $L_a \gets $ \Call{TreeWalk}{$T_a$,$G$,$u$,$I_a$}\label{small_hippo}
        \State $L_b \gets $ \Call{TreeWalk}{$T_b$,$G$,$v$,$I_b$}\label{medium_elephant}
        \State $L \gets L_a \cup L_b$
        \If{$L$ gives a match to the motif in $G$}\label{giant_camel}
            \State \Return 1
        \Else 
            \State \Return 0
        \EndIf
    \EndFunction
    \end{algorithmic}
    \label{algo:motif_count}
\end{algorithm}

\begin{lemma}\label{lem:approx_adjacencymatrix}
Suppose we have coherent adjacency list access to an $n$-vertex graph $G$ with maximum degree $d$, where the adjacency lists are assumed to be sorted, and we are given an $s$-vertex motif $M = (V_M,E_M,V_A)$ with $|V_A| = 2$. Let $u, v$ be two vertices of $G$, and $\epsilon>0$. Then, if $\fA_{uv} \geq 1$, Algorithm~\ref{algo:motif_count} outputs an approximation $\widetilde{\fA_{uv}}$ satisfying Eq.~\eqref{eq:approx_A} with probability $\geq 1- \delta$ using $O\left(\frac{1}{\epsilon}\sqrt{\frac{d^{s-2}}{ \fA_{uv}}} \log(1/\delta)  \log(d)\right)$ queries to the graph $G$, and $\tilde{O}\left(\sqrt{\frac{d^{s-2}}{\fA_{uv}}}\log(1/\delta)  \log(d)\right)$ other elementary operations. If $\fA_{uv} = 0$, then Algorithm~\ref{algo:motif_count} outputs $\widetilde{A}_{uv} = 0$ with certainty using $O(\sqrt{d^{s-2}})$ queries to the graph $G$ and $\tilde{O}(\sqrt{d^{s-2}})$ other operations.
\end{lemma}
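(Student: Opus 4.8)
The plan is to show that the oracle $O_f$ built inside Algorithm~\ref{algo:motif_count} marks exactly the integer-sequence pairs $(I_a,I_b)$ that correspond to motif instances of $M$ in $G$ having $u$ and $v$ as anchor nodes, after which the claim follows by plugging the relevant parameters into Lemma~\ref{lem:approx_counting}. First I would fix the search space: the input to \Call{Match}{} consists of two integer sequences $I_a \in \{0,\dots,d-1\}^{s_a-1}$ and $I_b \in \{0,\dots,d-1\}^{s_b-1}$, so $N = d^{(s_a-1)+(s_b-1)} = d^{s_a+s_b-2} = d^{s-2}$, using $s_a+s_b = |V_{T_a}|+|V_{T_b}| = |V_T| + \text{(overlap)}$; since $T$ is a spanning tree of $M$ on $s$ vertices split at the two roots $a$ and $b$ into $T_a$ and $T_b$, we have $s_a + s_b = s$ (the two subtrees partition $V_M$ after we root one at $a$ and the other at $b$ — I would double check the exact bookkeeping of whether the split is of vertices or of the edge set, but in any case the total number of integers read equals $s-2$, giving $N = d^{s-2}$). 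Each call to \Call{Match}{} does two tree walks, which by the remark following Algorithm~\ref{algo:treewalk} cost $O(s) = O(1)$ graph queries, plus a check whether $L = L_a \cup L_b$ gives a match to the motif, which requires verifying the $O(s^2) = O(1)$ edge/non-edge conditions of $M$ in $G$; since the adjacency lists are sorted, each such edge query costs $O(\log d)$ as explained in the paragraph ``Checking for motif matches in the graph''. Hence one invocation of $O_f$ costs $O(\log d)$ graph queries and $\tilde O(1)$ other operations.

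Next I would argue the counting correctness: the number of marked pairs is $t = \Theta(\fA_{uv})$. One direction is clear — each motif instance $\iota$ with $\{u,v\} \subseteq \iota(V_A)$ (so $\iota(a),\iota(b) = \{u,v\}$ in one of the two orders) determines, via the tree structure, at least one integer-sequence pair that \Call{TreeWalk}{} maps to it, so $t \geq \fA_{uv}$ up to the constant accounting for which of $u,v$ plays the role of $a$ versus $b$ (a factor of at most $2$, or more precisely a factor equal to the number of root-swapping automorphisms, which is constant since $s$ is). The reverse direction is where care is needed: a single motif instance may be hit by several distinct integer-sequence pairs — this happens when the tree walk can reach the same set of $s$ vertices along different local orderings, or when $M$ has non-trivial automorphisms. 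But as noted in the Grover subsection and in Appendix~\ref{sec:motif_isomorphisms}, the multiplicity is bounded by $S_M = O(1)$ because $s$ is constant. Therefore $\fA_{uv} \leq t \leq S_M \cdot \fA_{uv} = O(\fA_{uv})$, so $t = \Theta(\fA_{uv})$, and a relative-$\epsilon$ estimate of $t$ (divided by the known constant multiplicity) yields a relative-$O(\epsilon)$ estimate of $\fA_{uv}$; rescaling $\epsilon$ by the constant absorbs this. I would also handle the corner case where \Call{TreeWalk}{} returns a list shorter than $s$ (when some $i_j$th neighbour does not exist): such $L$ simply fails the match check and returns $0$, so it does not affect the count.

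Finally I would assemble the complexity. Apply Lemma~\ref{lem:approx_counting} with this $N = d^{s-2}$ and $t = \Theta(\fA_{uv})$: when $\fA_{uv} \geq 1$ (so $t \geq 1$), \textbf{Approx-Count} makes $O\!\left(\frac{1}{\epsilon}\sqrt{N/t}\,\log(1/\delta)\right) = O\!\left(\frac{1}{\epsilon}\sqrt{\frac{d^{s-2}}{\fA_{uv}}}\,\log(1/\delta)\right)$ oracle calls to $O_f$ and its inverse, and $\tilde O$ of the same many other operations. Multiplying the oracle-call count by the $O(\log d)$ graph-query cost per $O_f$ invocation gives $O\!\left(\frac{1}{\epsilon}\sqrt{\frac{d^{s-2}}{\fA_{uv}}}\,\log(1/\delta)\,\log d\right)$ queries to $G$, exactly as stated; the $\tilde O(\cdot)$ elementary-operation bound follows by absorbing the $\log d$ factor into the $\tilde O$. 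When $\fA_{uv} = 0$, $t = 0$, and the second clause of Lemma~\ref{lem:approx_counting} says \textbf{Approx-Count} certifies $\tilde t = 0$ using $O(\sqrt N) = O(\sqrt{d^{s-2}})$ oracle calls, hence $O(\sqrt{d^{s-2}}\log d) = \tilde O(\sqrt{d^{s-2}})$ graph queries and $\tilde O(\sqrt{d^{s-2}})$ other operations, matching the claim. The main obstacle I anticipate is the middle step — pinning down precisely that the tree-walk-to-instance map has constant multiplicity (and relating $s_a+s_b$ to $s$ correctly under the spanning-tree split), since getting an over- or under-count here by a non-constant factor would break the relative-error guarantee; everything else is bookkeeping plus a direct appeal to Lemma~\ref{lem:approx_counting}.
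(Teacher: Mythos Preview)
Your proposal is correct and follows essentially the same approach as the paper's proof: identify the $d^{s-2}$-sized search space via the two-rooted spanning-tree split (indeed $s_a+s_b=s$, since removing one edge of the $s$-vertex spanning tree partitions its vertices), cost each oracle call at $O(\log d)$ graph queries using sorted adjacency lists, invoke Lemma~\ref{lem:approx_counting}, and divide out the motif-symmetry overcount at the end. The one place to sharpen is your hedged $t=\Theta(\fA_{uv})$: the paper (via Appendix~\ref{sec:motif_isomorphisms}) establishes that the multiplicity is \emph{exactly} the precomputable constant $S_M^{(a,b)}$, not merely bounded by it, which is what makes the final division preserve the relative-$\epsilon$ guarantee rather than just a relative-$O(\epsilon)$ one.
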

\begin{proof}
    Our task is to approximately count the number of motifs present in the graph $G$ that both $u$ and $v$ appear in (as anchor nodes). Algorithm~\ref{algo:motif_count} achieves this by using tree walks to explore locally the areas around $u$ and $v$, in search of a set of vertices and edges that match the motif structure, and then uses approximate quantum counting to estimate the number of motifs containing $u$ and $v$. 
    
    As described in Algorithm~\ref{algo:motif_count}, we start by (classically) constructing a spanning tree $T = (V_T, E_T)$ of the motif $M$, for example using breadth-first search in time $O(s^2) = O(1)$. Next, we remove an edge of the tree in such a way that the two newly formed trees contain one anchor node $a$ and $b$ each, which yields an $s_a$-vertex tree $T_a = (V_{T_a}, E_{T_a})$ rooted at $a$ and an $s_b$-vertex tree $T_b = (V_{T_b}, E_{T_b})$ rooted at $b$, respectively. Note that $|E_{T_a}| = s_a-1$, $|E_{T_b}| = s_b-1$, and $s_a + s_b = s$.  See Figure~\ref{fig:motif_tree} for an example.
    
    \begin{figure}
        \centering
        \includegraphics{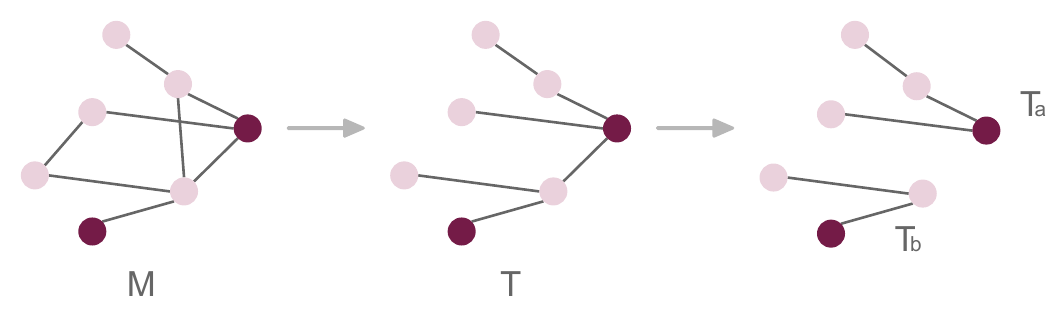}
        \caption{A motif $M$ with two anchor nodes (dark red), the spanning tree $T$ of $M$, and the two trees ($T_a$ and $T_b$) rooted at the anchors nodes.}
        \label{fig:motif_tree}
    \end{figure}
    
    We then fix two integer sequences that uniquely define a tree walk on each tree: $I_a = \{i^{(a)}_j\}_{j=1}^{s_a-1}$ for the tree walk on $T_a$ and $I_b = \{i^{(b)}_j\}_{j=1}^{s_b-1}$ for the tree walk on $T_b$. For a fixed pair of sequences, we require $s_a + s_b - 2 = s-2$ queries to $G$ to perform both tree walks. Let $L = \{(b_j,v_j)\}_{j=1}^s$ be the union of the lists output by the two tree walks (i.e. where each $b_j$ labels a vertex in $M$ and $v_j$ a vertex in $G$). To check whether the subset of vertices $v_j$ match the motif, we need to check that for every edge $(b_j,b_{j'}) \in E_M$, $(v_j,v_{j'}) \in E$ and for every non-edge $(b_j,b_{j'}) \notin E_m$, $(v_j,v_{j'}) \notin E$. Since we only assume adjacency list access to the edges of $G$, this incurs some overhead. In particular, given vertices $v_0,v_1$ from $G$, we must query all of the neighbours of $v_0$ for the presence (resp. non-presence) of $v_1$ to check for the edge (resp. non-edge) $(v_0,v_1)$. Since the adjacency lists are sorted, this can be done in time $O(\log d)$ (recall that $s$ is constant).

    For the two trees $T_a$ and $T_b$, there are at most $d^{s_a-1} d^{s_a-1} = d^{s_a + s_b - 2} = d^{s-2}$ possible tree walks that can be performed on $G$, each corresponding to a unique string of $s-2$ integers of value at most $d$. Hence \textbf{Approx-Count} will search over at most $d^{s-2}$ items. With accuracy $\epsilon$ and probability of success $\delta$, if $\fA_{uv} \geq 1$, this requires $O\left(\frac{1}{\epsilon} \sqrt{\frac{d^{s-2}}{\fA_{uv}}} \log(1/\delta) \right)$ queries to the subroutine that checks if a subset of vertices in the graph matches the motif structure, plus $\tilde{O}\left(\frac{1}{\epsilon} \sqrt{\frac{d^{s-2}}{\fA_{uv}}} \log(1/\delta)\right)$ other operations; if $\fA_{uv} = 0$, we need $O(\sqrt{d^{s-2}})$ queries and $\tilde{O}(\sqrt{d^{s-2}})$ other operations.
    
    Finally, in the presence of symmetries within the motif, we note that the quantum counting routine will over count-motif matches. In Appendix~\ref{sec:motif_isomorphisms} we work out exactly how many duplicates will be found, and show that this quantity, $S_M^{(a,b)}$, which is $O(1)$ because $s = O(1)$, depends only on the motif $M$ itself and can be computed ahead of time. Dividing the output of \textbf{Approx-Count} by $S_M^{(a,b)}$ we obtain our estimate $\widetilde{\fA_{uv}}$ (note that this doesn't affect the accuracy of the estimate).
\end{proof}

\subsubsection{Constructing $\fG$ via quantum approximate counting}
As we discuss in Section~\ref{sec:approx-adjacency-matrix}, for the purpose of clustering it turns out that approximating the edge weights up to \emph{constant} relative error is sufficient, and, as we will see, provides a speedup over the classical algorithm when the motif length $l$ satisfies $2l<s$. Again we construct $\fG$ by explicitly constructing the motif adjacency lists for each vertex, as described in Algorithm~\ref{algo:motif_approxcount}.

\begin{algorithm}[H]

	\caption{Approximate-Motif}
		 \hspace*{\algorithmicindent} \textbf{Input}: A graph $G=(V,E)$, motif $M$ with two anchor nodes, relative error $\epsilon>0$ and probability of failure $\delta = \Omega\left(\frac{1}{\poly(n)}\right)$. 
	\begin{algorithmic}[1]
		\State Initialise the adjacency lists $\{\widetilde{\fJ}_u\}_{u \in V}$.
		\For{Every vertex $u \in V$}
		    \State Find the $l$-hop neighborhood $N^l_u$ of $u$ via breadth-first-search.
			\For{Every vertex $v\in N^l_u$}
				\State Use Algorithm~\ref{algo:motif_count} to obtain $\widetilde{\fA}_{uv}$, an approximation of $\fA_{uv}$ up to relative error $\epsilon$ and with probability $\geq 1-\frac{\delta}{n^2}$.
				\If{$\widetilde{\fA}_{uv} \neq 0$}
				    \State Update $\widetilde{\fJ}_u$ to include the edge to $v$ with weight $\widetilde{\fA}_{uv}$.
				\EndIf
			\EndFor
		\EndFor
		\State \Return $\{\widetilde{\fJ}_u\}_{u \in V}$.
	\end{algorithmic}
	\label{algo:motif_approxcount}
\end{algorithm}
The output of this algorithm is a classical description of an approximation of the motif adjacency lists. These lists store approximations to the non-zero entries of the motif adjacency matrix $\fA$, and they satisfy, for all $u,v\in V$, 
\begin{equation*}
	(1-\epsilon)\fA_{uv}\leq \widetilde{\fA}_{uv}\leq (1+\epsilon)\fA_{uv}.
\end{equation*}

Complexity-wise, the computation of each $\tilde{\fA}_{uv}$ (using Algorithm~\ref{algo:motif_count} as a subroutine) takes time $\tilde{O}(\frac{1}{\epsilon}\sqrt{d^{s-2}})$, while the construction and looping over of the $l$-hop neighbourhoods takes time $O(d^l)$. So for constant $\epsilon>0$ and $\delta = \Omega\left(\frac{1}{\poly(n)}\right)$, the total run-time of this algorithm is 
\begin{equation*}
    \tilde{O}(nd^l\sqrt{d^{s-2}}) \,.
\end{equation*}

\subsubsection{Motif clustering with quantum counting and classical clustering}

Given the approximate motif graph $\widetilde{\fG}$ constructed using Algorithm~\ref{algo:motif_approxcount}, we next proceed to cluster the vertex set of $\widetilde{\fG}$, with the approximate adjacency lists $\{\widetilde{\fJ}_u\}_{u \in V}$ used to provide access. 


We first consider spectral clustering on the approximate unnormalized Laplacian, meaning that the clusters aim to minimize RatioCut. The input is an unweighted graph $G=(V,E)$ on $n$ vertices and a motif $M$ of size $s$ with two anchor nodes at distance $l$ from each other. We apply Algorithm \ref{algo:motif_approxcount} to obtain the adjacency lists of the approximate motif graph $\tilde{\fG}$ up to fixed relative error $\epsilon > 0$, and then use the algorithm of Lemma~\ref{thm:sc_spielman} to cluster $\tilde{\fG}$. 

Let $\widetilde{\fD}$ be the diagonal matrix of (approximate) motif degrees obtained from $\widetilde{\fA}$, and $\widetilde{\fL} = \widetilde{\fD} - \widetilde{\fA}$ be the approximate motif Laplacian.
By Lemma~\ref{lem:psd-relation_laplacians} (which we will prove later), the spectral structure of the true motif graph Laplacian $\fL$ of $\fG$ is preserved by our approximation, i.e.
\begin{equation}\label{eq:lapl-psd-approx}
    (1-\epsilon)\fL\preceq \tilde{\fL}\preceq (1+\epsilon)\fL\,.
\end{equation}
This property is necessary for applying the Spectral Clustering algorithm in Lemma~\ref{thm:sc_spielman}, where it suffices to choose $\epsilon$ to be some small constant. Our quantum motif clustering algorithm is given in Algorithm~\ref{algo:motif_cluster_qcount} below, which can also be used to cluster using the normalized approximate motif graph Laplacian, though here we don't have the same theoretical guarantees -- see next paragraph.

\begin{algorithm}[H]
	
	\caption{Quantum motif clustering via quantum approximate counting and classical spectral clustering}
	
	 \hspace*{\algorithmicindent} \textbf{Input}: A graph $G=(V,E)$, motif $M$ with two anchor nodes and integer $k\geq 1$. 

	\begin{algorithmic}[1]

		\State Apply Algorithm \ref{algo:motif_approxcount} on input $G=(V,E)$, motif $M$ and  some fixed  relative error $\epsilon>0$. This returns the approximate motif graph $\tilde{\fG}$ via the set of approximate adjacency lists $\{\widetilde{\fJ}_u\}_{u \in V}$ up to relative error $\epsilon$.
		\State Apply the algorithm from Lemma~\ref{thm:sc_spielman} on either i) the unnormalized approximate motif Laplacian $\tilde{\fL}$ or ii) the normalized approximate motif Laplacian $\widetilde{\fL}_{\n}$ of $\widetilde{\fG}$.
		\State \Return A $k$-partition of $G=(V,E)$ that approximately minimizes either the i) motif Ratiocut or ii) motif conductance. 
	\end{algorithmic}
	\label{algo:motif_cluster_qcount}
\end{algorithm}

\noindent With $s$ the size of $M$ and $l$ the distance between the two anchor nodes, the first step takes $\widetilde{O}(nd^{l+\frac{s}{2}-1})$ time.  The run-time of the second step is at most $\widetilde{O}(\fm + \Tk)$, where $\fm$ is the number of edges in the motif graph given by $\widetilde{\fA}$. Since $\fm = O(\min(nd^l,n^2))$, the expected run-time of the entire algorithm is 
\[
\widetilde{O}(nd^{l+\frac{s}{2}-1} + \Tk)\,,
\]
hence proving Theorem~\ref{theo:second_result}.
In case we need to to pre-process the input graph in advance, the run-time complexity will remain the same -- see Appendix~\ref{app:adjacency_lists}.

\paragraph{Clustering with the normalized Laplacian}

As we will show in Section~\ref{sec:approx-adjacency-matrix}, the equivalent of Eq.~(\ref{eq:lapl-psd-approx}) for normalized Laplacians does not hold in general. This means that, in principle, we cannot apply spectral clustering to the normalized Laplacian for the approximate motif graph and keep the same theoretical guarantees. However, we can still make use of the approximate adjacency matrix if, somehow, we happen to know the motif \emph{degrees} exactly -- see Section~\ref{sec:approx-norm-laplacian} for details. Unfortunately, computing the motif degrees exactly takes as much time as doing a Grover search over all motif instances, and then the run-time is the same as the run-time of Algorithm~\ref{algo:motif_grover}. 

Nevertheless, in the absence of a firm theoretical footing, the numerical simulations presented in Section~\ref{sec:numerics-norm-laplacian} suggest that, in practice, we can use the approximate motif graph and the corresponding approximate motif degrees to cluster successfully using the normalized Laplacian.

\subsubsection{Motif clustering with quantum counting and quantum clustering}
\label{sec:clustering_qcount_quantum_spectral_clustering}
In the case that $d^l = \omega(\sqrt{n})$, it is possible to obtain a more efficient algorithm by not constructing the entire motif graph beforehand, but instead providing query access to it. To do this, we can assume that the motif graph is fully connected, but that non-edges have weight $0$. Then, using Algorithm~\ref{algo:motif_count} of Lemma~\ref{lem:approx_adjacencymatrix} with a constant $\epsilon$, we can provide adjacency list access (which is now equivalent to adjacency matrix access) to the motif graph $\fG$ using $\tilde{O}(\sqrt{d^{s-2}})$ queries to the input graph $G$ and $\tilde{O}(\sqrt{d^{s-2}})$ other operations, and then directly use the quantum spectral clustering algorithm of Apers and de Wolf from Corollary~\ref{cor:qsc_apers_dewolf} to cluster, which will require $\tilde{O}(n^{3/2})$ queries to the adjacency lists of $\fG$. This will yield an algorithm for performing motif clustering that takes
\[
    \tilde{O}\left(n^{3/2}\sqrt{d^{s-2}}  + \Tk\right)
\]
time, which proves Theorem~\ref{theo:third_result}. As before, pre-processing the input graph in advance does not affect the run-time -- see Appendix~\ref{app:adjacency_lists}. The process described above is encapsulated in Algorithm~\ref{algo:motif_cluster_qcount_qcluster}.

\begin{algorithm}[H]
	
	\caption{Quantum motif clustering via quantum approximate counting and quantum spectral clustering}
	
	 \hspace*{\algorithmicindent} \textbf{Input}: A graph $G=(V,E)$, motif $M$ with two anchor nodes and integer $k\geq 1$. 

	\begin{algorithmic}[1]

		\State Use Algorithm \ref{algo:motif_count} of Lemma~\ref{lem:approx_adjacencymatrix} on input $G=(V,E)$, motif $M$, some fixed relative error $\epsilon>0$, and with failure probability at most $\delta = O(1/n^{3/2})$ to provide adjacency list access to the (all-to-all) approximate motif graph $\tilde{\fG}$.
		\State Apply the algorithm from Corollary~\ref{cor:qsc_apers_dewolf} to the unnormalized motif Laplacian $\tilde{\fL}$ by providing query access to $\tilde{\fG}$.
		\State \Return A $k$-partition of $G=(V,E)$ that approximately minimizes the motif RatioCut. 
	\end{algorithmic}
	\label{algo:motif_cluster_qcount_qcluster}
\end{algorithm}

\noindent We note that, by providing query access to the motif graph $\fG$ rather than constructing it explicitly, we lose the ability to detect and remove isolated vertices in $\fG$. These will now be included in the graph provided as input to the clustering subroutine, which will almost certainly assign each isolated vertex to its own cluster. This may impact the quality of the solutions found by the motif clustering algorithm. However, as discussed in Section~\ref{sec:results}, we note that this version of the algorithm should only be used if $d^l$ and the total number of motif instance $\fM$ are relatively large, in particular $d^l = \omega(\sqrt{n})$ and $\fM = \omega(n^2/d)$. In this case, the input graph is quite dense, and also there are reasonably many motif matches, making it not unlikely that the number of isolated vertices in $\fG$ will be quite small. 

Finally, we should note that for Algorithm~\ref{algo:motif_cluster_qcount_qcluster}, we can in general not use the normalized Laplacian for clustering, since constructing it could mean that we are dividing by zero due to some vertices possibly being disconnected.

\subsection{Run-time comparisons}
\label{sec:comparison_run-times}

Next, we discuss how the run-time $O(nd^{s-1})$ of the classical algorithm compares to the run-times of the quantum algorithms introduced in this section. We will ignore the time it it takes to do $k$-means, which is the same for all algorithms considered, and we assume is nearly-linear in $n$.

In order to investigate the run-time of our Grover-based Algorithm~\ref{algo:motif_cluster_grover}, let us take the more general starting point and assume that the adjacency lists are initially not sorted. In this case, we can either pre-process the input graph (which includes loading the input graph to QRAM if needed), or we can use Grover search to search through the adjacency lists (assuming we have coherent access to the input graph). 

The run-time of Algorithm~\ref{algo:motif_cluster_grover} depends on the number of motif instances $\fM$. If we pre-process the input graph, we get a speedup over the classical algorithm when $\fM = o(nd^{s-1})$, but this speedup is limited by the time it takes to do the sorting. If $\fM = \Omega(nd^{s-1})$, there is no speedup at all over the classical algorithm. In case of coherent access to unsorted adjacency lists of the input graph, we can also choose not to pre-sort the adjacency lists, but this only makes sense if the time it takes to run  Algorithm~\ref{algo:motif_cluster_grover} without pre-sorting is faster than the time it takes to sort the adjacency lists. Comparing the upper bound of $\tilde{O}(\sqrt{nd^s\fM})$ for the former to the upper bound of $O(nd)$ for the latter, we conclude that not pre-sorting is favorable if $\fM = o\left(\frac{n}{d^{s-2}} \right)$, albeit that what will be best in practice will depend on how tight these upper bounds are.

The quantum algorithms that use quantum approximate counting are a bit easier to analyse because they do not depend on $\fM$, nor on whether the adjacency lists of the input graph are sorted or not, or if we have classical or coherent access to them, as we can always pre-process the input graph beforehand. The run-time of Algorithm~\ref{algo:motif_cluster_qcount} is $\tilde{O}(nd^{l+\frac{s}{2}-1})$, which for motifs with $2l<s$ provides a speedup over the classical algorithm. The run-time of Algorithm~\ref{algo:motif_cluster_qcount_qcluster} is $\tilde{O}\left(n^{3/2}d^{\frac{s}{2}-1} \right)$, providing a speedup over the classical algorithm if $\sqrt{nd^s} = o(d^{s})$, which will be the case for, for example, dense graphs.

\subsubsection{Clique motifs in scale-free networks}
Let us investigate the run-times of the quantum motif clustering algorithms for a class of network that occurs often in practice: so-called scale-free networks ~\cite{albert1999diameter, vazquez2002large, barabasi2003scale, prvzulj2007biological, lima2009powerful}. Such networks have degree distributions that can be well approximated by a power-law distribution so that the fraction of vertices of degree $h$ scales as $h^{-\tau}$ for some $\tau>1$. 


As an example, consider motif clustering with the motif being an $s$-clique with two anchor nodes. We take $s \geq 3$ to be a constant independent of $n$. For clique motifs the distance between the two anchor nodes is $l=1$. 

In \cite{janssen2019counting}, the authors consider the number of $s$-cliques present in power-law random graphs on $n$ vertices with parameter $\tau\in (2,3)$ (they consider the so-called ``hidden variable'' model, see \cite{chung2002average, boguna2003class, britton2006generating, bollobas2007phase}) and find that its expected value is given by $O(n^{\frac{s}{2}(3-\tau)})$ as $n\to \infty$. This means that for such graphs $\mathbb{E}[\fM] = O(n^{\frac{s}{2}(3-\tau)})$, where the expectation is taken over the randomness of the input graph. The maximum degree in these power-law random graphs is given by $d = n^{1/(\tau-1)}$ \cite{boguna2003class} (also known as the ``natural cut-off"). 
 
\paragraph{Run-time upper bounds}
Using the above upper bound for $d$ together with Jensen's inequality, we obtain the following upper bound for the expected number of queries to the graph for Algorithm \ref{algo:motif_cluster_grover} without pre-sorting the adjacency lists:
\[
\mathbb{E}\left[\tilde{O}\left(\sqrt{nd^{s}\fM}\right) \right] = \tilde{O}\left(\sqrt{n^{1 + \frac{s}{\tau - 1}}\mathbb{E}[\fM]}\right)=\tilde{O}\left(n^{\frac{1}{2}+\frac{s}{2}(\frac{1}{\tau-1}+\frac{3-\tau}{2})}\right).
\]
Similarly, we can obtain upper bounds to the run-times of the other quantum algorithms. These are given in Table~\ref{tab:run_time_clique}.
\begin{table}[H]
\centering
\begin{tabular}{|l|l|l|l|}
\hline
    \textbf{Algorithm} & \textbf{Expected run-time}
    \\ \hline
\multicolumn{1}{|l|}{Classical} 
    & $O(n^{1+\frac{s-1}{\tau-1}})$ 
    \\ \hline
\multicolumn{1}{|l|}{Algorithm~\ref{algo:motif_cluster_grover}: quantum-Grover (pre-process)} 
    & $\tilde{O}\left(n^{\frac{\tau}{\tau -1}}+n^{\frac{1}{2}+\frac{s}{2}(\frac{1}{\tau-1}+\frac{3-\tau}{2}) - \frac{1}{2(\tau-1)}}\right)$ 
    \\ \hline
\multicolumn{1}{|l|}{Algorithm~\ref{algo:motif_cluster_grover}: quantum-Grover (no pre-process)} 
    & $\tilde{O}(n^{\frac{1}{2}+\frac{s}{2}(\frac{1}{\tau-1}+\frac{3-\tau}{2})})$ 
    \\ \hline
\multicolumn{1}{|l|}{Algorithm~\ref{algo:motif_cluster_qcount}: quantum-approximate + classical cluster} 
    & $\tilde{O}(n^{1+\frac{s}{2(\tau-1)}})$ 
    \\ \hline 
\multicolumn{1}{|l|}{Algorithm~\ref{algo:motif_cluster_qcount_qcluster}: quantum-approximate + quantum cluster}  
    & $\tilde{O}(n^{\frac{3}{2}+\frac{s}{2(\tau-1)} - \frac{1}{\tau -1}})$ 
    \\ \hline
\end{tabular}%
\caption{Expected run-times of motif clustering algorithms when the $n$-vertex input graph is a power law graph with $\tau \in (2,3)$, and the motif is a clique of constant size $s$.}
\label{tab:run_time_clique}
\end{table}

\paragraph{Comparison of run-time upper bounds}
The above (upper bounds for the)\footnote{For the remainder of this section, to make it easier to read, we will omit the word ``upper bound'', and keep in mind that the word ``run-time'' actually refers to an upper bound to the run-time.} run-times look somewhat complicated. Let's compare them to each other to see which algorithm has the fastest run-time depending on the choice of $s$ and $\tau \in (2,3)$. Intuitively, we expect that the Grover-based algorithms will perform well when there are few motif instances --- i.e. when $\mathbb{E}(\fM) = O(n^{\frac{s}{2}(3 - \tau)})$ is small --- which is the case for $\tau$ close to 3. As $\tau$ decreases, the graph becomes denser (since $d = n^{\frac{1}{\tau -1}} \rightarrow n$ as $\tau \rightarrow 2$), and we expect the quantum approximate counting based algorithms to do better. This intuition turns out to be correct.

First, observe that, since $\tau \in (2,3)$, we have $1 \geq \frac{3}{2} - \frac{1}{\tau -1}$, and therefore \textsc{quantum-approximate + quantum cluster} is faster than \textsc{quantum-approximate + classical cluster}, and so we should always use the former given a choice between the two. Second, comparing the two versions of quantum-Grover, we observe that we should only \emph{not} pre-sort if 
\[
    \frac{1}{2}+\frac{s}{2}\left(\frac{1}{\tau-1}+\frac{3-\tau}{2}\right) \leq \frac{\tau}{\tau -1}\, ,
\]
which is true only  when $s=3$, and $\tau \in [\tau_1,3)$, where $\tau_1 = \frac{5 + \sqrt{10}}{3} \approx 2.72$. In particular, for $s \geq 4$, we should always pre-sort the adjacency lists given the choice between both quantum-Grover algorithms.

Next, we compare the run-time upper bounds of the competing algorithms for different values of $s$. For $s \geq 4$, the only two competing algorithms are \textsc{quantum-approximate + quantum cluster} and \textsc{quantum-Grover (pre-process)}. Note that, in this regime for $s$,
\[
    \frac{3}{2}+\frac{s}{2(\tau-1)} - \frac{1}{\tau -1} \geq \frac{\tau}{\tau -1 }\, ,
\]
and therefore \textsc{quantum-approximate + quantum cluster} is always slower than the time it takes to pre-sort the adjacency lists. Hence, all that remains is to compare the second term in the run-time of \textsc{quantum-Grover (pre-process)} to the run-time of \textsc{quantum-approximate + quantum cluster}. After simplifying a bit, the intersection of the two run-times can be found by solving
\[
    \frac{s}{4}(3-\tau) = 1 - \frac{1}{2} \frac{1}{\tau -1}
\]
for $\tau \in (2,3)$ in terms of $s$. The result is 
\[
    \tau_0(s) = \frac{\sqrt{s^2 -2s +4} + 2s - 2}{s},
\]
which lies in the interval $(2,3)$ for $s\geq 3$. 

For the case $s=3$, we first compare \textsc{quantum-Grover (pre-process)} to \textsc{quantum-approximate + quantum cluster} for $\tau \in (2,\tau_1)$. In this interval, we want to use \textsc{quantum-approximate + quantum cluster} for $\tau \in (2,\tau_0)$ and \textsc{quantum-Grover (pre-process)} for $\tau \in (\tau_0, \tau_1)$, where $\tau_0 = \tau_0(s=3) = \frac{4 + \sqrt{7}}{3} \approx 2.22$. Finally, for $\tau \in (\tau_1, 3)$, \textsc{quantum-Grover (no pre-process)} has a faster run-time than \textsc{quantum approximate + quantum cluster}. 

\paragraph{Run-times of fastest algorithms}
In short, out of all quantum algorithms presented, we obtain the smallest run-time for \textsc{quantum-approximate + quantum cluster} for $\tau \in (2,\tau_0(s))$. The run-time for \textsc{quantum-Grover (pre-process)} is smallest for $\tau \in (\tau_0(s),3)$, except for $s=3$, in which case \textsc{quantum-Grover (no pre-process)} has an even smaller run-time for $\tau \in (\tau_1, 3)$. 

As an example, in Table~\ref{tab:run_time_s34} we list the explicit upper bounds to the run-times for the fastest algorithms as given above for the cases of $s=3$, $s=4$, and $s=5$, in the limits of $\tau \rightarrow 2$ and $\tau \rightarrow 3$, and compare them to the run-time of the classical algorithm, (using the upper bound for the maximum degree given by the natural cut-off $d = n^{\frac{1}{\tau -1}}$).
\begin{table}[H]
\centering
\begin{tabular}{l l|l|l l}

& & \textbf{Classical} & \textbf{Quantum} \\ \hline
& & run-time &  run-time & fastest algorithm  \\ \hline
$s=3$ & $\tau \rightarrow 2$ & $O(n^3)$ & $\tilde{O}(n^2)$ & quantum-approximate + quantum cluster\\
    & $\tau \rightarrow 3 $ & $O(n^2)$ &  $\tilde{O}(n^{1.25})$ & quantum-Grover (no pre-process)\\
    \hline
$s=4$ & $\tau \rightarrow 2$ & $O(n^4)$ & $\tilde{O}(n^{2.5})$ & quantum-approximate + quantum cluster\\
    & $\tau \rightarrow 3$ & $O(n^{2.5})$ & $\tilde{O}(n^{1.5})$ & quantum-Grover (pre-process)\\
    \hline
$s=5$ & $\tau \rightarrow 2$ & $O(n^5)$ & $\tilde{O}(n^{3})$ & quantum-approximate + quantum cluster\\
    & $\tau \rightarrow 3$ & $O(n^{3})$ & $\tilde{O}(n^{1.5})$ & quantum-Grover (pre-process)\\
\end{tabular}%
\caption{Run-times of the classical and fastest quantum motif clustering algorithms in expectation over the randomness of the $n$-vertex input power-law graph, where the motif is a clique of size $s \in \{3,4,5\}$. If we need to pre-process because our access to the input graph is classical, then for $s=3$ and $\tau \rightarrow 3$ the quantum run-time becomes $O(n^{1.5})$}
\label{tab:run_time_s34}
\end{table}
\noindent Note that we get a quadratic speedup on the run-time of the entire algorithm for the case $s=5, \tau \rightarrow 3$. This happens because, for $\tau \rightarrow 3$, $\fM \rightarrow O(1)$, and pre-processing the input graph takes as much time as the Grover search: both are given by $\tilde{O}(n^{1.5})$.


\section{Clustering on an approximate graph}
\label{sec:approx-adjacency-matrix}

In this section we provide analytical and numerical evidence to support the claim that performing spectral clustering using the Laplacian or the normalized Laplacian, respectively, on an approximation of a graph yields similar clusters to those that would be obtained by clustering on the actual graph. This is of particular relevance to us since our use of quantum approximate counting in Algorithms~\ref{algo:motif_cluster_qcount} and~\ref{algo:motif_cluster_qcount_qcluster} produces only an approximation of the motif graph, with each edge weight approximated up to some small constant relative error. 

We will consider the reasonably general case in which we wish to perform spectral clustering on some graph $G = (V,E)$ with real-valued adjacency matrix $A$ with non-negative coefficients, but where we only have access to an $\epsilon$-approximation $\tilde{A}$ of $A$, in the sense that $(1 - \epsilon) A_{uv} \leq \tilde{A}_{uv} \leq (1 + \epsilon) A_{uv}$ for all $u,v \in V$. Note that $A_{uv} =0$ if and only if $\tilde{A}_{uv} = 0$, so $A$ and $\tilde{A}$ have the same edge set $E$ (as long as $0 < \epsilon < 1$). Our central question is whether performing spectral clustering on the approximate graph yields clusters that are similar to the clusters obtained by performing spectral clustering on $G$ itself.

\subsection{Approximating the unnormalized Laplacian}

If we choose to cluster using the (unnormalized) Laplacian, then it turns out that we can answer the question above positively. That is: if we perturb the weights on the edges of a weighted graph by adding a multiplicative error, the spectrum of the Laplacian is preserved up to a similar multiplicative error. 

To make this precise, we first introduce some extra notation. Let $G = (V,E)$ be a undirected graph with symmetric real-valued adjacency matrix $A$ with non-negative coefficients. For a vertex $u \in V$, we define the indicator vector $1_u$ to be:
\[
    1_u(v) := 
    \begin{cases}
    1 & \text{if} \quad v = u, \\
    0 & \text{otherwise}.
    \end{cases}
\]
Now, can write the Laplacian $L = D - A$, where $D$ is the weighted degree matrix of $A$ (with diagonal coefficients $D_{uu} = \sum_{v \in V} A_{uv}$ for every $u \in V$), as 
\[
    L = \sum_{\{u,v\} \in E} A_{uv} L_{\{u,v\}},
\]
where for each edge $\{u,v\} \in E$, $L_{\{u,v\}} := (1_u - 1_v)(1_u - 1_v)^T$. Then we have the following result.

\begin{lemma}\label{lem:psd-relation_laplacians}
Let $A$ and $\tilde{A}$ both be symmetric adjacency matrices of a graph $G=(V,E)$ with real-valued non-negative weights. Let $D$ and $\tilde{D}$ be the corresponding weighted degree matrices, and $L = D - A$ and $\tilde{L} = \tilde{D} - \tilde{A}$ the corresponding Laplacians, and let $\epsilon > 0$. Then, if for all $u,v\in V$  we have $(1 - \epsilon) A_{uv} \leq \tilde{A}_{uv} \leq (1 + \epsilon) A_{uv}$, it follows that
\begin{equation}
    (1 - \epsilon) L \preceq \tilde{L} \preceq (1 + \epsilon) L.
    \label{eq:lapl_psd_lemma}
\end{equation}
\end{lemma}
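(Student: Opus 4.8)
The plan is to exploit the edge-decomposition of the Laplacian that is already recalled just above the statement, namely $L = \sum_{\{u,v\} \in E} A_{uv}\, L_{\{u,v\}}$ with $L_{\{u,v\}} = (1_u - 1_v)(1_u - 1_v)^T$, and the analogous identity $\tilde{L} = \sum_{\{u,v\} \in E} \tilde{A}_{uv}\, L_{\{u,v\}}$. The single structural fact we need is that each $L_{\{u,v\}}$ is positive semi-definite: it is an outer product $x x^T$ with $x = 1_u - 1_v$, hence $y^T L_{\{u,v\}} y = (y_u - y_v)^2 \geq 0$ for every $y$. A non-negative linear combination of PSD matrices is PSD, so the Löwner order statement will follow once we check that the relevant coefficients are non-negative.

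Concretely, I would form the difference
\[
    \tilde{L} - (1-\epsilon) L \;=\; \sum_{\{u,v\} \in E} \bigl(\tilde{A}_{uv} - (1-\epsilon) A_{uv}\bigr)\, L_{\{u,v\}}.
\]
Since $A_{uv} \geq 0$ and, by hypothesis, $\tilde{A}_{uv} \geq (1-\epsilon) A_{uv}$, every coefficient $\tilde{A}_{uv} - (1-\epsilon) A_{uv}$ is non-negative; as each $L_{\{u,v\}} \succeq 0$, the whole sum is PSD, which is exactly $(1-\epsilon) L \preceq \tilde{L}$. The upper bound is symmetric: write
\[
    (1+\epsilon) L - \tilde{L} \;=\; \sum_{\{u,v\} \in E} \bigl((1+\epsilon) A_{uv} - \tilde{A}_{uv}\bigr)\, L_{\{u,v\}},
\]
observe that $(1+\epsilon) A_{uv} - \tilde{A}_{uv} \geq 0$ by the other half of the hypothesis (again using $A_{uv} \geq 0$), and conclude $\tilde{L} \preceq (1+\epsilon) L$. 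Combining the two inclusions gives Eq.~\eqref{eq:lapl_psd_lemma}.

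There is no real obstacle here; the only points worth stating carefully are (i) that the decomposition $L = \sum_{\{u,v\}} A_{uv} L_{\{u,v\}}$ is valid for general symmetric non-negative weights (which is immediate by checking the action on basis vectors, or is taken as given from the text just above), and (ii) that the sign of the per-edge coefficients really does follow from $A_{uv} \geq 0$ together with the two-sided multiplicative bound — in particular the argument would fail if negative weights were allowed, which is why the non-negativity assumption is included in the statement. If one wanted an even more hands-on argument, one could instead verify the quadratic-form inequality directly: for any $y \in \R^{|V|}$, $y^T \tilde{L} y = \sum_{\{u,v\} \in E} \tilde{A}_{uv} (y_u - y_v)^2$, and then bound this termwise between $(1-\epsilon) \sum \tilde A$-free expression and $(1+\epsilon)$ times it; but the PSD-combination phrasing above is cleaner.
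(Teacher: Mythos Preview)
Your proof is correct and follows essentially the same approach as the paper: both use the edge-decomposition $L = \sum_{\{u,v\}\in E} A_{uv} L_{\{u,v\}}$ and verify that the differences $\tilde{L} - (1-\epsilon)L$ and $(1+\epsilon)L - \tilde{L}$ are non-negative combinations of the PSD rank-one matrices $L_{\{u,v\}}$. The paper parameterises the perturbation via $\tilde{A}_{uv} - A_{uv} = \epsilon A_{uv}\gamma_{uv}$ with $\gamma_{uv}\in[-1,1]$, whereas you bound the coefficients directly from the hypothesis, but this is only a cosmetic difference.
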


\begin{proof}
We need to show that both $\tilde{L} - L + \epsilon L$ and $L - \tilde{L} + \epsilon L$ are positive semi-definite. The inequalities $(1 - \epsilon) A_{uv} \leq \tilde{A}_{uv} \leq (1 + \epsilon) A_{uv}$ for every $u,v \in V$ imply that there exists a matrix $\gamma$ such that $\tilde{A}_{uv} - A_{uv} = \epsilon A_{uv} \gamma_{uv}$ with $\gamma_{uv} \in [-1, 1]$ for all $u,v \in V$. Hence,
\[
    \tilde{L} - L + \epsilon L = \sum_{(u,v) \in E} (\tilde{A}_{uv} - A_{uv} + \epsilon A_{uv}) L_{\{u,v\}} 
    = \sum_{(u,v)\in E}  \epsilon A_{uv} (\gamma_{uv} + 1) L_{\{u,v\}},
\]
which is the Laplacian of a graph $G = (V,E)$ where each edge $(u,v) \in E$ has weight $\epsilon A_{uv} (\gamma_{uv} + 1) \geq 0$, and hence is itself a positive semi-definite matrix. 
Likewise,
\[
    L - \tilde{L} + \epsilon L = \sum_{(u,v)\in E} \epsilon A_{uv} (1 - \gamma_{uv}) L_{\{u,v\}},
\]
is also the Laplacian of a graph $G = (V,E)$ where each edge $(u,v) \in E$ has weight $\epsilon A_{uv} (1 - \gamma_{uv}) \geq 0$, and therefore is also positive semi-definite.
\end{proof}

\subsection{Approximating the normalized Laplacian}
\label{sec:approx-norm-laplacian}

A natural question to ask is if a similar result to that of Lemma~\ref{lem:psd-relation_laplacians} also holds for the normalized Laplacian. That is, does the statement: for every $\delta >0$ there exist a $\epsilon > 0$, such that if $A$ and $\tilde{A}$ satisfy $(1 - \epsilon) A_{uv} \leq \tilde{A}_{uv} \leq (1 + \epsilon) A_{uv}$ for every $u,v \in V$, with corresponding Laplacian $L = D - A$, approximate laplacian $\tilde{L} = \tilde{D} - \tilde{A}$ and corresponding normalized Laplacians given by
\begin{equation}
     L_{\n} = D^{-\frac{1}{2}} L D^{-\frac{1}{2}} \quad \text{and} \quad  \tilde{L}_{\n} = \tilde{D}^{-\frac{1}{2}} \tilde{L} \tilde{D}^{-\frac{1}{2}},
     \label{eq:norm_lapl_and_perturbed_norm_lapl}
\end{equation}
then
\begin{equation}
    (1 - \delta) L_{\n} \preceq \tilde{L}_{\n} \preceq (1 + \delta) L_{\n} \,;
    \label{eq:norm-lapl-false}
\end{equation}
hold? (Note that for Lemma~\ref{lem:psd-relation_laplacians} we have $\epsilon = \delta$.)

First, we observe that, if we choose to normalize the approximate Laplacian with the true weighted degree matrix $D^{-\frac{1}{2}}$ rather than the approximate weighted degree matrix $\tilde{D}^{-\frac{1}{2}}$, then the above statement does hold (for $\epsilon = \delta$).

\begin{lemma}
Let $A$ and $\tilde{A}$ be  adjacency matrices of a graph $G=(V,E)$ with real-valued non-negative weights, and let $\epsilon > 0$. Now, if $(1 - \epsilon) A_{uv} \leq \tilde{A}_{uv} \leq (1 + \epsilon) A_{uv}$ holds for all $u,v\in V$, then the normalized Laplacian $L_{\n} = D^{-\frac{1}{2}}L D^{-\frac{1}{2}}$ and the approximate Laplacian normalized with true degree matrix $L_{\n}' = D^{-\frac{1}{2}}\tilde{L}D^{-\frac{1}{2}}$, where $L = D - A$, and $\tilde{L} = \tilde{D} - \tilde{A}$, are related by
\begin{equation}
    (1 - \epsilon) L_{\n} \preceq L_{\n}' \preceq (1 + \epsilon) L_{\n}.
    \label{eq:norm-lapl-psd-approx}
\end{equation}
\end{lemma}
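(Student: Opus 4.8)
The plan is to reduce this statement directly to Lemma~\ref{lem:psd-relation_laplacians}, which already establishes the unnormalized version $(1 - \epsilon) L \preceq \tilde{L} \preceq (1 + \epsilon) L$ under exactly the same hypothesis on the entries of $A$ and $\tilde{A}$. The only extra ingredient needed is that the Loewner order is preserved under congruence: if $X \preceq Y$ for symmetric matrices $X$ and $Y$, then $M^T X M \preceq M^T Y M$ for any matrix $M$, since $w^T M^T (Y - X) M w = (Mw)^T (Y - X)(Mw) \geq 0$ for every vector $w$.

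First I would note that $D$, the weighted degree matrix of $A$, has strictly positive diagonal entries (this is implicit in the normalized Laplacian being defined at all), so $D^{-\frac{1}{2}}$ is a well-defined symmetric diagonal matrix. Next I would invoke Lemma~\ref{lem:psd-relation_laplacians} to obtain $(1 - \epsilon) L \preceq \tilde{L} \preceq (1 + \epsilon) L$. Then I would conjugate all three terms by the symmetric matrix $M = D^{-\frac{1}{2}}$, using congruence-invariance of $\preceq$, to get
\[
    (1 - \epsilon)\, D^{-\frac{1}{2}} L D^{-\frac{1}{2}} \;\preceq\; D^{-\frac{1}{2}} \tilde{L} D^{-\frac{1}{2}} \;\preceq\; (1 + \epsilon)\, D^{-\frac{1}{2}} L D^{-\frac{1}{2}} \,.
\]
Recognizing $D^{-\frac{1}{2}} L D^{-\frac{1}{2}} = L_{\n}$ and $D^{-\frac{1}{2}} \tilde{L} D^{-\frac{1}{2}} = L_{\n}'$ then yields exactly Eq.~\eqref{eq:norm-lapl-psd-approx}.

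I do not anticipate any real obstacle: the entire content sits in Lemma~\ref{lem:psd-relation_laplacians}, and what remains is a single observation about congruence. The one point worth stressing is that this argument works precisely \emph{because} the same true degree matrix $D$ is used to normalize both $L$ and $\tilde{L}$, so that both sides are related by a common conjugation by $D^{-\frac{1}{2}}$. If one instead normalized $\tilde{L}$ by $\tilde{D}^{-\frac{1}{2}}$, there would be no single congruence linking $L_{\n}$ and $\tilde{L}_{\n}$, which is exactly why the analogous bound for $\tilde{L}_{\n}$ fails in general, as discussed in the surrounding text.
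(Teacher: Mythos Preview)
Your proposal is correct and follows essentially the same route as the paper: both reduce to Lemma~\ref{lem:psd-relation_laplacians} and then observe that conjugation by $D^{-1/2}$ preserves the Loewner order. The paper phrases the congruence step via the factorization $X = B^\dagger B$ for $X \succeq 0$, whereas you use the quadratic-form definition directly, but these are the same observation.
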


\begin{proof}
Eq.~\eqref{eq:norm-lapl-psd-approx} follows directly from Eq.~\eqref{eq:lapl_psd_lemma}, and the fact that, if $X \succeq 0$, then $X = B^\dagger B$ for some matrix $B$, and therefore $D^{-\frac{1}{2}} X D^{-\frac{1}{2}} =  \left(B D^{-\frac{1}{2}} \right)^\dagger BD^{-\frac{1}{2}} \succeq 0$. 
\end{proof}

As a consequence, for graphs for which we know the motif degrees exactly, we can use quantum counting to compute the matrix $L_{\n}'$, which in turn can be used for $k$-means spectral clustering with $\epsilon$-approximate eigenvectors. Unfortunately, if we do not know the motif degrees exactly, then quantum counting does not offer any additional benefit over Algorithm~\ref{algo:motif_grover}, since using quantum counting to count all the degrees exactly has the same complexity as finding all motif instances.

\ 
\noindent Next, we will show that Eq.~\eqref{eq:norm-lapl-false} does not hold in general. Specifically, Eq.~\eqref{eq:norm-lapl-false} does not hold unless, coincidentally, all perturbations are such that $D  = c \tilde{D}$ for some real-valued constant $c$. Consequently, we do not have the same guarantees for the approximate normalized Laplacian as we do for the approximate (unnormalized) Laplacian.

To this end, let $1 > \delta > 0$. We will show that, no matter how small we pick $\epsilon > 0$, if $\Ln$ are $\Lnt$ as described in and above Eq.~\eqref{eq:norm_lapl_and_perturbed_norm_lapl}, we do not have that
\begin{equation}
    (1 - \delta) \Ln \preceq \Lnt \preceq (1 + \delta) \Ln\, .
\end{equation}
In particular, we will show that
\begin{equation}
    \Lnt - (1 - \delta) \Ln \succeq 0\
\end{equation}
does not hold regardless of how small we pick $\epsilon > 0$.

We know that $\vec{v} = \sqrt{D}\one$ is a 0-eigenvector of $\Ln$, and likewise $\vec{\tilde{v}} = \sqrt{\tilde{D}}\one$ a 0-eigenvector of $\Lnt$\footnote{$\one$ is a 0-eigenvector of the unnormalized Laplacian $L$, and then we have that $\Ln \sqrt{D}\one = D^{-1/2} L D^{-1/2} \sqrt{D}\one = 0$}. Now, let us first assume that the underlying graph for $A$ is connected. In that case, $\vec{v}$ is the only 0-eigenvector of $\Ln$, and all other eigevectors have an eigenvalue that is stricly positive. Consequently, as long as $\vec{v}$ and $\vec{\tilde{v}}$ are not linearly dependent, which happens when the degree matrices $D$ and $\tilde{D}$ are not constant multiples of each other, we have
\[
    \vec{\tilde{v}}^T \left[\Lnt - (1-\delta)\Ln \right] \vec{\tilde{v}} = - (1-\delta) \vec{\tilde{v}}^T  \Ln \vec{\tilde{v}} < 0\,.
\]
This implies that the first inequality above does not hold for general $\epsilon > 0$, regardless of how small we choose $\delta$, unless all perturbations are such that $D  = c \tilde{D}$ for some constant $c>0$. 

If the graph is not connected, we can just restrict ourselves to a single connected component of $A$, and repeat the argument for that connected component. 

\ 

\noindent The above observation essentially rules out obtaining strong guarantees in the case of clustering on approximate normalized Laplacians. However, in the next section we provide numerical evidence to suggest that, in fact, we can actually use the approximate normalized Laplacian for clustering in practice.

\subsection{Numerical simulations for the approximate normalized Laplacian}
\label{sec:numerics-norm-laplacian}
Even though we cannot obtain theoretical guarantees in the case of clustering via the normalized Laplacian when the weights of the graph are only known approximately, we give evidence to suggest that, in practice, the situation is similar to the unnormalized case. Recall that in the latter, the spectrum of the graph Laplacian is preserved up to small constant multiplicative error when the weights on the edges of the graph are perturbed by a similarly small constant multiplicative error (see Lemma~\ref{lem:psd-relation_laplacians}), and thus spectral clustering will perform similarly on the original and the perturbed graph.

In what follows, we study empirically what happens to the quality of the clusters produced by spectral clustering on the normalized Laplacian when we perturb the edges of weighted, randomly generated graphs. In particular, we consider weighted undirected graphs $G=(V,E)$ with edge weights $\{w_e | e\in E\}$, and their `perturbed' versions with edge weights $\{w'_e | e \in E\}$, where each $w'_e$ is drawn uniformly at random from $[(1-\epsilon)w_e, (1+\epsilon)w_e]$ for some relative error $\epsilon \in [0,1]$. 

We find that, in general, only large values of $\epsilon$ yield significant differences in the quality of clusters obtained by spectral clustering applied to the normalized Laplacian of the graph. As an illustration, consider the graph shown in Figure~\ref{fig:circle_clusters}, a commonly used test-case for demonstrations of clustering algorithms, which consists of two concentric circles of points embedded in $\mathbb{R}^2$. Here we added edges between nearby points,\footnote{More precisely, we generated data points in $\mathbb{R}^2$ representing two concentric circles using the \textit{Scikit-learn} Python library, before scaling them to remove the mean (i.e. set it to zero) and obtain unit variance. Edges were then added between points $\mathbf{x}$ and $\mathbf{y}$ if their Euclidean distance satisfied $d(\mathbf{x},\mathbf{y}) 
\leq 0.6$, and given weight $\frac{20}{d(\mathbf{x},\mathbf{y})}$.} with a weight that scales inversely proportional to their Euclidean distance, and then applied spectral clustering to the normalized Laplacian of the resulting graph. Only after introducing a relative error of $\epsilon=0.6$ did we find that the resulting clusters differed at all from those found on the original graph. We note that this graph was even handpicked to demonstrate that perturbations can qualitatively change the clusters obtained from spectral clustering -- in fact most graphs we generated were much more resilient to perturbations! This is almost certainly due to the fact that these graphs are `well clusterable' in the sense that the two clusters are easily (albeit non-linearly) separable in a low-dimensional space. 

\begin{figure}
    \hspace{-1cm}
    \includegraphics[scale=0.6]{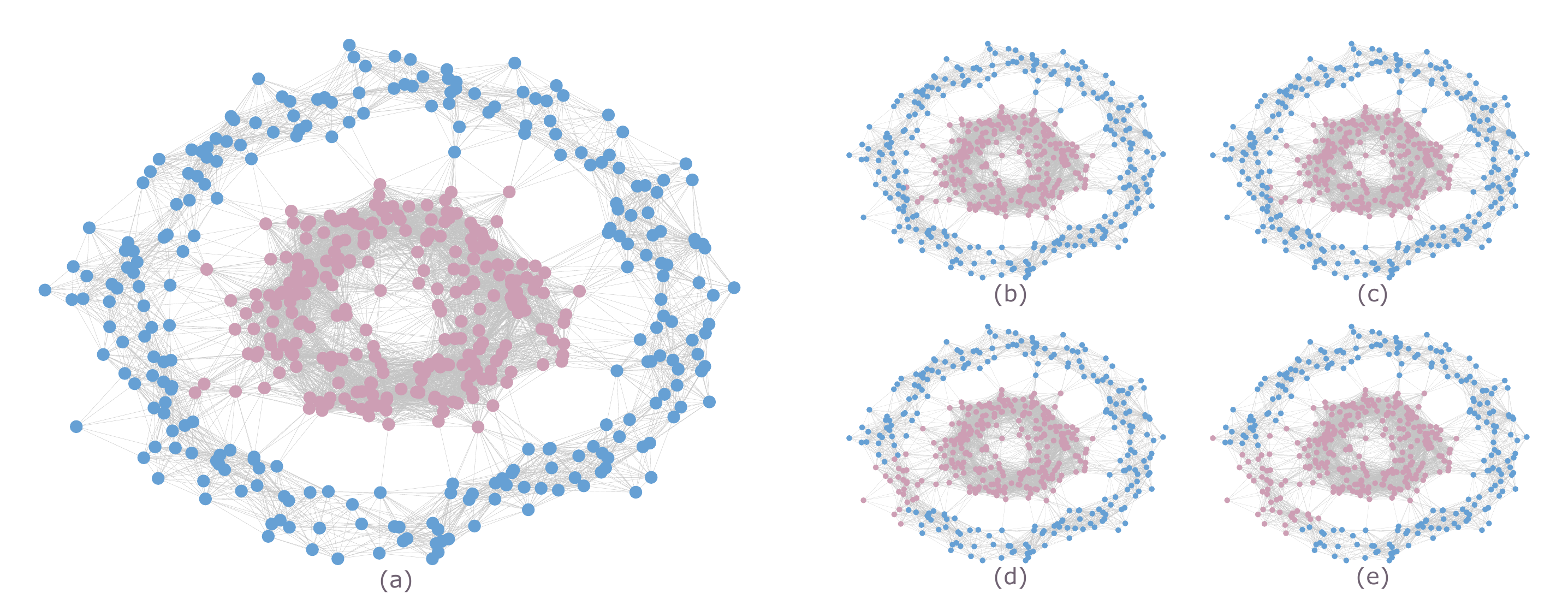}
    \caption{Effect of applying relative errors to the weights on the edges of a graph and then applying spectral clustering ($k=2$) to the resulting normalized Laplacian. (a) Original graph. (b) $\epsilon=0.6$. (c) $\epsilon=0.75$. (d) $\epsilon=0.9$. (e) $\epsilon=1$.}
    \label{fig:circle_clusters}
\end{figure}

To more precisely quantify the effect of relative errors on graph clustering algorithms, we consider the difference in conductance (see Section~\ref{sec:motif_clustering}) achieved by spectral clustering on the original and perturbed graphs. More precisely, let $\{W_i\}_{i=1}^k$ be the partition (i.e. set of clusters) output by applying spectral clustering to the original graph, and $\{\tilde{W}_i\}_{i=1}^k$ the partition found by applying it to the perturbed graph. Then we use the quantity
\[
    \phi_{\text{diff}} := \phi_G(\tilde{W}_1,\dots,\tilde{W}_k) - \phi_G(W_1,\dots,W_k)
\]
as a measure of the difference in quality between the two partitions. Note that the conductance for both partitions is computed relative to the \emph{original} graph $G$ (i.e. without perturbed weights). Since spectral clustering aims to minimize the conductance, this is the natural quantity to capture the difference in quality of two different partitions of the same graph. If the partitions output by spectral clustering on the perturbed graph are worse, then $\phi_{\text{diff}}$ will be positive; if they happen to be better, it will be negative.

\begin{table}[]
\centering
\begin{tabular}{|l|c|c|}
\hline
$n$ & Cluster graph & LFR \\ \hline
$600$ & $-0.001 \pm 0.019$ & $0.004 \pm 0.03$ \\ \hline
$800$ & $0.0 \pm 0.021$ & $-0.0 \pm 0.027$ \\ \hline
$1000$ & $-0.001 \pm 0.004$ & $0.001 \pm 0.031$ \\ \hline
$1200$ & $0.001 \pm 0.013$ & $-0.0 \pm 0.024$ \\ \hline
$1400$ & $0.001 \pm 0.016$ & $-0.002 \pm 0.039$ \\ \hline
$1600$ & $0.0 \pm 0.001$ & $-0.0 \pm 0.027$ \\ \hline
$1800$ & $0.001 \pm 0.016$ & $-0.005 \pm 0.028$ \\ \hline
$2000$ & $-0.002 \pm 0.022$ & $0.003 \pm 0.036$ \\ \hline
\end{tabular}%
\caption{Average and standard deviation of $\phi_{\text{diff}}$ vs $n$ for 200 randomly generated graphs with fixed relative error $\epsilon = 0.1$.}\label{tab:error_vs_n}
\end{table}

As test cases, we consider two types of random graphs: `cluster' graphs, which are created by generating random points in $\mathbb{R}^2$ centred around some number of fixed centres, and then adding edges between nearby points with weights that scale inversely proportional to the Euclidean distance between them; and so-called LFR-graphs~\cite{lancichinetti2008benchmark}, a commonly used family of random graphs used to test clustering and community-detection algorithms. For all tests we set $k=5$ (and for the cluster graphs, generated data centred around $k=5$ fixed centres). Table~\ref{tab:error_vs_n} shows the average value of $\phi_{\text{diff}}$ for a range of sizes of randomly generated graphs of both types for a fixed relative error of $\epsilon=0.1$. We find that, regardless of graph size, the perturbation has essentially no effect on the quality of clusters found.

\begin{figure}[H]
    \hspace{-1cm}
    \includegraphics[scale=0.65]{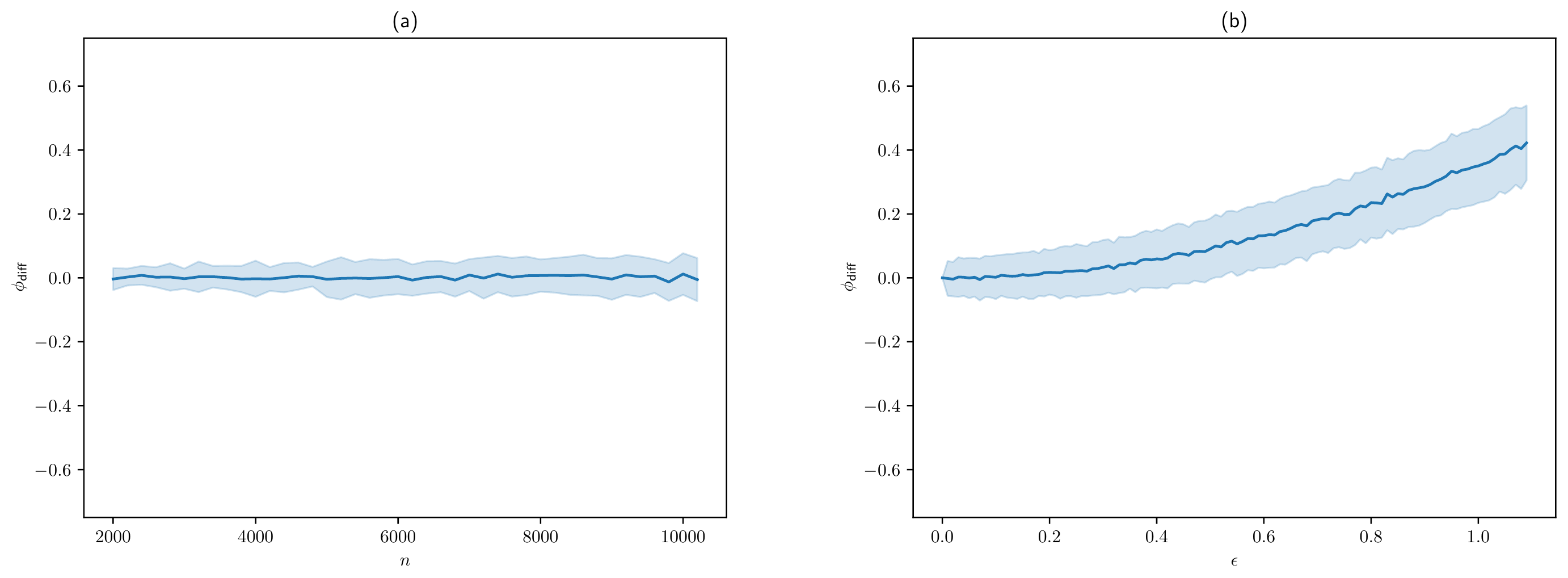}
    \caption{Effect of graph size and relative error on the quality of clusters found by spectral clustering on LFR graphs. (a) $\phi_{\text{diff}}$ vs. $n$ for fixed $\epsilon = 0.1$ with 200 random graphs per value of $n$; (b) $\phi_{\text{diff}}$ vs. relative error $\epsilon$ for fixed $n = 2000$ and 500 random graphs per value of $\epsilon$.}
    \label{fig:graph}
\end{figure}

Next we consider the effect of increasing relative error on the quality of clusters found by spectral clustering. Figure~\ref{fig:graph} shows, for randomly generated LFR graphs, the effect of increasing the graph size $n$ (left figure) and of increasing the relative error $\epsilon$ (right figure) on the value of $\phi_{\text{diff}}$. It is clear that the clusters do not become worse as $n$ increases (for fixed $\epsilon=0.1$), but do become worse as $\epsilon$ increases (here for fixed $n=2000$). This suggests that, as in the case of clustering using the unnormalized Laplacian, it suffices to choose a small, but \emph{constant} relative error $\epsilon$ to obtain good quality clusters via the approximate normalized Laplacian.

\section{Conclusion}
\label{sec:conclusion}

We have presented three quantum algorithms that provide a speedup over classical methods for performing motif clustering. Our speedup relies on quantum routines for finding or approximately counting the number of motif instances in the to-be-clustered graph. In the case of approximate quantum counting, we show that approximations up to only a constant relative error are sufficient for motif clustering using the unnormalized Laplacian of the motif graph, which allows us to obtain a quantum speedup in many cases.

This observation in fact holds more generally: if we perturb the weights of a graph with some constant relative error, then the graph can still be used to perform spectral clustering via the unnormalized Laplacian, which produces clusters whose RatioCut is close to the RatioCut that would have been obtained by performing spectral clustering on the unperturbed graph. It is interesting that the effect of the perturbation on the quality of the obtained clustering is independent of the size of the graph.

Our argument for the above claim fails, however, for normalized Laplacians: in this case, the spectrum of the Laplacian corresponding to the perturbed graph does not preserve the spectral structure of the unperturbed graph. However, when applied to randomly generated benchmark graphs, we find numerically that clustering using the normalized Laplacian of the perturbed graph does in fact generate clusters for which the conductance is close to the conductance of the clusters obtained by performing spectral clustering via the normalized Laplacian of the unperturbed graph, again independent of the size of the graph. An interesting open question would be to find out why, and under what conditions, clustering with the normalized Laplacian of the perturbed graph can be used to obtain a clustering with low conductance in the original graph.
 
In Appendix~\ref{sec:higher_order_motifs}, we discuss motif clustering of a graph $G$ using a motif with more than two anchor nodes. In particular, we show that clustering with a motif with three anchor nodes is equivalent to clustering with a weighted combination of two-anchor-node motifs. We continue to argue that, for motifs $M$ with four or more anchor nodes, a weighted combination of two-anchor-node motifs $M_1,\ldots, M_q$, corresponding to the motifs obtained by taking all possible pairs of anchor nodes of $M$, should be used in place of $M$ itself. The reason is that the motif graph $\fG$ constructed from $G$ and $M$ is itself a graph, and is therefore only capable of expressing pairwise relationships between vertices that occur as anchor nodes -- which is exactly what the motifs $M_1, \ldots, M_q$ represent -- and not higher-order relationships between sets of more than two vertices. If one wants the latter, then instead of constructing the motif graph $\fG$, one should construct a \emph{hypergraph} where the hyperedges represent the multi-vertex relationships expressed by $M$, and cluster on the hypergraph instead. 

There are existing classical algorithms for clustering on hypergraphs: see for example ~\cite{takai2020hypergraph} and references therein for a clustering method based on the so-called `personalized pagerank', which itself is based on the stationary distribution of a random walk on the hypergraph. An interesting open question would be to investigate if quantum walks can provide a speedup for the hypergraph clustering algorithm of~\cite{takai2020hypergraph}.

\paragraph{Acknowledgements}
We would like to thank Ian Marshall for his active participation in the early stages of the project, and for the many insights he had during our regular meetings. We would also like extend our gratitude to Simon Apers for his useful feedback and for pointing out that there are cases where not constructing the motif graph explicitly is the most efficient approach, as well as to Ronald de Wolf for reading an earlier version of this manuscript and providing feedback. In addition, we would like to thank Johan van Leeuwarden for clarifying a result in~\cite{janssen2019counting}, and Ton Poppe and Edo van Uitert at ABN AMRO for fruitful discussions on potential applications of quantum computing in transaction network analysis. Finally, we would like to thank an anonymous reviewer for pointing out a flaw in our proof of Lemma 11 and suggesting a fix.

\paragraph{Funding}
IN is a part of the DisQover project: a collaboration between QuSoft and ABN AMRO, and is partially funded by ABN AMRO. CC was supported by QuantERA project QuantAlgo 680-91-034. FL was supported by the Gravitation-grant NETWORKS-024.002.003 from the Dutch Research Council (NWO).

\appendix

\section{Motif isomorphisms}
\label{sec:motif_isomorphisms}

When using Algorithms~\ref{algo:motif_cluster_grover}, \ref{algo:motif_cluster_qcount} or~\ref{algo:motif_cluster_qcount_qcluster}, we count the number of tree walks with the property that the graph $G$ restricted to the vertex set of the constructed tree has the same edge structure as the motif $M$. Since each tree-walk is in one-to-one correspondence with a map $\iota: V_M \rightarrow V$, this means we are actually counting motif \emph{assignments}. However, when constructing the motif graph, given two vertices $u,v \in V$, we instead want to count the number of motif \emph{instances} that have $u$ and $v$ as anchor nodes, and therefore we must know how to obtain the latter from the former\footnote{See Section~\ref{sec:graph_motifs} for the definitions of motif assignments and motif instances, and for what it means for a motif instance to have a vertex as an anchor node.}.

As we will show in Lemma~\ref{lem:motif_isomorphisms} below, every motif instance corresponds to a constant number $S_M$ of motif assignments $\iota: V_M \rightarrow V$, and this constant $S_M$ only depends on the motif $M$ itself, and can be computed ahead of time. More precisely, $S_M$ is equal to the number of graph isomorphisms $f: V_M \rightarrow V_M$ of $M$ with the property that anchor nodes get mapped to anchor nodes: i.e. $f(V_A) = V_A$. It will be convenient to give such graph isomorphisms a name. In particular, for a motif $M$ with vertex set $V_M$ and anchor node set $V_A \subseteq V_M$, we say that a graph isomorphism $f: V_M \rightarrow V_M$ is a \emph{motif isomorphism} if $f$ maps anchor nodes to anchor nodes, i.e. $f(V_A) = V_A$. In Fig.~\ref{fig:motif_symmetries}, we compute $S_M$ for two example motifs.

\begin{figure}[!h]
\begin{center}
    \includegraphics[scale=1]{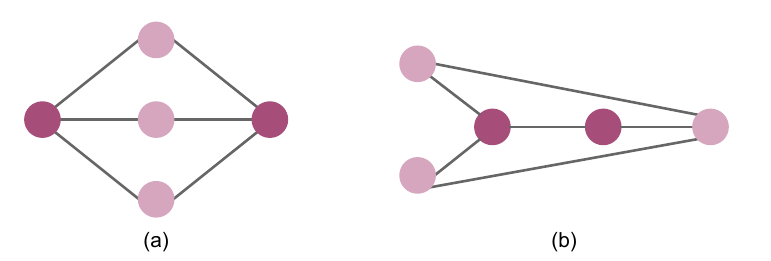}
\end{center}
\caption{Two example motifs. For motif (a), there are $S_M = 12$ motif isomorphisms, because there are six ways of interchanging the non-anchor nodes, and for each way of interchanging the non-anchor nodes we can choose to leave the anchor nodes invariant, or interchange them. For motif (b) the only graph isomorphism of the motif is the one that exchanges the left two non-anchor nodes, so $S_M = 2$.}
\label{fig:motif_symmetries}
\end{figure}

\begin{lemma}
\label{lem:motif_isomorphisms}
Let $G=(V,E)$ be an unweighted graph, $M=(V_M, V_A, E_M)$ a motif, and let $S_M$ be the number of motif isomorphisms of $M$. Then every motif instance of $G$ consists of exactly $S_M$ equivalent motif assignments $\iota:V_M \rightarrow V$.
\end{lemma}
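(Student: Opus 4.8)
The plan is to set up a group action and count orbits. Recall that a motif assignment is an injective map $\iota: V_M \to V$, and two assignments $\iota, \iota'$ are equivalent (i.e.\ define the same motif instance) precisely when $\iota(V_M) = \iota'(V_M)$ and $\iota(V_A) = \iota'(V_A)$ as sets. I would fix a single motif instance and show that the set of assignments in its equivalence class has size exactly $S_M$. So fix an assignment $\iota_0$ in the class; every other assignment $\iota$ in the class satisfies $\iota(V_M) = \iota_0(V_M)$, so $f := \iota_0^{-1} \circ \iota$ is a well-defined bijection $V_M \to V_M$.

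First I would check that $f$ is in fact a \emph{motif isomorphism} of $M$: since both $\iota$ and $\iota_0$ realize $M$ as a structural motif on the same vertex set $\iota_0(V_M) \subseteq V$ (edges \emph{and} non-edges coincide with those induced by $G$ on that set), $f$ must be a graph automorphism of $M$; and since $\iota(V_A) = \iota_0(V_A)$ as sets, $f$ restricts to a bijection $V_A \to V_A$, i.e.\ $f(V_A) = V_A$. This exhibits a map from the equivalence class of $\iota_0$ to the set of motif isomorphisms, given by $\iota \mapsto \iota_0^{-1}\circ\iota$. Conversely, given any motif isomorphism $f$, the composite $\iota_0 \circ f$ is an injective map $V_M \to V$ with image $\iota_0(V_M)$ and with $(\iota_0\circ f)(V_A) = \iota_0(f(V_A)) = \iota_0(V_A)$, hence lies in the equivalence class of $\iota_0$; one also needs that $\iota_0 \circ f$ is still a structural motif assignment, which is immediate because $f$ preserves edges and non-edges of $M$ and $\iota_0$ already matched the induced structure on $\iota_0(V_M)$.

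Then I would observe that the two maps $\iota \mapsto \iota_0^{-1}\circ\iota$ and $f \mapsto \iota_0\circ f$ are mutually inverse bijections between the equivalence class of $\iota_0$ and the set of motif isomorphisms of $M$. Since the latter set has cardinality $S_M$ by definition, the equivalence class (hence the motif instance) consists of exactly $S_M$ motif assignments, which is the claim. Note $S_M \geq 1$ since the identity on $V_M$ is always a motif isomorphism, so the count is well-defined and nonzero.

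The main obstacle — really the only subtle point — is being careful about the \emph{structural} (as opposed to functional) nature of the motif: I must make sure that composing $\iota_0$ with a graph automorphism of $M$ again yields a legitimate structural motif assignment in $G$, i.e.\ that no spurious edges appear or required edges disappear. This works exactly because structural assignments pin down the induced subgraph on $\iota_0(V_M)$ completely, so any automorphism of $M$ permutes within that rigidly determined structure. (For functional motifs the analogous statement would need more care, but the paper restricts to structural motifs, so this is fine.) A secondary bookkeeping point is to confirm that the condition ``$f(V_A) = V_A$ as sets'' is exactly the right one — it matches the definition of equivalence of assignments, which only requires equality of anchor-node \emph{sets}, not pointwise agreement — and this is precisely why the relevant group is the group of motif isomorphisms rather than the full automorphism group of $M$ or the pointwise stabilizer of $V_A$.
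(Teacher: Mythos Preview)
Your proof is correct and follows essentially the same approach as the paper: both establish a bijection between the equivalence class of a fixed assignment $\iota_0$ and the set of motif isomorphisms via $\iota \mapsto \iota_0^{-1}\circ\iota$ and its inverse $f \mapsto \iota_0 \circ f$. Your write-up is slightly more explicit about the bijection being two-sided and about why the structural (as opposed to functional) hypothesis matters, but the underlying argument is identical.
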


\begin{proof}
We need to show that (i) any two motif assignments that correspond to the same instance are related by a motif isomorphism $f: V_M \rightarrow V_M$, and (ii) that for any motif assignment $\iota$, $\iota \circ f$ is equivalent to $\iota$ for any motif isomorphism $f: V_M \rightarrow V_M$. 

For (i), let $\iota$ and $\iota'$ be two motif assignments that correspond to the same motif instance, that is $\iota(V_M) = \iota'(V_M)$, and $\iota(V_A) = \iota'(V_A)$. Let $H$ be the subgraph of $G$ obtained by restricting $G$ to the image of $\iota$ (or $\iota'$; both have the same image). Now,  $f = \iota^{-1} \circ \iota': V_M \rightarrow V_M$ is a composition of graph isomorphisms, hence a graph isomorphism, and trivially we have $f(V_A) = V_A$. Consequently, $\iota' = \iota \circ f$, for some motif isomorphism $f$ of $M$. For (ii), given a motif isomorphism $f: V_M \rightarrow V_M$ and a motif assignment $\iota: V_M \rightarrow V$, observe that $\iota' = \iota \circ f: V_M \rightarrow V$ is a motif assignment, since it is a graph isomorphism onto its image, and also we have $\iota'(V_M) = \iota(f(V_M)) = \iota(V_M)$ as well as $\iota'(V_A) = \iota(f(V_A)) = \iota(V_A)$, so both $\iota$ and $\iota'$ correspond to the same motif instance.
\end{proof}

Consequently, when constructing the motif graph $\fG$, for each $u,v \in V$ we can use Algorithm~\ref{algo:motif_cluster_grover} to count the number of motif assignments $\iota:V_M \rightarrow V$ with $u,v \in \iota(V_A)$, and then divide by $S_M$ to obtain $\fA_{uv}$: the number of motif instances with $u$ and $v$ as anchor nodes. 

Similarly, if we are using Algorithms~\ref{algo:motif_cluster_qcount} or~\ref{algo:motif_cluster_qcount_qcluster}, which employ tree walks starting from two fixed anchor nodes (say $a$ and $b$) to count motif assignments, rather than dividing by $S_M$, we have to divide by the number $S_M^{(a,b)}$ of motif isomorphisms that leave the two anchor nodes $a$ and $b$ fixed. Since  Algorithms~\ref{algo:motif_cluster_qcount} or~\ref{algo:motif_cluster_qcount_qcluster} assume the motif has two anchor nodes (which will be $a$ and $b$), $S_M^{(a,b)}$ is the number of isomorphisms of $M$ that permute non-anchor nodes amongst each other.
Like $S_M$, $S_M^{(a,b)}$ only depends on the motif itself and can be computed in advance.

Note that the group of motif isomorphisms as well as the subgroup of motif isomorphisms that leave two given anchor nodes fixed are both subgroups of the permutation group on $s$ vertices. In particular, because we take $s$ to be constant, so are $S_M$ and $S_M^{(a,b)}$, and therefore the overhead coming from the fact that we are counting motif assignments rather than motif instances is a constant multiplicative overhead that does not affect the complexity of Algorithms~\ref{algo:motif_cluster_grover}, \ref{algo:motif_cluster_qcount} or~\ref{algo:motif_cluster_qcount_qcluster}.

\section{Motif graph cuts for two-anchor-node motifs}
\label{app:motif_graph_cuts}

Let $G = (V,E)$ be an unweighted graph with $|V| = n$, $W \subset V$ a subset of the vertex set $V$ of $G$, $M = (V_M, E_M, V_A)$ a motif with two anchor nodes: $V_A = \{a, b\}$, and $x: V \rightarrow \{-1,1\}^n$ the indicator function of $W$ that assumes the value 1 for vertices in $W$, and $-1$ for vertices outside $W$. If $\fG$ is the motif graph constructed from $G$ and $M$, then  $\text{cut}_{(G,M)}(W)$ is equal to $\text{cut}_{\fG}(W)$. 

Indeed, following the proof technique from~\cite{benson16}, we get the following equalities:
\begin{align*}
    \text{cut}_{(G,M)}(W) &= \frac{1}{4}  \sum_{\iota \in \mI} (x_{\iota(a)} - x_{\iota(b)})^2 \\
    &= \frac{1}{4} \sum_{\iota \in \mI} \left(  x_{\iota(a)}^2 + x_{\iota(b)}^2 - x_{\iota(a)} x_{\iota(b )} - x_{\iota(b)} x_{\iota(a)} \right) \\
    &=  \frac{1}{4} \left(x^T \fD x - x^T \fA x \right) \\
    &=  \frac{1}{4} x^T \fL x \\
    &= \text{cut}_{\fG}(W),
\end{align*}
where $\fA$, $\fD$ and $\fL = \fD - \fA$ are the adjacency matrix, degree matrix and Laplacian of $\fG$ respectively. Note that the factor 4 that appears in the one to last line to compensate for the factor $\frac{1}{4}$ arises because the indicator function $x$ takes values in $\{-1,1\}$ rather than $\{0,1\}$.

\section{Pre-processing the input graph}
\label{app:adjacency_lists}

All our algorithms require coherent access to the adjacency lists of the input graph. Moreover, 
the algorithms based on quantum approximate counting in Section~\ref{sec:motif_clustering_quantum_counting} assume that the adjacency lists of the input graph are sorted. If any of these conditions are not met, then we have to pay additional costs up front to pre-process the input graph, meaning that we have to either load the graph to QRAM, sort all adjacency lists, or do both. In this section we discuss what effect pre-processing the input graph has on the run-times of the algorithms presented in this work.

Let $G$ be the $n$-vertex input graph to which we have adjacency list access, with maximum degree $d$. Let $M$ be the $s$-vertex motif that we want to use to cluster $G$ with. If we have classical access to the input graph, then we first have to load it to QRAM in time $O(nd)$. If we have coherent access to the input graph, but the adjacency lists are not sorted, then we have two options: sort all adjacency lists beforehand in $\tilde{O}(nd)$ time, or keep the lists unsorted, in which case finding an element in an adjacency list takes $O(d)$ classically, or $\tilde{O}(\sqrt{d})$ using Grover search.

Recall that $s \geq 3$, as $s=2$ implies the motif is an edge, in which case there is no point in doing motif clustering. Also, the distance $l$ (number of edges) between any two anchor nodes satisfies $l\geq 1$.

\paragraph{Classical} The classical algorithm of Benson et al.~\cite{benson16} is happy to pay $\tilde{O}(nd)$ up front to sort the adjacency lists, as this always takes less time than the time it takes to find all motif instances. Hence, including pre-sorting the adjacency lists, the total run-time remains
\[
    \tilde{O}(nd^{s-1} + \Tk)
\]
for the entire classical $k$-means motif spectral clustering algorithm. 

\paragraph{Quantum via Grover search} 
Given classical access, we have to pre-process the input graph beforehand in time $\tilde{O}(nd)$, resulting in an expected run-time of 
\[
    \tilde{O}(nd + \sqrt{nd^{s-1} \fM} + \Tk)\, .
\]
If we are given coherent access, and the adjacency lists are sorted, then we have an expected run-time of
\[
    \tilde{O}(\sqrt{nd^{s-1} \fM} + \Tk)\, .
\]
If we have coherent access but the adjacency lists are unsorted, and we choose not to sort them in advance but use Grover search instead to check if and where a given node in a given motif instance occurs in the adjacency list of another node in the motif instance (at the cost of an extra factor of $\tilde{O}(\sqrt{d})$), then total expected run-time becomes
\[
    \tilde{O}(\sqrt{nd^{s} \fM} + \Tk)\, .
\]


\paragraph{Quantum via approximate counting and classical spectral clustering}
Since  $s \geq 3$ and $l \geq 1$, we have that $l  + \frac{s}{2} -1 > 1$, and therefore pre-processing the input graph in time $\tilde{O}(nd)$ does not affect the complexity of version of quantum motif clustering. The expected run-time of the algorithm including pre-processing remains
\[
    \tilde{O}(nd^{l + \frac{s}{2} -1} + \Tk).
\]

\paragraph{Quantum via approximate counting and quantum spectral clustering} As before, we can pre-process the input graph in time $\tilde{O}(nd)$ without changing the run-time of the algorithm: indeed, since $s\geq 3$, $nd \leq \sqrt{n^3 d} \leq \sqrt{n^3 d^{s-2}}$. Therefore, the expected run-time of this algorithm including pre-processing is given by
\[
    \tilde{O}\left(\sqrt{n^3 d^{s-2}} + \Tk \right).
\]

\section{Higher-order motifs}
\label{sec:higher_order_motifs}

Next, we consider more closely the role of anchor nodes. Recall from Section~\ref{sec:motif_cuts} that anchor nodes in the motif are the nodes that determine which graph cuts count as a motif cut and which do not. For motifs with two anchor nodes, the motif itself expresses a pairwise relation between both of its anchor nodes, and motif clustering can be thought of as clustering the original graph after applying some kind of filter to it: a filter that removes all connections except those that fit the motif pattern, see Fig~\ref{fig:motif_clustering}. 

We can also perform motif clustering for motifs with more than two anchor nodes, and as~\cite{benson16} show, doing so makes sense also for motifs with three anchor nodes. At first, it seems that a motif with three anchor nodes expresses relationships between three vertices. However, by construction the motif graph is still a graph, which captures only pairwise relationships. This begs the question: what is the interpretation of clustering using a motif with three anchor nodes? In the sections that follow, we address this question in detail.

\subsection{Multiple motifs}
\label{app:multiple_motifs}

Recall from the construction of the motif graph that we add $+1$ to the weight of every edge $(u,v)$ of the motif graph $\fG$ for every motif instance in $G$ with $u$ and $v$ as anchor nodes. This suggests that the motif graph obtained from a motif with multiple anchor nodes can be seen as a sum of two-anchor-node motif graphs: one for each pair of anchor nodes in the original motif. In the next subsection, we will argue that clustering using a motif with three anchor nodes is equivalent to clustering based on a combination of motifs with two anchor nodes. Before we can make this statement precise, we first follow the supplementary material of Benson et al.~\cite{benson16} in order to explain what it means to use motif clustering based on a collection of motifs\footnote{This is also how functional motifs can be included in the formalism.}. 

Given motifs $M_1, \ldots, M_q$, coefficients $\alpha_1, \ldots, \alpha_q \in \R$ such that $\alpha_j > 0$ for all $j\in [q]$, and a vertex subset $W \subset V$, we can consider weighted motif cuts
\[
    \text{cut}_{(G, \{\alpha_j, M_j\}_{j=1}^q)} (W) := \sum_{j=1}^q \alpha_j \text{cut}_{(G, M_j)} (W)\,,
\]
and the weighted motif volume
\[
    \text{vol}_{(G, \{\alpha_j, M_j\}_{j=1}^q)} (W) := \sum_{j=1}^q \alpha_j \text{vol}_{(G, M_j)} (W) \,.
\]
We can also define the corresponding weighted motif conductance 
\[
    \phi_{(G, \{\alpha_j, M_j\}_{j=1}^q)}(W) := \frac{\text{cut}_{(G, \{\alpha_j, M_j\}_{j=1}^q)} (W)}{\text{vol}_{(G, \{\alpha_j, M_j\}_{j=1}^q)} (W)}\,.
\]
and weighted motif ratio cut 
\[
    \text{RatioCut}_{(G, \{\alpha_j, M_j\}_{j=1}^q)}(W) := \frac{\text{cut}_{(G, \{\alpha_j, M_j\}_{j=1}^q)} (W)}{|W|}\,,
\]
which naturally extend to partitions $W_1, \ldots, W_k$ of $V$ as follows:
\[
    \phi_{(G, \{\alpha_j, M_j\}_{j=1}^q)}(W_1, \ldots, W_k) := \sum_{i=1}^k \phi_{(G, \{\alpha_j, M_j\}_{j=1}^q)}(W_i)\,,
\]
for weighted motif conductance, and 
\[
    \text{RatioCut}_{(G, \{\alpha_j, M_j\}_{j=1}^q)}(W_1, \ldots, W_k) := \sum_{i=1}^k \text{RatioCut}_{(G, \{\alpha_j, M_j\}_{j=1}^q)}(W_i)\,,
\]
for weighted motif ratio cut.

\

\noindent Now, if we are interested in finding partitions of the vertex set $V$ of $G$ that approximately minimize the weighted motif conductance or weighted motif ratio cut in $G$ with respect to weights $\{\alpha_j\}_{j=1}^q$ and motifs $\{M_j\}_{j=1}^q$ then, as before, we can instead minimize ordinary conductance or ordinary ratio cut of the weighted motif graph defined below, \emph{as long as} the motifs $M_j$ either all have two anchor nodes -- the case we will use -- or all have three anchor nodes. 

In order to construct the weighted motif graph, we construct the motif graph $\fG_j$ with motif adjacency matrix $\fA_j$ for every $j\in [q]$, and then take a weighted linear combination ($\alpha_j \geq 0$ for $j \in [q]$)
\[
    \fA_\Sigma:= \sum_{j=1}^q \alpha_j \fA_j
\]
to construct a single weighted motif graph $\fG_\Sigma$ that combines all motifs according to the weights $\alpha_j$ (which determine the relative importance of each motif $M_j$). Writing $N_A$ for the number of anchor nodes that all motifs have (recall that they should all have the same amount of anchor nodes, either two or three), then weighted motif cuts in the original graph can be directly related to ordinary cuts in the weighted motif graph. Specifically, given a subset $W \subset V$ of the vertex set $V$ of $G$, we have from Lemma~\ref{lem:benson1} that
\begin{equation}
    \sum_{j=1}^q \alpha_j \text{cut}_{(G, M_j)} (W) = \sum_{j=1}^q \alpha_j \, c \, \text{cut}_{\fG_j}(W) = c \, \text{cut}_{\fG_\Sigma}(W)\,,
    \label{eq:weigted_motif_cut}
\end{equation}
where $c=1$ if $N_A = 2$, and $c=\frac{1}{2}$ if $N_A = 3$, and
\begin{equation}
    \sum_{j=1}^q \alpha_j \text{vol}_{(G, M_j)} (W) = \sum_{j=1}^q \frac{\alpha_j}{N_A -1} \text{vol}_{\fG_j}(W) = \frac{1}{N_A - 1} \text{vol}_{\fG_\Sigma}(W)\,.
    \label{eq:weighted_motif_vol}
\end{equation}
Consequently, we can apply apply spectral clustering to the normalized or unnormalized Laplacian of $\fG_\Sigma$, respectively, to obtain a partition of $V$ that approximately minimizes either the weighted motif conductance or ratio cut.

\subsection{Motifs with more than two anchor nodes}
\label{sec:motifs_more_than_two}

Having introduced weighted linear combinations of motifs, we next turn to motifs with more than two anchor nodes. Let $G$ be a graph, $M = (V_M, E_M, V_A)$ be a motif with more than $|V_A| > 2$ anchor nodes, and let $\fG_M$ be the corresponding motif graph\footnote{We add the $M$-dependency to the notation explicitly in this subsection.} with adjacency matrix $\fA_M$. In this section we will show that the motif graph $\fG_M$ constructed from $G$ and $M$ is equal to the motif graph obtained by taking a weighted sum of all two-anchor-node motifs contained in $M$, with the weights chosen as described below. Subsequently, we will use this result to show that motif clustering using a three-anchor-node motif is equivalent to clustering using a combination of two-anchor-node motifs.

\subsubsection{The motif graph for motifs with more than two anchor nodes}
\label{sec:motif_graph_more_than_two}

Let $A_2(M) = \{M_j\}_{j=1}^q$ be the set of all two-anchor motifs with the same vertex and edge sets as $M$ such that the anchor node sets of the motifs $\{M_j\}_{j=1}^q$ enumerate all possible two-element subsets of $V_A$, modulo motif isomorphisms. 

For example, if $M = (V_M, E_M, V_A)$ has three anchor nodes $V_A = \{a,b,c\}$, then in the absence of motif isomorphisms $A_2(M)=\{M_1, M_2, M_3\}$ consists of three motifs given by $M_1 = (V_M, E_M, \{a,b\})$, $M_2 = (V_M, E_M, \{b,c\})$ and $M_3 = (V_M, E_M, \{c,a\})$. However, if it so happens to some of the motifs in $A_2(M)$ are motif-isomorphic, then we only keep one motif per equivalence class. E.g.~for the three-anchor-node motif in the top left of Fig.~\eqref{fig:anchor_node_symmetry}, we observe that $M_2$ and $M_3$ are motif isomorphic. Hence, $A_2(M) = \{M_1, M_2\}$ or $A_2(M) = \{M_1, M_3\}$ (it does not matter which of the two we pick --- see below).
\begin{figure}[!h]
    \centering
    \includegraphics[scale=1]{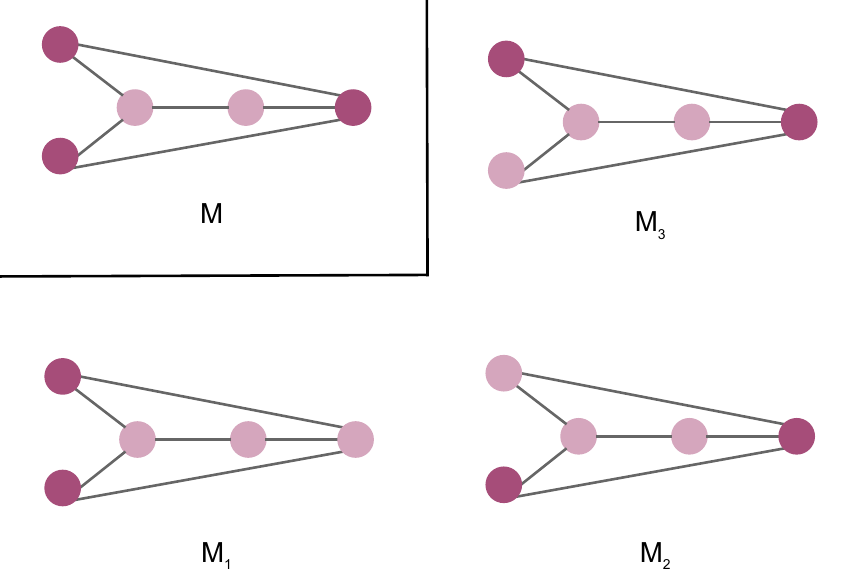}
    \caption{Top left: three-anchor-node motif with a symmetry $M$. For clustering we need $M_1$ and either $M_2$ or $M_3$ (because the latter two motifs give rise to the same motif instances in the graph $G$).}
    \label{fig:anchor_node_symmetry}
\end{figure}

Next, for every two-anchor node motif $K \in A_2(M)$, we define the weight $\omega_K$ as follows. Let $u,v \in V_M$ be the two anchor nodes of $K$. Now, $\omega_K$ is given by the number of ways the remaining $|V_A| - 2$ anchor nodes (i.e~all anchor nodes except $\{u,v\}$) can be assigned to the graph $(V_M, E_M)$ such that the motif structure of $M$ is respected. In other words: $\omega_K$ is the number of motif instances $\iota$ of $M$ in the graph $(V_M, E_M)$ for which  $\{u,v\} \in \iota(V_A)$. An example of a motif and the corresponding two-anchor-node weights is given in Fig.~(\ref{fig:three-anchor-node-symmetry}).

\begin{figure}[!h]
    \centering
    \includegraphics[scale=0.5]{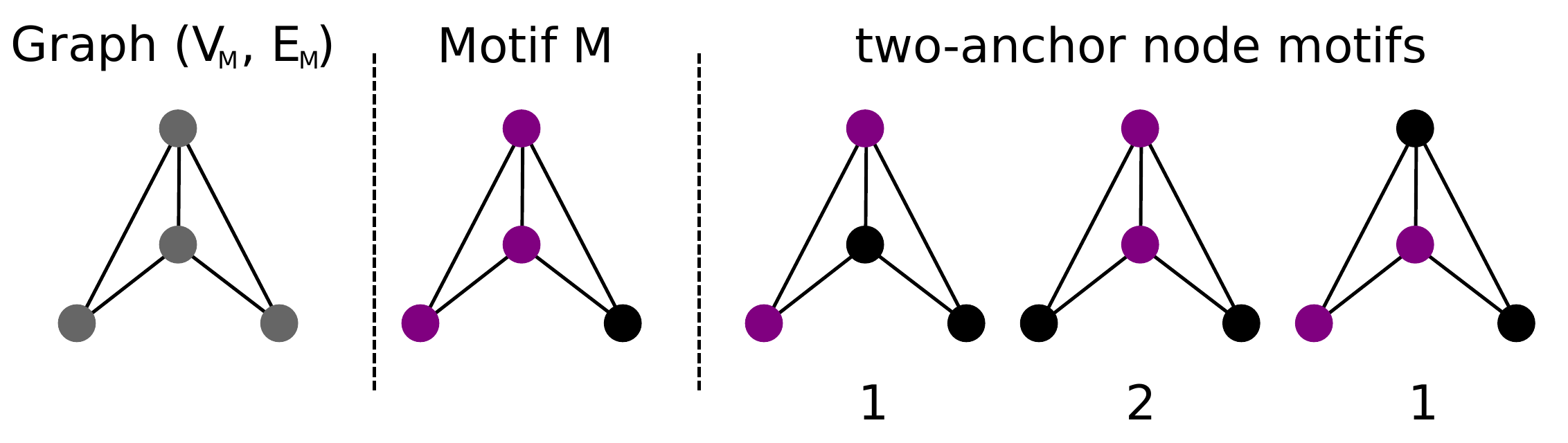}
    \caption{Left: graph $(E_M, V_M)$ and the  three-anchor-node motif $M$. Right: all three two-anchor-node motifs in $A_2(M)$, with their corresponding weights displayed by the integers below. Indeed, for the middle two-anchor-node motif, the remaining anchor node (bottom left in $M$) can be mapped to either two bottom nodes of $(V_M, E_M)$, as both mappings respect the motif structure, so the weight is 2. For the other two-anchor-node motifs, there is only one vertex that the remaining anchor node can be mapped to such that the motif structure is preserved.}
    \label{fig:three-anchor-node-symmetry}
\end{figure}

Now, define the weighted motif graph $\fG_{A_2(M)}$ obtained by taking the weighted sum of all two-anchor-node motifs in $K \in A_2(M)$ weighted by $\omega_K$. Its adjacency matrix is given by
\[
    \fA_{A_2(M)} = \sum_{K \in A_2(M)} \omega_K  \fA_K,
\]
where for each $K \in A_2(M)$, $\fA_K$ is the adjacency matrix of the motif graph obtained from $G$ and the two-anchor-node motif $K$. 

Note that $\fA_{A_2(M)}$ is independent of the choice of representative from each motif isomorphism class\footnote{By symmetry, different representative have the same weight.}. In the example of Figure~\ref{fig:anchor_node_symmetry}, the adjacency matrix $\fA_{M_2}$ obtained by finding all instances of $M_2$ in $G$ is the exact same matrix as $\fA_{M_3}$ obtained by finding all instances of $M_3$ in $G$, and therefore (in this case that all weights are equal to one), $\fA_{A_2(M)} = \fA_{M_1} + \fA_{M_2} = \fA_{M_1} + \fA_{M_3}$.

\begin{lemma}
For a graph $G$ and motif $M = (V_M, E_M, V_A)$, the motif adjacency matrices $\fA_M$ and $ \fA_{A_2(M)}$ as defined above, and hence also the motif graphs $\fG_M$ and $\fG_{A_2(M)}$, are equal.
\end{lemma}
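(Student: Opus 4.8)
Fix two distinct vertices $u,v\in V$; it suffices to show $\fA_{M,uv}=\sum_{K\in A_2(M)}\omega_K\,\fA_{K,uv}$ (the diagonal entries are vacuous, the motif graph being loopless). The plan is to count everything at the level of motif \emph{assignments} $\iota\colon V_M\to V$ and then divide by the symmetry factors supplied by Lemma~\ref{lem:motif_isomorphisms}. For a $2$-subset $\{a,b\}\subseteq V_M$, let $N_{\{a,b\}}(u,v)$ denote the number of motif assignments $\iota$ of $(V_M,E_M)$ into $G$ with $\iota(\{a,b\})=\{u,v\}$. By Lemma~\ref{lem:motif_isomorphisms} applied to $M$, the motif assignments with $\{u,v\}\subseteq\iota(V_A)$ split into full instance-classes of size $S_M$ each; and since $\iota$ is injective, each such assignment determines the $2$-subset $\iota^{-1}(\{u,v\})\in\binom{V_A}{2}$, so
\[
    \fA_{M,uv}\;=\;\frac{1}{S_M}\sum_{\{a,b\}\in\binom{V_A}{2}}N_{\{a,b\}}(u,v).
\]
Similarly, for each two-anchor motif $K=(V_M,E_M,\{a,b\})\in A_2(M)$, Lemma~\ref{lem:motif_isomorphisms} gives $\fA_{K,uv}=N_{\{a,b\}}(u,v)/S_K$, where $S_K$ is the number of automorphisms of $(V_M,E_M)$ fixing $\{a,b\}$ setwise.

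The heart of the argument is the observation that precomposing with a graph automorphism $f$ of $(V_M,E_M)$ permutes the set of motif assignments into $G$ and carries $\{\iota:\iota(\{c,d\})=\{u,v\}\}$ bijectively onto $\{\iota:\iota(f(\{c,d\}))=\{u,v\}\}$. From this I would extract two facts: (i) both $N_{\{a,b\}}(u,v)$ and $S_{K_{\{a,b\}}}$ depend only on the orbit of $\{a,b\}$ under $\mathrm{Aut}(V_M,E_M)$, i.e.\ only on the motif-isomorphism class of $K_{\{a,b\}}$ — which is exactly the equivalence used to form $A_2(M)$; and (ii) the identity
\[
    S_M\cdot\omega_K\;=\;|\mathcal{C}_K|\cdot S_K,
\]
where $\mathcal{C}_K\subseteq\binom{V_A}{2}$ is the set of $2$-subsets of $V_A$ motif-isomorphic to the anchor pair of $K$. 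Identity (ii) I would prove by double-counting the pairs $(f,T)$ with $f\in\mathrm{Aut}(V_M,E_M)$, $T=f(V_A)$, $\{a,b\}\subseteq T$: grouping by $T$ gives $S_M$ automorphisms for each of the $\omega_K$ admissible subsets $T$ (reading the definition of $\omega_K$ as "distinct images $f(V_A)$ in the orbit of $V_A$ that contain $\{a,b\}$"), while grouping by $\{c,d\}:=f^{-1}(\{a,b\})$, which ranges precisely over $\mathcal{C}_K$, gives $S_K$ automorphisms for each such $\{c,d\}$.

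Finally I would assemble the two sides: grouping $\sum_{\{a,b\}}N_{\{a,b\}}(u,v)$ by motif-isomorphism class and using (i) rewrites $\fA_{M,uv}$ as $\frac{1}{S_M}\sum_{K\in A_2(M)}|\mathcal{C}_K|\,N_{\{a_K,b_K\}}(u,v)$, whereas $\sum_{K\in A_2(M)}\omega_K\fA_{K,uv}=\sum_{K\in A_2(M)}\frac{\omega_K}{S_K}N_{\{a_K,b_K\}}(u,v)$; these agree term by term by (ii). As a by-product this shows $\omega_K$ depends only on the motif-isomorphism class, so $\fA_{A_2(M)}$ is independent of the chosen representatives, justifying the remark following the definition of $A_2(M)$. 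I expect the main obstacle to be bookkeeping rather than depth: one must pin down that "modulo motif isomorphisms" means modulo $\mathrm{Aut}(V_M,E_M)$ acting on $2$-subsets of $V_A$ (not the finer action of the motif-automorphism group $\mathrm{Aut}_{V_A}(V_M,E_M)$), keep the equivalence relations on assignments and on instances straight, and stay consistent about ordered versus unordered anchor pairs when using $u\neq v$.
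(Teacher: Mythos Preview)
Your proof is correct, and it reaches the same conclusion by a genuinely different route than the paper.

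The paper localizes: it fixes an $s$-vertex subset $V'\subseteq V$, observes that only subsets with $G|_{V'}$ graph-isomorphic to $(V_M,E_M)$ contribute, identifies $V'\simeq V_M$ via some isomorphism, and then reads off the contribution to $(\fA_M)_{uv}$ directly from the \emph{definition} of $\omega_{\tilde K}$ (with $\tilde K=(V_M,E_M,\{u,v\})$), while noting that exactly one representative $K\in A_2(M)$ is motif-isomorphic to $\tilde K$ and contributes a single instance with weight $\omega_K=\omega_{\tilde K}$. No symmetry factors $S_M,S_K$ or assignment counts appear; the definition of $\omega_K$ is engineered precisely so that this comparison is immediate.

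Your approach instead works globally at the level of assignments and then quotients by automorphisms, proving the orbit--stabilizer style identity $S_M\,\omega_K=|\mathcal C_K|\,S_K$ by a clean double count. This is longer but more structural: it makes explicit why the weights $\omega_K$ are exactly what is needed to compensate for the different symmetry groups of $M$ and of the two-anchor motifs, and it yields the representative-independence of $\omega_K$ and of $\fA_{A_2(M)}$ as a genuine corollary rather than a side remark. The paper's argument is shorter and explains the definition of $\omega_K$; yours is more algebraic and would transplant more readily to variants (e.g.\ other decompositions of the anchor set, or functional motifs). Both are complete.
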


\begin{proof}
Let $s = |V_M|$ be the number of vertices of the motif $M$. In order to prove that $\fA_M = \fA_{A_2(M)}$, it suffices to check that each $s$-sized set of distinct vertices of $G$ contributes equally to both $\fA_M$ and $\fA_{A_2(M)}$. Hence, choose $s$ distinct vertices $V' = \{v_1, \ldots, v_s\}$ of $G$, and let $G'$ be the graph $G$ restricted to $V'$. In order to compute what $G'$ contributes to $\fA_M$, we need to (1) find all motif instances of $M$ in $G'$, and (2) for each motif instance, add $+1$ to the weight of each edge connecting two anchor nodes of the motif instance in question. We then do the same for all motifs that make up $\fA_{A_2(M)}$, and check that the two contributions are equal.

There are two options, either there is a vertex assignment $\iota : V_M \rightarrow V'$ such that $\iota$ is a graph isomorphism from $(V_M,E_M)$ to $G'$, or no such assignment exists. In the latter case, $G'$ contributes nothing, hence equally, to both $\fA_M$ and $\fA_{A_2(M)}$, since all motifs in $A_2(M)$ have the same edge pattern as $M$ does. Thus, we need to only focus on the former case, for which $G'$ is graph-isomorphic to $(V_M, E_M)$, and we can identify $V' \simeq V_M$.

Assuming $G' \simeq (V_M, E_M)$, we now need to show that for every vertex pair $u,v \in V'$, (i) the contribution of all motif instances of $M$ in $G'$ to $(\fA_{M})_{uv}$ is equal to (ii) the sum of contributions of motif instances in $G'$ of all motifs $K\in A_2(M)$ to $(\fA_{K})_{uv}$ weighted by their weights $\omega_K$. Hence, pick a pair $u,v \in V' \simeq V_M$. If no motif instance of $M$ exists for which $u$ and $v$ are anchor nodes, then neither will a motif instance of any $K\in A_2(M)$ exist that has $u$ and $v$ as anchor nodes, and the contributions (i) and (ii) will both be zero, hence equal.


If, on the other hand a motif instance of $M$ exists for which $u$ and $v$ are anchor nodes, then (i) is given by the number of motif instances $\iota$ in $G' \simeq (V_M, E_M)$ for which both $u$ and $v$ are anchor nodes (i.e.~$u,v \in \iota(V_A)$). By definition, this number is exactly equal to $\omega_{\tilde{K}}$, where $\tilde{K}$ is the two-anchor-node motif $\tilde{K} = (V_M, E_M, \{u,v\})$. Because $A_2(M)$ contains one motif of each motif-isomorphism class of two-anchor-node motifs of $M$, there is exactly one motif in $A_2(M)$ -- either $\tilde{K}$ itself, or one of the isomorphic two-anchor node motifs that we pick as representative -- that has a motif instance in $(V_M, E_M)$ of which $u$ and $v$ are anchor nodes, and this motif has weight $\omega_{\tilde{K}}$. Therefore $G'$ contributes the same, namely $\omega_{\tilde{K}}$, to both $(\fA_{M})_{uv}$ and $(\fA_{A_2(M)})_{uv} = \sum_{K\in A_2(M)} \omega_K (\fA_{K})_{uv}$.


Since this analysis holds for every vertex pair $u,v \in V'$, we conclude that $G'$ contributes equally to $\fA_M$ and $\fA_{A_2(M)}$. Because the above analysis holds for every $s$-sized set of distinct vertices of $G$, we conclude that the adjacency matrices $\fA_M$ and $\fA_{A_2(M)}$ are equal, and therefore $\fG_M = \fG_{A_2(M)}$. 
\end{proof}
\noindent As a consequence, if we want to construct the motif graph $\fG_M$ for any given motif $M$, we can instead construct the motif graphs $\fA_K$ for every $K \in A_2(M)$, and then sum the resulting motif adjacency matrices weighted with $\omega_K$ to obtain $\fA_M$. 

\

\noindent We can also do the above approximately. That is, for some fixed $\epsilon >0$, if we have for every $K \in A_2(M)$ an approximation $\tilde{\fA}_K$ of $\fA_K$ such that for every $u,v\in V$
\[
    (1-\epsilon) (\fA_K)_{uv} \leq (\tilde{\fA}_K)_{uv} \leq (1+\epsilon) (\fA_K)_{uv}\,,
\]
then by summing over $K \in A_2(M)$ weighted with $\omega_K$, we will obtain an approximation $\tilde{\fA}_M = \sum_{K\in A_2(M)} \omega_K \tilde{\fA}_K$ of  $\fA_M = \sum_{K\in A_2(M)} \omega_K \fA_K$ up to relative error $\epsilon$ for each $u,v\in V$:
\[
    (1-\epsilon) (\fA_M)_{uv} \leq (\tilde{\fA}_M)_{uv} \leq (1+\epsilon) (\fA_M)_{uv}.
\]
In particular, if each $\tilde{\fA}_K$ were constructed via quantum approximate counting with relative error $\epsilon$, then the sum $\tilde{\fA}_M = \sum_{K\in A_2(M)} \omega_K \tilde{\fA}_K$ will approximate  (coefficient-wise) the motif adjacency matrix $\fA_M$ also up to relative error $\epsilon$.\footnote{If we use this in the context of Section~\ref{sec:quantum_algorithms}, then to ensure that the entire procedure succeeds with probability at least $1-\delta$, for given $K$ and given $u,v \in V$ we need to run Algorithm~\ref{algo:motif_count} with success probability $1- \delta / (|A_2(M)|n^2)$ (rather than $1- \delta / n^2$) in line 7 of Algorithm~\ref{algo:motif_count}, which yields an extra factor of $\log(|A_2(M)|)$ to both the query count and the number of other operations in Lemma~\ref{lem:approx_adjacencymatrix}.}

\subsubsection{Three-anchor-node motifs as a combination of two-anchor-node motifs}
\label{sec:three_as_two}

Using the results from the previous subsection, we can reduce the case of clustering with a three-anchor-node motif to that of clustering using three separate two-anchor-node motifs.

Let $G$ be a graph, $M$ a three-anchor-node motif with corresponding motif graph $\fG_M$, and let $A_2(M)$ and $\fG_{A_2(M)}$ be as in the subsection above. Now, For any subset $W \subset V$, we have
\begin{align*}
    \text{cut}_{(G,M)}(W) &= \frac{1}{2} \text{cut}_{\fG_M}(W) \\
    &= \frac{1}{2}\text{cut}_{\fG_{A_2(M)}}(W) \\
    &= \frac{1}{2} \sum_{K \in A_2(M)} \omega_K \text{cut}_{\fG_K} (W) \\
    &= \frac{1}{2} \sum_{K \in A_2(M)} \omega_K \text{cut}_{(G,K)} (W) \numberthis \label{eq:cut-two-three-anchors}
\end{align*}
where in the first line we used Lemma~\ref{lem:benson1} and the fact that $M$ has three anchor nodes (hence the factor $\frac{1}{2}$), in the second we use the fact that $\fG_M = \fG_{A_2(M)}$, in the third we use that $\fG_{A_2(M)}$ is a weighted sum of all two-anchor-node motifs $K \in A_2(M)$, and in the last line we again use Lemma~\ref{lem:benson1} for each two-anchor-node motif $K$.


Fig~\ref{fig:cut_three_anchor_nodes} shows an example of how a cut in a three-anchor-node motif $M$ in the original graph (left motif of the figure) cuts two out of the three motifs $M_1$, $M_2$ and $M_3$ (right three motifs in figure), visualizing the factor $\frac{1}{2}$ in Eq.~\eqref{eq:cut-two-three-anchors} (note that all weights are one in this case). 
\begin{figure}[!h]
    \centering
    \includegraphics[scale=1]{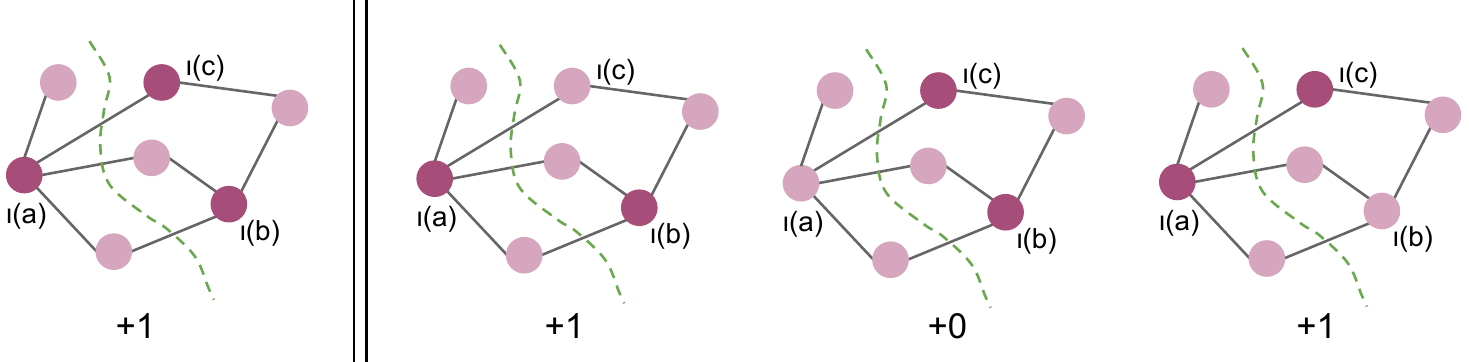}
    \caption{\emph{Left} --- A motif instance of $M$ with three anchor nodes (shown in purple) cut into two: so this motif contributes $+1$ to the left hand side of Eq.~\eqref{eq:cut-two-three-anchors}. \emph{Right} --- the same cut separates anchor nodes for two out of the three motif instances of $M_1, M_2, M_3$ corresponding to the motif instance of $M$ on the left, and therefore also contributes $+1$ to the right hand side of Eq.~\eqref{eq:cut-two-three-anchors}.}
    \label{fig:cut_three_anchor_nodes}
\end{figure}

Similar to Eq~\eqref{eq:cut-two-three-anchors}, and using the exact same derivation but now with `$\text{vol}$' replaced by `$\text{cut}$', we find that the motif volumes are related by:
\begin{equation}
     \text{vol}_{(G,M)}(W) =  \frac{1}{2} \sum_{K \in A_2(M)} \omega_K \text{vol}_{(G,K)} (W)\, .
     \label{eq:vol-two-three-anchors}
\end{equation}
Consequently, we have
\[
    \phi_{(G,M)}(W) = \phi_{(G, A_2(M))}(W)
\]
and
\[
    \text{RatioCut}_{(G,M)}(W) = \frac{1}{2}\text{RatioCut}_{(G, A_2(M))}(W)\, .
\]
Where in the notation for conductance and RatioCut above '$A_2(M)$' implies that the weights $\{\omega_K\}_{K\in A_2(M)}$ are used to compute the weighted cuts and volumes.

Because the above holds for every subset $W\subset V$, it also holds for partitions $\{W_i\}_{i=1}^k \in \mP_k(V)$. Hence, we also have that 
\[
    \argmin_{\{W_i\}_{i=1}^k \in \mP_k(V)}\phi_{(G,M)}(W_1, \ldots, W_k) = 
    \argmin_{\{W_i\}_{i=1}^k \in \mP_k(V)}\phi_{(G, A_2(M))}(W_1, \ldots, W_k) 
\]
and
\[
    \argmin_{\{W_i\}_{i=1}^k \in \mP_k(V)}\text{RatioCut}_{(G,M)}(W_1, \ldots, W_k) = 
    \argmin_{\{W_i\}_{i=1}^k \in \mP_k(V)}\text{RatioCut}_{(G, A_2(M))}(W_1, \ldots, W_k). 
\]

In conclusion, performing motif clustering on any three-anchor-node motif $M$ is equivalent to performing motif clustering on $\omega_K$-weighted combination of all motifs in $A_2(M)$ obtained by considering all possible pairs of anchor nodes of $M$, modulo motif isomorphisms.

\subsection{Motifs with more than three anchor nodes, and hypergraphs}
\label{sec:hypergraphs}

For motifs $M$ with more than three anchor nodes, the motif graph $\fG_M$ is still equal to the motif graph $\fG_{A_2(M)}$ corresponding to the $\omega_K$-weighted sum of all two-anchor-node motifs in $A_2(M)$. As a consequence, we can apply spectral clustering to the motif graph $\fG_{A_2(M)}$ to obtain a $k$-partition $\{W_1, \ldots, W_k\}$ of the vertex set $V$ that approximately minimizes $\text{RatioCut}_{(G, A_2(M))}(W_1, \ldots, W_k)$ or $\phi_{(G, A_2(M))}(W_1, \ldots, W_k)$ of the $\omega_K$-weighted sum of all motifs in $A_2(M)$.

However, what no longer holds is that the above RatioCut and conductance are proportional to the RatioCut and conductance corresponding to the motif $M$. In other words: if $M = (V_M, E_M, V_A)$ is a motif with $|V_A| > 3$, then there \emph{does not} exist in general a constant $c>0$ such that for all subsets $W \subset V$ we have that $\phi_{(G,M)}(W) = c \, \phi_{(G, A_2(M))}(W)$, \emph{nor} does there exist a constant $c'$ such that $\text{RatioCut}_{(G,M)}(W) = c' \, \text{RatioCut}_{(G, A_2(M))}(W)$. The reason is that, for a given motif instance, the factor of proportionality depends on how exactly the motif instance is cut. 

For example, if the motif has four anchor nodes, then we see in Fig~\ref{fig:four_anchor_node_cut} that a cut in $G$ that separates one anchor node from the rest in a given motif instance adds $+3$ to the corresponding cut in $\fG_M$, but a cut through the middle that separates two anchor nodes from the rest adds $+4$ to the corresponding cut in $\fG_M$. In contrast, both cuts contribute $+1$ to $\text{cut}_{(G,M)}(W)$ in the original graph $G$.

\begin{figure}[!h]
    \centering
    \includegraphics{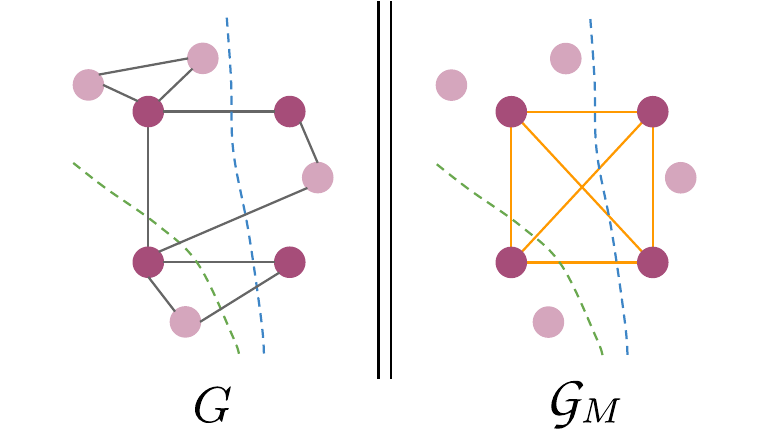}
    \caption{Left --- a four-anchor node motif instance $\iota$ in the input graph $G$. For both the green and the blue cut, $\iota$ adds $+1$ to the motif cut in the original graph $G$, because bots cuts have anchor nodes of $\iota$ to the left and right of the cut. Right --- corresponding motif graph $\fG_M$, which has a four-clique connecting the four anchor nodes coming from $\iota$. The green line cuts three edges, so $\iota$ adds $+3$ to the green cut in $\fG_M$, whereas the blue line cuts through four lines, so here $\iota$ adds $+4$ to the cut.}
    \label{fig:four_anchor_node_cut}
\end{figure}

The above discrepancy is the reason why, in Theorem~9 of their supplementary material, Benson et al.~\cite{benson16} subtract the sum of all motif instances for which the anchor node set is cut exactly in half by the cut in the graph. Note that, for a motif with three anchor nodes, any cut separates one anchor node from two other ones, and therefore the issue that arises with motifs with four anchor nodes is not present there.

\

\noindent To conclude, there are two ways of viewing a motif: one is that the motif expresses a single multiple-vertex relationship between all of its anchor nodes; the other that the motif represents the combination of all pairwise relationships between its anchor nodes. Because the motif graph can only represent pairwise relationships, clustering using the motif graph only works if we take the latter viewpoint.

More explicitly, for motifs with two anchor nodes, and also for arbitrary weighted combinations of motifs with two anchor nodes, we can do motif clustering by means of the motif graph. However, for any motif $M$ with more than two anchor nodes, clustering using the motif graph $\fG_M$ should \emph{only} be used to find a $k$-partition $\{W_1, \ldots, W_k\}$ of $V$ that approximately minimizes $\phi_{(G, A_2(M))}(W_1, \ldots, W_k)$ or $\text{RatioCut}_{(G, A_2(M))}(W_1, \ldots, W_k)$. Such a clustering would take the view that a motif represents a collection of pairwise relationships between its anchor nodes. 

Coincidentally, for three-anchor-node motifs, clustering using the $\omega_K$-weighted sum of motifs in $A_2(M)$ is equivalent to clustering using $M$ directly, in which case the motif graph $\fG_M$ \emph{can} also be used to find a $k$-partition that approximately minimizes $\phi_{(G, M)} (W_1, \ldots, W_k)$ or $\text{RatioCut}_{(G,M)}(W_1, \ldots, W_k)$.

If, instead, we take the viewpoint above that a motif represents a multiple-vertex relationship, then we are interested in finding a $k$-partition $\{W_1, \ldots, W_k\}$ of $V$ that approximately minimizes $\text{RatioCut}_{(G,M)}(W_1, \ldots, W_k)$ or $\phi_{(G,M)}(W_1, \ldots, W_k)$. For a motif $M$ with more than three anchor nodes we should \emph{not} use the motif graph $\fG_M$ to do this (since $\fG_M$ only expresses pairwise relationships between vertices). Rather, if we truly want to study higher-order relationships, we should construct a hypergraph that expresses the multiple-vertex relationships and cluster the vertices of this hypergraph. For this task there exist classical algorithms: see for example~\cite{chan2018spectral} for spectral properties of the hypergraph Laplacian, and~\cite{takai2020hypergraph} and references therein for classical algorithms for clustering hypergraphs.

\bibliographystyle{alphaurl}
\bibliography{main.bib}

\end{document}